\newcolumntype{C}[1]{>{\centering\let\newline\\\arraybackslash\hspace{0pt}}m{#1}}
\newcommand{\ccaption}[2]{\caption[#1]{\textit{#1.} #2}}
\DeclareMathOperator*{\Moplus}{\text{\raisebox{0.25ex}{\scalebox{0.7}{$\bigoplus$}}}}
\newcommand{\spare}[1]{\left[ {#1} \right]}
\newtheorem*{result1}{Result 1}
\newtheorem*{result2}{Result 2}
\newtheorem*{result3a}{Result 3a}
\newtheorem*{result3b}{Result 3b}
\newtheorem*{result4}{Result 4}
\newtheorem{proposition}{Proposition}
\newtheorem{lemma}{Lemma}
\newtheorem{corollary}{Corollary}
\definecolor{gray}{gray}{0.5}
\definecolor{G}{rgb}{0.6,0,0}
\definecolor{J}{rgb}{0.6,0,0.6}
\definecolor{O}{rgb}{0,0,0.6}
\definecolor{dgreen}{rgb}{0,0.6,0}
\definecolor{dred}{rgb}{0.6,0,0}
\definecolor{lred}{rgb}{1,.8,.8}
\newcommand{\ii}{\mathrm{i}}
\newcommand{\ie}{{\it i.e.},\ }
\newcommand{\eg}{{\it e.g.},\ }
\newcommand{\id}{\mathbb{1}} 
\newcommand{\tr}{\operatorname{tr}}
\newcommand{\Tr}{\operatorname{Tr}}
\newcommand{\sh}{\operatorname{sh}}
\newcommand{\ch}{\operatorname{ch}}
\pgfplotsset{compat=1.18}
\begin{document}

\title{Representation theory of Gaussian unitary transformations\texorpdfstring{\\}{} for bosonic and fermionic systems}

\author{Tommaso Guaita}
\email{tommaso.guaita@fu-berlin.de}
\affiliation{Dahlem Center for Complex Quantum Systems, Arnimallee 14, 14195 Berlin, Germany}

\author{Lucas Hackl}
\email{lucas.hackl@unimelb.edu.au}
\affiliation{School of Mathematics and Statistics, The University of Melbourne, Parkville, VIC 3010, Australia}
\affiliation{School of Physics, The University of Melbourne, Parkville, VIC 3010, Australia}

\author{Thomas Quella}
\email{thomas.quella@unimelb.edu.au}
\affiliation{School of Mathematics and Statistics, The University of Melbourne, Parkville, VIC 3010, Australia}

\begin{abstract}
Gaussian unitary transformations are generated by quadratic Hamiltonians, \ie Hamiltonians containing quadratic terms in creations and annihilation operators, and are heavily used in many areas of quantum physics, ranging from quantum optics and condensed matter theory to quantum information and quantum field theory in curved spacetime. They are known to form a representation of the metaplectic and spin group for bosons and fermions, respectively. These groups are the double covers of the symplectic and special orthogonal group, respectively, and our goal is to analyze the behavior of the sign ambiguity that one needs to deal with when moving between these groups and their double cover. We relate this sign ambiguity to expectation values of the form $\braket{0|\exp{(-\ii \hat{H})}|0}$, where $\ket{0}$ is a Gaussian state and $\hat{H}$ an arbitrary quadratic Hamiltonian. We provide closed formulas for $\braket{0|\exp{(-\ii \hat{H})}|0}$ and show how we can efficiently describe group multiplications in the double cover without the need of going to a faithful representation on an exponentially large or even infinite-dimensional space. Our construction relies on an explicit parametrization of these two groups (metaplectic, spin) in terms of symplectic and orthogonal group elements together with a twisted $\mathrm{U}(1)$ group.
\end{abstract}

\maketitle

\tableofcontents

\clearpage

\section{Introduction}
Gaussian quantum states are relevant in various areas of theoretical physics, ranging from quantum optics~\cite{gerry2023introductory} and condensed matter~\cite{bogoliubov1947theory,bogoljubov1958new} to quantum information~\cite{wang2007quantum,weedbrook2012gaussian,adesso2014continuous,holevo2019quantum} and quantum field theory~\cite{birrell1984quantum,parker2009quantum,derezinski2013mathematics}. They exist for both bosonic and fermionic systems and are characterized by a number of properties, which make them particularly suitable for analytical computations. This includes the \emph{infamous} Wick's theorem, which allows one to reduce higher order correlation functions to the knowledge of the two-point correlation function, typically known as the covariance matrix. Another defining property is that their characteristic function $\chi$ on the classical phase space is of Gaussian form, \ie $\chi(w)\propto e^{-w_a X^{ab}w_b}$, where $X^{ab}$ is a positive-definite bilinear form on the dual phase space.

A closely related concept is that of Gaussian unitaries, which are those unitary transformations that map pure Gaussian states onto pure Gaussian states. While they naturally describe the time evolution of non-interacting systems, they are also used in numerical tools for the simulation of interacting systems, such as Hartree-Fock~\cite{echenique2007mathematical,kraus2010generalized} and the Kohn-Sham approach to density functional theory~\cite{kohn1965self}. They have also been successfully employed in analyzing impurity problems~\cite{bravyi_complexity_2017,schuch2019matrix}. Gaussian unitaries further play an important role in constructions such as generalized Gaussian states~\cite{shi2018variational} and the associated generalized Wick's theorem~\cite{hackl_geometry_2020,guaita_generalization_2021}.

Gaussian unitaries are generated by exponentials of quadratic Hamiltonians, \ie where each term of the Hamiltonian is a quadratic expression of creation/annihilation operators. For bosons, such Hamiltonians form a representation of the symplectic Lie algebra $\mathfrak{sp}(2N,\mathbb{R})$ on the associated Fock space. For fermions this similarly becomes a representation of the orthogonal Lie algebra $\mathfrak{so}(2N,\mathbb{R})$. It is known~\cite{woit2017quantum}, however, that the Gaussian unitaries constructed by exponentiating these Hamiltonians do not form a representation of the symplectic Lie group $\mathrm{Sp}(2N,\mathbb{R})$ or orthogonal Lie group $\mathrm{SO}(2N,\mathbb{R})$. Rather, they from a representation of their respective double covers, namely the metaplectic group $\mathrm{Mp}(2N,\mathbb{R})$ and the spin group $\mathrm{Spin}(2N,\mathbb{R})$, as discussed in~\cite{folland1989harmonic,Simon1993,derezinski2013mathematics,woit2017quantum}.

Nonetheless, it is common practice in many applications to deal with Gaussian unitaries by parametrizing them with matrices in the groups $\mathrm{Sp}(2N,\mathbb{R})$ or $\mathrm{SO}(2N,\mathbb{R})$~\cite{bravyi_complexity_2017,holevo_evaluating_2001,wang2007quantum,shi2018variational}. While this is a powerful approach for handling these operators efficiently, it leads to a parametrization that cannot capture the full structure of the system and is inevitably ambiguous. This ambiguity typically manifests in the form of signs and phases that are ill-defined if one tries to compute them only in terms of symplectic or orthogonal matrices. While in many concrete applications these phases are not physical quantities and their ambiguity does not cause any practical consequence, it is not hard to find situations where they are of absolute practical relevance. In a recent example, Dias and König~\cite{Dias:2023arXiv230712912D, Dias:2024arXiv240319059D} have recognised that this incomplete parametrization leads to undefined relative phases when using Gaussian superpositions to simulate quantum circuits and they dedicate significant effort in their work to correctly compute and account for these phases. Another significant example arises whenever one tries to compute quantities of the form $\braket{0|e^{-\ii\hat{H}}|0}$ for some quadratic Hamiltonian $\hat{H}$. One can attempt to compute this quantity in terms of a certain matrix $M\in\mathrm{Sp}(2N,\mathbb{R})$ / $\mathrm{SO}(2N,\mathbb{R})$ that can be naturally defined from $\hat{H}$. However this approach will lead to results that are correct only up to a sign. The information about this sign indeed is simply not contained in $M$.

The only natural way to resolve these issues is to move to a parametrisation of Gaussian unitaries that is based on the metaplectic and spin groups. The main objective of our work is to present a complete and concrete construction of how this can be done in practice. Our results encompass all previous approaches to compute the phase-dependent quantities above and give a systematic way to understand and reproduce the previous results and to derive new ones in an ambiguity-free way.

To understand the problem in more detail, consider the following linear operators
\begin{align}
    \hat{q}_j=\frac{1}{\sqrt{2}}(\hat{a}^\dag_j+\hat{a}_j)\,, \quad\quad     
    \hat{p}_j=\frac{\ii}{\sqrt{2}}(\hat{a}^\dag_j-\hat{a}_j) \,,
\end{align}
for $j=1,\dots,N$, where $\hat{a}_j$ are the annihilation operators of an $N$-mode system. These operators are usually referred to as quadratures for bosons and Majorana operators for fermions and we will collect them in the vector
\begin{align}
    \hat{\xi}\equiv(\hat{q}_1,\dots,\hat{q}_N,\hat{p}_1,\dots,\hat{p}_N)\,.
\end{align}
We will consider arbitrary quadratic Hamiltonians
\begin{equation}
    \hat{H}=(\ii) \sum^{2N}_{a,b=1}h_{ab}\hat{\xi}^a\hat{\xi}^b
\end{equation}
where $h_{ab}$ is real and chosen to make $\hat{H}$ Hermitian. The factor $\ii$ is present in the fermionic case but not in the bosonic one. These Hamiltonians are closed under the commutator, namely if $\hat{H}_1$ and $\hat{H}_2$ are quadratic so is $[\hat{H}_1,\hat{H}_2]$. We can uniquely\footnote{For this uniqueness it is important that we fixed $h_{ab}$ to be real. Indeed, if we allow $h_{ab}$ to have an imaginary part, then the set of quadratic Hamiltonians would contain arbitrarily many Hamiltonians that differ only by terms proportional to the identity. All these Hamiltonians would obviously give rise to the same $K$. Fixing $h_{ab}\in\mathbb{R}$ completely resolves this ambiguity.} label each such Hamiltonian by a $2N$-by-$2N$ matrix $K$ defined by its action on the linear operators $[\hat{H},\hat{\xi}]=-K\hat{\xi}$. The matrix $K$ will be an element of the symplectic algebra $\mathfrak{sp}(2N,\mathbb{R})$ for bosons and of the orthogonal algebra $\mathfrak{so}(2N,\mathbb{R})$ for fermions. This mapping is an algebra homomorphism, that is if $\hat{H}_1$ and $\hat{H}_2$ are labelled by $K_1$ and $K_2$ then $[\hat{H}_1,\hat{H}_2]$ is labelled by $[K_1,K_2]$. In this sense the quadratic Hamiltonians give a representation of the matrix algebras  $\mathfrak{sp}(2N,\mathbb{R})$ and $\mathfrak{so}(2N,\mathbb{R})$ on the Fock space.

Consider now the group of unitary transformations that is generated by taking arbitrary products of exponentials $\mathcal{U}=e^{-\ii\hat{H}}$ of quadratic Hamiltonians $\hat{H}$. We refer to them as Gaussian unitaries. Similarly to before, these operators can be associated to a $2N\times 2N$ matrix $M$ by their action on the linear operators $\mathcal{U}^\dag\hat{\xi}\mathcal{U}=M\hat{\xi}$. As we might expect, the matrix $M$ will be an element of the symplectic group $\mathrm{Sp}(2N,\mathbb{R})$ for bosons or of the orthogonal group $\mathrm{SO}(2N,\mathbb{R})$ for fermions, and if $\mathcal{U}_1$ and $\mathcal{U}_2$ are labelled by $M_1$ and $M_2$ then $\mathcal{U}_1\mathcal{U}_2$ is labelled by $M_1M_2$. However this mapping is not injective, meaning that it cannot be a group representation. Indeed, it is a known fact (albeit not completely trivial) that for each $M$ there exist exactly two different Gaussian unitaries (related by an opposite sign) that both satisfy  $\mathcal{U}^\dag\hat{\xi}\mathcal{U}=M\hat{\xi}$.

To see a very easy example of this fact, consider the simplest system consisting of a single degree of freedom. An associated quadratic Hamiltonians is given by $\hat{H}= \frac{t}{4}(\hat{q}^2+\hat{p}^2)=t(\hat{a}^\dag\hat{a}+\frac{1}{2})$ for bosons or $\hat{H}= \frac{\ii t}{4}(\hat{q}\hat{p} - \hat{p}\hat{q})=t(\hat{a}^\dag\hat{a}-\frac{1}{2})$ for fermions. Now consider the Gaussian unitaries that can be generated by exponentiating these Hamiltonians. It is clear that both the unitaries $\mathcal{U}=\hat{\id}$ and $\mathcal{U}=-\hat{\id}$ can be obtained by choosing $t=4\pi$ and $t=2\pi$ respectively. Both these unitaries however have exactly the same trivial action on the linear operators given by $M=\id$.

As hinted above, all this can be seen as a consequence of the fact that Gaussian unitaries do not form a proper representation of $\mathrm{Sp}(2N,\mathbb{R})$ or $\mathrm{SO}(2N,\mathbb{R})$ but rather of their double cover groups, $\mathrm{Mp}(2N,\mathbb{R})$ and $\mathrm{Spin}(2N,\mathbb{R})$. Thus labelling Gaussian unitaries by matrices in $\mathrm{Sp}(2N,\mathbb{R})$ or $\mathrm{SO}(2N,\mathbb{R})$ leads to an intrinsically ambiguous parametrization. To solve this issue one needs to parametrize each Gaussian unitary by an element of the double cover groups. The aim of this paper is to explicitly construct this parametrization. 

We will introduce a parametrization of the elements of the double cover groups in terms of a tuple $(M,\psi)$ and show how each such element unambiguously identifies a single unitary $\mathcal{U}(M,\psi)$. We will further derive the group multiplication rules for the double cover group, which in turn allows us to identify the unitary arising from any product  $\mathcal{U}(M_1,\psi_1)\mathcal{U}(M_2,\psi_2)$. Finally, we will explain how to compute the unique unitary $\mathcal{U}=e^{-\ii\hat{H}}$ in the double cover that results from exponentiating a quadratic Hamiltonian $\hat{H}$, given access to a description of $\hat{H}$, \ie we will show how to compute $(M,\psi)$ such that $\mathcal{U}(M,\psi)=e^{-\ii\hat{H}}$. In the process we will also compute the related quantities $\braket{0|e^{-\ii\hat{H}}|0}$ and $\braket{0|\hat{\xi}^{a_1}\cdots\hat{\xi}^{a_d} e^{-\ii\hat{H}}|0}$, including their correct sign. Our findings allow a complete treatment of the full fermionic or bosonic Gaussian unitary group without any remaining ambiguity.

This manuscript is structured as follows: Section~\ref{sec:Setup} introduces the mathematical foundations of Gaussian states and unitaries, sets up notation and outlines the basic ideas of our construction. In section~\ref{sec:DoubleCover}, we then develop the mathematical machinery required to parametrize the two double cover groups (metaplectic and spin groups) and prove relevant lemmata and propositions. Our main results are then presented in section~\ref{sec:unitary-rep-results}, where we show explicitly how our parametrization of the double cover groups relates to Gaussian unitary transformations and where we derive key analytical formulas. We illustrate these constructions explicitly in section~\ref{sec:CaseStudies} for the simplest non-trivial examples for both bosons and fermions. section~\ref{sec:Applications} then discusses how our construction and findings are relevant in the context of variational methods and numerical simulations of complex quantum systems. Finally, we conclude in section~\ref{sec:Discussion} by discussing our findings and how it relates to other works.\\

\textbf{Distinction from previous work.} Gaussian unitaries have been previously studied in various contexts and play an important role when using Gaussian quantum states. In the following, we briefly review how our findings differ from previous work.\vspace{-1mm}
\begin{itemize}
\setlength\itemsep{-1.5mm}
    \item The fact that Gaussian unitaries form a representation of the double cover groups $\mathrm{Mp}(2N,\mathbb{R})$ and $\mathrm{Spin}(2N,\mathbb{R})$ is a well-known fact~\cite{folland1989harmonic,Simon1993,derezinski2013mathematics,woit2017quantum}. However, we are not aware of another explicit construction of the double cover that relates it directly to the complex phase of the unitary's expectation value with respect to a Gaussian state. 
    \item Crucial progress was made by Rawnsley~\cite{rawnsley2012universal}, who introduced an explicit parametrization for the universal cover of the symplectic group. In our work, we show how Rawnsley's circle function relates to the expectation value of Gaussian unitaries with respect to a Gaussian state. Moreover, we extend his construction to the fermions.
    \item Some expressions that we derive were already studied in some form in previous works. For example, the double cover group multiplication rule can be seen to be related to the ``three state overlap'' $\braket{J_1|J_2}\braket{J_2|J_3}\braket{J_3|J_1}$, where $\ket{J_i}$ are Gaussian states. Expressions for this were derived for bosons~\cite{PhysRevA.61.022306} and for fermions~\cite{bravyi_complexity_2017} using phase space methods. Interestingly, our work gives alternative expressions for them, derived with completely independent methods and without reference to the phase space formalism.
    \item The works of Dias and König~\cite{Dias:2023arXiv230712912D,Dias:2024arXiv240319059D} address a very closely related topic to ours, that is the parametrization of Gaussian states in a way that consistently deals with their global phase. Gaussian unitaries are more general objects than Gaussian states and indeed their results can be re-derived as a consequence of ours (see section~\ref{sec:Applications}). We believe that our work complements these previous results by clarifying the group theoretic structure of the relevant objects. 
\end{itemize}

\newpage
\section{\label{sec:Setup}Setup and idea}
In this section, we review some notation and define the problem. Most importantly, we explain how to describe $N$-mode Gaussian states and Gaussian untaries in terms of linear complex structures and elements of the symplectic or orthogonal Lie algebras of dimension $2N$. We further comment on the nature of the corresponding orthogonal and symplectic Lie groups and their double covers. We finally introduce a picture of these groups in terms of a principal fibre bundle, which will help visualise their structure. The mathematical formalism of this section closely follows~\cite{hackl2018aspects,hackl2020geometry,hackl2021bosonic}.

\subsection{Hilbert space and operators}
We consider a bosonic or fermionic system with $N$ degrees of freedom, realized on the associated Fock space. Any Hilbert space operator can be constructed from $2N$ independent linear observables $\hat{\xi}^a$ that span a real vector space $V$, the phase space. Consequently, the components $\hat{\xi}^a$ satisfy the canonical commutation or anticommutation relations
\begin{align}
  \label{eq:CCR}
    [\hat{\xi}^a,\hat{\xi}^b]&=\ii\Omega^{ab}\,,&&\textbf{(bosons)}\\
    \{\hat{\xi}^a,\hat{\xi}^b\}&=G^{ab}\,,&&\textbf{(fermions)}
\end{align}
where $\Omega^{ab}$ is a symplectic form and $G^{ab}$ a positive definite metric. We can always fix a standard basis, known as either bosonic quadrature operators (position and conjugate momentum) or fermionic Majorana modes\footnote{Instead of the Hermitian operators $\hat{\xi}^a$, we can also take complex linear combinations to construct the raising and lowering operators $\hat{a}_i=\frac{1}{\sqrt{2}}(\hat{q}_i+\ii\hat{p}_i)$ and $\hat{a}_i^\dagger=\frac{1}{\sqrt{2}}(\hat{q}_i-\ii\hat{p}_i)$.},
\begin{align}
    \hat{\xi}^a\equiv(\hat{q}_1,\dots,\hat{q}_N,\hat{p}_1,\dots,\hat{p}_N)\,,\label{eq:xi-basis}
\end{align}
such that above forms are represented by the matrices
\begin{align}
    \Omega^{ab}\equiv\left(\begin{array}{cc}
    0 & \id \\
    -\id & 0
    \end{array}\right)\quad\text{and}\quad G^{ab}\equiv\left(\begin{array}{cc}
    \id & 0 \\
    0 & \id
    \end{array}\right)\label{eq:Omega-G-standard-form}
\end{align}
for bosons and fermions, respectively. We will also introduce their inverses
\begin{align}\label{eq:gomega}
    \omega_{ab}=(\Omega^{-1})_{ab}\equiv\left(\begin{array}{cc}
    0 & -\id \\
    \id & 0
    \end{array}\right)\,,\, g_{ab}=(G^{-1})_{ab}\equiv\left(\begin{array}{cc}
    \id & 0 \\
    0 & \id
    \end{array}\right)\,.
\end{align}
Sometimes, we will suppress the index notation and just write matrix equations, such as $\Omega\omega=\id$, which will be short hand for $\Omega^{ac}\omega_{cb}=\delta^a{}_b$ and so on. In particular, repeated indices represent a tensor contraction / Einstein's summation convention. Note that we can only contract a lower with an upper index or vice versa, which also limits the possible matrix products, \eg $G\Omega$ would be an invalid expression as $G^{ac}\Omega^{cb}$ would not contract an upper and a lower index. Mathematically, this is of course due to the fact that the product of two matrices representing bilinear forms is not basis-independent.

Having introduced the symplectic form $\Omega$ and the metric $G$, we can define the respective symplectic and orthogonal Lie groups and Lie algebras
\begin{align}
\begin{split}\label{eq:group-G}
    \mathrm{Sp}(2N,\mathbb{R})&=\left\{M\in\mathrm{SL}(2N,\mathbb{R})\,\big|\,M\Omega M^{\intercal}=\Omega\right\},\\
    \mathrm{SO}(2N,\mathbb{R})&=\left\{M\in\mathrm{SL}(2N,\mathbb{R})\,\big|\,MG M^{\intercal}=G\right\},
\end{split}\\
\begin{split}\label{eq:algebra-g}
    \mathfrak{sp}(2N,\mathbb{R})&=\left\{K\in\mathfrak{gl}(2N,\mathbb{R})\,\big|\,K\Omega=-\Omega K^{\intercal}\right\},\\
    \mathfrak{so}(2N,\mathbb{R})&=\left\{K\in\mathfrak{gl}(2N,\mathbb{R})\,\big|\,KG=-GK^{\intercal}\right\},
\end{split}
\end{align}
where we represent $\mathrm{SL}(2N,\mathbb{R})$ as invertible, unit-determinant linear maps $M^a{}_b: V\to V$ on the phase space $V$ and $\mathfrak{gl}(2N,\mathbb{R})$ as linear maps $K^a{}_b: V\to V$. We note that we restricted our attention to special linear transformations (those with $\det(M)=1$) when defining the invariance groups in \eqref{eq:group-G} since this ensures that the groups in question are connected. We will routinely use $\mathcal{G}$ and $\mathfrak{g}$ to refer to
\begin{align}
  \label{eq:DefGroup}
    \mathcal{G}&=\begin{cases}
    \mathrm{Sp}(2N,\mathbb{R}) & \textbf{(bosons)}\\
    \mathrm{SO}(2N,\mathbb{R}) & \textbf{(fermions)}
    \end{cases}\,,\\
    \mathfrak{g}&=\begin{cases}
    \mathfrak{sp}(2N,\mathbb{R}) & \textbf{(bosons)}\\
    \mathfrak{so}(2N,\mathbb{R}) & \textbf{(fermions)}
    \end{cases}\,.
\end{align}
In view of the compatibility of $\mathcal{G}$ and $\mathfrak{g}$ with the commutation relations \eqref{eq:CCR}, it is a natural question whether the actions of $\mathcal{G}$ and $\mathfrak{g}$ on $V$
lift to the full Hilbert space of the system, the Fock space. And while this is the case for the Lie algebra $\mathfrak{g}$, section~\ref{sec:GaussianUnitaries} will show that we will need to consider a double cover $\widetilde{\mathcal{G}}$ in the case of the group.

\subsection{Complex structure and Gaussian states}\label{sec:complex-Gaussian}
Given a bosonic or fermionic system with symplectic form $\Omega^{ab}$ or positive definite metric $G^{ab}$, respectively, we define a compatible complex structure $J$ as a linear map $J: V\to V$, such that
\begin{align}
    J^2=-\id\,,
\end{align}
and such that it satisfies the compatibility conditions
\begin{align}
\begin{aligned}\label{eq:Compatibility}
    J^a{}_c\Omega^{cd}(J^\intercal)_d{}^b&=\Omega^{ab}\,,\,-J^a{}_c\Omega^{cb}>0&&\textbf{(bosons)}\\[1mm]
    J^a{}_cG^{cd}(J^\intercal)_d{}^b&=G^{ab}&&\textbf{(fermions)}
\end{aligned}
\end{align}
with $(J^{\intercal})_a{}^b=J^b{}_a$. We note that $-J^a{}_c\Omega^{cb}$ is a symmetric bilinear form and the condition $-J^a{}_c\Omega^{cb}>0$ requires it to be positive definite.

Given a compatible complex structure $J$, we define the Gaussian state $\ket{J}$ as a normalized solution to the equations\footnote{The choice of $J$ can be interpreted as choosing the space of annihilation operators that annihilate $\ket{J}$, so Gaussian states are the same as the vacuum states in Fock quantization with respect to a given set of annihilation operators.}
\begin{align}
    \tfrac{1}{2}(\delta^a{}_b+\ii J^a{}_b)\hat{\xi}^b\ket{J}=0\,,
    \label{eq:annihilation-op}
\end{align}
where the $\frac{1}{2}$ is chosen, such that $\tfrac{1}{2}(\delta^a{}_b+\ii J^a{}_b)$ is a projector. The state $\ket{J}$ can be shown to be unique up to a complex phase~\cite{hackl2021bosonic} and this allows one to think of the matrix $J$ and the state $\ket{J}$ interchangeably. By assumption, we have $\braket{J|J}=1$. One can define $\hat{\xi}^a_{\pm}=\tfrac{1}{2}(\delta^a{}_b\mp\ii J^a{}_b)\hat{\xi}^b$ as covariant versions of creation ($\hat{\xi}^a_+$) and annihilation ($\hat{\xi}^a_-$) operators, such that~\eqref{eq:annihilation-op} simplifies to $\hat{\xi}^a_-\ket{J}=0$. Complex structures have been used to parametrize Gaussian states in the context of quantum fields in curved spacetime~\cite{ashtekar1975quantum} and more recently to study their entanglement and complexity properties~\cite{bianchi2015entanglement,bianchi2018linear,hackl2018circuit,chapman2019complexity}.

We can further derive the relations\footnote{While one can extend the family Gaussian states to also include states $\ket{J,z}$ with non-zero $z^a=\braket{J,z|\hat{\xi}^a|J,z}$, we restrict to $z=0$.}
\begin{align}
    \braket{J|\hat{\xi}^a|J}&=0\,,\label{eq:z-displacement}\\
    \braket{J|\hat{\xi}^a\hat{\xi}^b|J}&=\left\{\begin{array}{ll}
    \tfrac{1}{2}(-J^a{}_c\Omega^{cb}+\ii\Omega^{ab}) &  \textbf{(bosons)}\\[2mm]
    \tfrac{1}{2}(G^{ab}+\ii J^a{}_cG^{cb}) & \textbf{(fermions)}
    \end{array}\right.\,.\nonumber
\end{align}
By making appropriate identifications, the second equation can be unified as
\begin{align}
    \braket{J|\hat{\xi}^a\hat{\xi}^b|J}&=\frac{1}{2}(G^{ab}+\ii\Omega^{ab})\,,
\end{align}
where we introduced the new forms $G^{ab}:=-J^a{}_c\Omega^{cb}$ for bosons and $\Omega^{ab}:=J^a{}_cG^{cb}$ for fermions. We emphasize that in each case only one of the two structures $G$ and $\Omega$ depends on the state $J$, while the respective other structure is fixed by the canonical commutation or anti-commutation relations. For bosons, $G$ is state dependent, while for fermions, $\Omega$ is state dependent. For bosons, we refer to $G$ as the (bosonic) covariance matrix, while for fermions, we refer to $\Omega$ as the (fermionic) covariance matrix.

Whenever we fix a pure Gaussian state $\ket{J}$, the classical phase space is naturally equipped with the triangle of compatible Kähler structures $(G,\Omega,J)$ related by the compatibility condition
\begin{align}
    J=\Omega G^{-1}=-G\Omega^{-1}\,.\label{eq:compatibility-cond}
\end{align}
Each of these structures defines its own invariance group, namely $\mathrm{Sp}(2N,\mathbb{R})$ and $\mathrm{SO}(2N,\mathbb{R})$, as introduced in~\eqref{eq:group-G}, and the new group $\mathrm{GL}(N,\mathbb{C})$ with Lie algebra $\mathfrak{gl}(N,\mathbb{C})$
\begin{align}
    \mathrm{GL}(N,\mathbb{C})&=\left\{M\in\mathrm{GL}(2N,\mathbb{R})\,\big|\,MJ=JM\right\},\label{eq:GLNC}\\
    \mathfrak{gl}(N,\mathbb{C})&=\left\{K\in\mathfrak{gl}(2N,\mathbb{R})\,\big|\,KJ=JK\right\},\label{eq:glNC}
\end{align}
where $\mathrm{GL}(2N,\mathbb{R})$ is represented by invertible linear maps $M^a{}_b: V\to V$. We will discuss in section~\ref{sec:complex-linear-group} how the real $2N$-by-$2N$ matrices commuting with $J$ introduced in~\eqref{eq:GLNC} are isomorphic to complex $N$-by-$N$ matrices. The conditions in \eqref{eq:GLNC} and \eqref{eq:glNC} simply mean that the elements can be regarded as $\mathbb{C}$-linear (not only $\mathbb{R}$-linear) maps if we identify $J$ with the imaginary unit $\ii$. 

\begin{figure}[t]
    \centering
\begin{tikzpicture}[scale=-4]
    \begin{scope}[shift={(-.015*2,2*.00866025)}]
    \draw[fill=red,fill opacity=.5]
    (0,0)--(1,0)--(.5,-.866025)--cycle;
    \end{scope}
    \draw[fill=red,fill opacity=.5]
    (1,0)--(1.5,-.866025)--(.5,-.866025)--cycle;
    \draw[fill=blue,fill opacity=.5] (.9,-.866025)--(1.4,0)--(1.9,-.866025)--cycle; 
    \draw[fill=purple,fill opacity=.3] (1.2,-.34641)--(1.7,-1.21244)--(.7,-1.21244)--cycle;
    \node[scale=1.3,rotate=-60] at ($(1.2,-0.69282)+(30:3.2mm)$) {$\mathrm{Sp}(2N,\mathbb{R})$};
    \node[scale=1.3,rotate=60] at ($(1.2,-0.69282)+(150:3.2mm)$) {$\mathrm{SO}(2N,\mathbb{R})$};
    \node[scale=1.3,fill=white,rounded corners=2pt,inner sep=1pt,fill opacity=.9,rotate=60] at ($(1.2,-0.69282)+(150:5.4mm)$) {$\mathrm{O}(2N,\mathbb{R})$};
    \node[scale=1.3,rotate=60] at ($(1.2,-0.69282)+(150:7.4mm)$) {$\mathrm{O}^-(2N,\mathbb{R})$};
    \node[scale=1.3] at ($(1.2,-0.69282)+(-90:3.2mm)$) {$\mathrm{GL}(N,\mathbb{C})$};
    \node[scale=1.3,white] at (1.2,-0.69282) {$\mathrm{U}(N)$};
\end{tikzpicture}
    \ccaption{Illustration of 2-out-of-3 property}{We show how the three groups $\mathrm{O}(2N,\mathbb{R})$, $\mathrm{Sp}(2N,\mathbb{R})$ and $\mathrm{GL}(N,\mathbb{C})$ intersect to form the unitary group $\mathrm{U}(N)$. In particular, we see that intersecting all three groups is equivalent to intersecting any two out of the three groups. Moreover, we see that only the component $\mathrm{SO}(2N,\mathbb{R})\subset\mathrm{O}(2N,\mathbb{R})$ connected to the identity matters, while $\mathrm{O}^-(2N,\mathbb{R})\subset\mathrm{O}(2N,\mathbb{R})$ is the subset (not subgroup) of group elements not connected to the identity. \textit{This figure is reproduced from~\cite{hackl2021bosonic}}.}
    \label{fig:2-out-of-3}
\end{figure}

The three groups intersect as visualized in figure~\ref{fig:2-out-of-3} and their intersection satisfies the famous 2-out-of-3 property, which states that the intersection of any two groups is equivalent to intersecting all three. This is a consequence of the compatibility condition~\eqref{eq:compatibility-cond}, which ensures that any two of the three structures $(G,\Omega,J)$ define the third one. The intersection consists of those group elements $M\in\mathcal{G}$ that commute with $J$, \ie they form the stabilizer group respectively Lie algebra
\begin{align}
    \mathrm{U}(N)&=\left\{M\in\mathcal{G}\,\big|\,MJ=JM\right\}\,,\label{eq:U-group}\\
    \mathfrak{u}(N)&=\left\{K\in\mathfrak{g}\,\big|\,KJ=JK\right\}\label{eq:u-algebra}
\end{align}
which happens to be isomorphic to the group $\mathrm{U}(N)$ with Lie algebra $\mathfrak{u}(N)$. We further introduce the subspace
\begin{align}
    \mathfrak{u}_\perp(N)=\{K_+\in \mathfrak{g}\,|\, K_+J=-JK_+\}\,,\label{eq:u-perp}
\end{align}
which consists of the Lie algebra elements $K_+$ that anti-commute with $J$. This space is not a Lie subalgebra. One can show~\cite{hackl2021bosonic} that this space is the orthogonal complement of $\mathfrak{u}(N)$ in $\mathfrak{g}$ with respect to the Killing form $\mathcal{K}(K_1,K_2)\propto \Tr(K_1K_2)$. The orthogonal decomposition $\mathfrak{g}=\mathfrak{u}(N)\oplus\mathfrak{u}_\perp(N)$ will play a crucial role as Cartan decomposition on the Lie algebra level, as discussed in section~\ref{sec:cartan}. Using the dimension $\dim\mathfrak{g}=N(2N\pm 1)$ for bosons $(+)$ and fermions $(-)$ and $\dim\mathfrak{u}(N)=N^2$, we find $\dim\mathfrak{u}_\perp(N)=N(N\pm 1)$.

\subsection{Relative complex structure}
Given a $2N$-dimensional phase space $V$ that is either equipped with a symplectic form $\Omega$ for bosons or a metric $G$ for fermions, we can ask \emph{how many} complex structures $J$ exist that satisfy the compatibility conditions~\eqref{eq:Compatibility}. Having chosen a basis, where $\Omega$ or $G$, respectively, take the standard forms~\eqref{eq:Omega-G-standard-form}, it is easy to check that a linear map $J$ with the matrix representation
\begin{align}\label{eq:J-standard-form}
    J\equiv\begin{pmatrix}
    0 & \id\\
    -\id & 0
    \end{pmatrix}
\end{align}
satisfies all the conditions of a complex structure, as introduced before. Vice versa, it is also not difficult to show that for \emph{every} complex structure $J$ satisfying the compatibility conditions~\eqref{eq:Compatibility}, we can find a basis, such that $J$ is represented by~\eqref{eq:J-standard-form}, while also retaining the standard forms~\eqref{eq:Omega-G-standard-form} for $\Omega$ or $G$, respectively~\cite{hackl2021bosonic}. This follows from the fact that $J$ is a symplectic or orthogonal group element with eigenvalues $\pm \ii$ of multiplicity $N$ each, for which it is known that it can be block diagonalized in a symplectic or orthogonal basis, respectively.\footnote{While all orthogonal group elements are diagonalizable over the complex numbers, this is not true for a general symplectic group element. However, for $J$ this property follows from $J^2=-\id$.}

Based on these considerations, it follows that given a complex structure $J$ satisfying~\eqref{eq:Compatibility} we can reach any other such complex structure by applying a group transformation $M\in\mathcal{G}$, \ie all other complex structures are
\begin{align}
    J_M=MJM^{-1}
\end{align}
for an appropriately chosen $M$. Note, however, that it is easy to see that two different $M$ and $\tilde{M}$ may give rise to the same complex structure $J_M=J_{\tilde{M}}$. This will exactly be the case, when $M=\tilde{M}u$, where $u$ is an element of the stabilizer group $\mathrm{U}(N)$ of $J$, defined in~\eqref{eq:U-group}, which we can see from computing
\begin{align}
    \hspace{-2mm}J_{M}=MJM^{-1}=\tilde{M}uJu^{-1}\tilde{M}^{-1}=\tilde{M}J\tilde{M}^{-1}=J_{\tilde{M}}.
\end{align}
We can thus recognize the manifold of all complex structures as the quotient
\begin{align}
    \mathcal{M}=\mathcal{G}/\!\sim\,=\mathcal{G}/\mathrm{U}(N)\label{eq:def-M-manifold}
\end{align}
with respect to the equivalence relation
\begin{align}
    M\sim \tilde{M}\quad\Leftrightarrow\quad M\tilde{M}^{-1}\in \mathrm{U}(N)\,.\label{eq:equivalence-relation}
\end{align}
As every linear complex structure corresponds to pure Gaussian state $\ket{J}$, we recognize $\mathcal{M}$ also as the manifold of Gaussian states.\footnote{Recall that we restricted ourselves to Gaussian states with $\braket{J|\hat{\xi}^a|J}=0$, as explained in the context of~\eqref{eq:z-displacement}.} Remembering that $\mathcal{G}=\mathrm{Sp}(2N,\mathbb{R})$ for bosons and $\mathcal{G}=\mathrm{SO}(2N,\mathbb{R})$ for fermions, we can recognize the quotient $\mathcal{M}$ as a symmetric space of type CI for bosons and of type DIII for fermions~\cite{windt2021local,hackl2021bosonic}. Using the same dimension counting argument as for $\mathfrak{u}_\perp(N)$ from~\eqref{eq:u-perp}, we find $\dim\mathcal{M}=N(N\pm1)$.

Given two complex structures $J$ and $J_M$, we can define the relative complex structure
\begin{align}
  \label{eq:RelativeComplexStructure}
    \Delta_M=-J_MJ=-MJM^{-1}J\,.
\end{align}
The spectrum of $\Delta_M$ contains all the basis-invariant information about the relations of the two Gaussian states $J$ and $J_M$. In particular, the expectation value $\braket{J|e^{-\ii \hat{H}}|J}$ will be up to an overall sign fully determined by $\Delta_M$ whenever $e^{-\ii \hat{H}}$ maps the Gaussian state $\ket{J}$ to the Gaussian state $e^{-\ii \hat{H}}\ket{J}=\ket{J_M}$. Such unitaries that map Gaussian states to Gaussian states will be discussed in detail in the subsections~\ref{sec:gaussian-generators} and~\ref{sec:GaussianUnitaries} below.

\subsection{Cartan decomposition}\label{sec:cartan}
The Cartan decomposition of a Lie group $\mathcal{G}$ and associated Lie algebra $\mathfrak{g}$ is a well-known object in Lie theory. For a semi-simple Lie algebra, a Cartan decomposition is commonly defined through an involution $\theta: \mathfrak{g}\to\mathfrak{g}$ with $\theta^2=\id$.\footnote{One usually requires that the bilinear form $B_\theta(K_1,K_2)=-\mathcal{K}(K_1,\theta K_2)$ is positive-definite, where $\mathcal{K}$ is the Killing form. However, we will apply this decomposition also to the compact Lie group $\mathrm{SO}(2N,\mathbb{R})$ for fermions, in which case this condition will not hold.} The eigenspaces $\mathfrak{u}$ and $\mathfrak{p}$ of $\theta$ with eigenvalues $+1$ and $-1$, respectively, then satisfy the Lie bracket relations
\begin{align}
[\mathfrak{u}, \mathfrak{u}] \subseteq \mathfrak{u}\,,\quad [\mathfrak{u}, \mathfrak{p}] \subseteq \mathfrak{p},\,\quad[\mathfrak{p}, \mathfrak{p}] \subseteq \mathfrak{u}\,.
\end{align}
In particular, $\mathfrak{u}$ is a Lie subalgebra of $\mathfrak{g}$ that generates a Lie subgroup $\mathcal{U}\subset\mathcal{G}$, while $\mathfrak{p}$ is not. We further have $\mathfrak{g}=\mathfrak{u}\oplus\mathfrak{p}$, which are orthogonal complements with respect to the Killing form. On the group level, we can decompose a general group element $M=Tu$, where $T\in\exp(\mathfrak{p})$ and $u\in\mathcal{U}$.

In our case, for a fixed reference complex structure $J$, we define the involution\footnote{Note that the minus sign is crucial for $\theta$ to be a Lie algebra automorphism with $[\theta(K_1),\theta(K_2)]=\theta([K_1,K_2])$.} $\theta(K)=-JKJ$, where we identify $\mathfrak{u}(N)$ from~\eqref{eq:unitary-K} as the $+1$ eigenspace $\mathfrak{u}$ and $\mathfrak{u}_\perp(N)$ from~\eqref{eq:u-perp} as the $-1$ eigenspace $\mathfrak{p}$. Given an arbitrary Lie algebra element $K\in\mathfrak{g}$, we can decompose it as
\begin{align}
    K_\pm=\frac{K\mp\theta(K)}{2}=\frac{K\pm JKJ}{2}\,,\label{eq:Cartan-on_frakg}
\end{align}
such that $K_-\in \mathfrak{u}(N)$ and $K_+\in\mathfrak{u}_\perp(N)$.

Given a general group element $M$, we would like to find a $T=e^{K_+}$ with $K_+\in\mathfrak{u}_\perp(N)$ and $u\in\mathrm{U}(N)$, such that $M=Tu$. We can relate this to the relative complex structure $\Delta_M=-MJM^{-1}J$ from~\eqref{eq:RelativeComplexStructure} via
\begin{align}
    T=\sqrt{\Delta_M}\quad\text{and}\quad K_+=\frac{1}{2}\log{\Delta_M}\,,
\end{align}
where we plugged $M=Tu$ into~\eqref{eq:RelativeComplexStructure} and used $uJu^{-1}=J$ as well as $JT^{-1}=TJ$ due to $\{J,K_+\}=0$ to find $T^2=\Delta_M$.

The question is thus for which $\Delta_M$ there exists a canonical square root and logarithm. In~\cite{hackl2021bosonic}, it was shown that $\Delta_M$ is always diagonalizable and its spectrum was carefully studied.
\medskip

\noindent
\textbf{Bosons.} For bosons, the spectrum of $\Delta_M$ consists of pairs of positive real numbers $(e^{\lambda}, e^{-\lambda})$, so that $T=\sqrt{\Delta_M}$ and $K_+=\frac{1}{2}\log{\Delta_M}$ are uniquely defined by taking only positive square roots $(e^{\lambda/2},e^{-\lambda/2})$ and the real logarithm $(\lambda,-\lambda)$. We thus call $M=Tu$ the canonical Cartan decomposition of a bosonic group element with respect to $J$.\footnote{While we could also define an alternative $T$, where we take instead some or all of the square roots to be negative, such a $T$ could not be written as $T=e^{K_+}$, because $K_+$ has purely real eigenvalues for bosons, as it is symmetric with respect to $G$ associated to $J$.}
\medskip

\noindent
\textbf{Fermions.} For fermions, the spectrum of $\Delta_M$ consists of quadruples $(e^{\ii \lambda},e^{\ii \lambda},e^{-\ii \lambda},e^{-\ii \lambda})$ with $\lambda\in(0,\pi)$ and potentially quadruples $(-1,-1,-1,-1)$ and pairs $(1,1)$. While there is no way to make the square root and logarithm unique everywhere, if we exclude group elements $M$ where $\Delta_M$ has $-1$ as an eigenvalue, we can put conditions on $T=e^{K_+}$ to make their choice unique. Specifically, we then require that the eigenvalues of $T$ have positive real part and that the eigenvalues of $K_+$ have modulus less than $\frac{\pi}{2}$.\footnote{Note that $K_+$ has purely imaginary eigenvalues for fermions, as it is anti-symmetric with respect to $G$.} Then for every eigenvalue quadruple $(e^{\ii \lambda},e^{\ii \lambda},e^{-\ii \lambda},e^{-\ii \lambda})$ of $\Delta_M$, $T$ has eigenvalues $(e^{\ii\lambda/2},e^{\ii\lambda/2},e^{-\ii\lambda/2},e^{-\ii\lambda/2})$ and $K_+$ has eigenvalues $(\ii \lambda,\ii\lambda,-\ii\lambda,-\ii\lambda)$ with $\lambda\in[0,\frac{\pi}{2})$. Similarly, for pairs $(1,1)$, the eigenvalues of $T$ and $K_+$ are given by $(1,1)$ and $(0,0)$, respectively. Only in the special case of a $(-1,-1,-1,-1)$, there exist inequivalent ways to take the square root $(e^{\ii\pi/2},e^{\ii\pi/2},e^{-\ii\frac{\pi}{2}},e^{-\ii\pi/2})$, while ensuring that $T$ is real and a special orthogonal group element.\footnote{In particular, we must take the square root, such that there is an equal number of $e^{\pm\ii \pi/2}$ eigenvalues associated to the correct eigenspaces to ensure that $T$ and $K_+$ are real linear maps.} In summary and as proven in~\cite{hackl2021bosonic}, for fermions the Cartan decomposition $M=Tu$ with $T=e^{K_+}\in\exp(\mathfrak{u}_\perp(N))$ and $u\in\mathrm{U}(N)$ exists for all group elements $M\in\mathcal{G}$. If
\begin{align}
    \det(\id+\Delta_M)\neq 0\,,\label{eq:cartan-fermion-defined}
\end{align}
\ie $\Delta_M$ does not have $-1$ as eigenvalue, we can make this choice unique by requiring that $K_+$ is in the set
\begin{align}
    \mathcal{I}_{\mathfrak{u}_\perp(N)}=\{K_+\in\mathfrak{u}_\perp(N)\,|\,\lVert K_+\rVert<\tfrac{\pi}{2}\}\,,\label{eq:I-u-perp}
\end{align}
which restricts the eigenvalues to $\pm\ii \lambda$ with $\lambda\in[0,\frac{\pi}{2})$, such that the eigenvalues of $T$ have positive real part. We will see in section~\ref{sec:principal-fiber-bundles} how the Cartan decomposition with respect to a complex structure $J$ naturally gives rise to understand the group $\mathcal{G}$ as a $\mathrm{U}(N)$ principal fiber bundle over the base manifold $\mathcal{M}$ from~\eqref{eq:def-M-manifold}.

\subsection{Generators of Gaussian unitaries} \label{sec:gaussian-generators}
Let us consider quadratic Hamiltonians of the form
\begin{align}
  \label{eq:Hamiltonian}
    \hat{H}=\left\{\begin{array}{ll}
    \tfrac{1}{2}h_{ab}\hat{\xi}^a\hat{\xi}^b     &  \textbf{(bosons)}\\[2mm]
    \tfrac{\ii}{2}h_{ab}\hat{\xi}^a\hat{\xi}^b     & \textbf{(fermions)}
    \end{array}\right. \,,
\end{align}
where $h_{ab}$ is real and symmetric for bosons and real and antisymmetric for fermions. In other words, they correspond to the most general Hermitian operators that are quadratic in the observables $\hat{\xi}^a$.\footnote{For bosons, any antisymmetric part of $h_{ab}$ would need to be purely imaginary (due to Hermiticity) and gave rise to a constant energy offset given by $\Delta E=\tfrac{\ii}{4}h_{ab}\Omega^{ab}$. For fermions, any symmetric part would also need to be imaginary (due to Hermiticity) and gave rise to a constant energy offset given by $\Delta E=\frac{\ii}{4}h_{ab}G^{ab}$.} We stress that we do not allow for terms linear in the observables $\hat{\xi}^a$, for reasons that will become clear below.

The Hamiltonian is the generator of time-translations and hence of a specific family of unitary transformations. We will see in the next section that the unitaries generated by quadratic Hamiltonians form a group, which we refer to as Gaussian unitaries. However, from the perspective of this paper it is natural to also think about quadratic Hamiltonians~\eqref{eq:Hamiltonian} as being associated with elements of the Lie algebra $\mathfrak{sp}(2N,\mathbb{R})$ (for bosons) or $\mathfrak{so}(2N,\mathbb{R})$ (for fermions) introduced in~\eqref{eq:algebra-g} and they generate transformations that leave the symplectic form $\Omega^{ab}$ (for bosons) or the metric $G^{ab}$ (for fermions) invariant.

This is because the Lie algebra $\mathfrak{g}$ admits an embedding into the Weyl algebra (for bosons) or Clifford algebra (for fermions) generated by the observables $\hat{\xi}^a$. Indeed, for each $K\in\mathfrak{g}$ we can define an associated quadratic element
\begin{align}
  \label{eq:Embedding}
  \widehat{K}
  =\left\{\begin{array}{rl}
    \!\!-\frac{\ii}{2} \omega_{ac}K^c{}_b\hat{\xi}^a\hat{\xi}^b     &  \textbf{(bosons)}\\[2mm]
    \frac{1}{2}g_{ac}K^c{}_b\hat{\xi}^a\hat{\xi}^b     & \textbf{(fermions)}
    \end{array}\right.\;,
\end{align}
where $\omega$ and $g$ are the matrices inverse to $\Omega$ and $G$, respectively. This assignment then induces an embedding of $\mathfrak{g}$ into the Weyl/Clifford algebra and hence an action on Fock space. This embedding is an algebra homomorphism in the sense that
\begin{align}
    [\widehat{K}_1,\widehat{K}_2]=\widehat{[K_1,K_2]}\label{eq:Lie-algebra-representation}
\end{align}
for $K_1,K_2\in\mathfrak{g}$. Put simply, $\widehat{K}$ forms a representation of $\mathfrak{g}$ as anti-Hermitian operators acting on Hilbert space, where each element is symmetric (antisymmetric) in $\hat{\xi}^a$ for bosons (fermions).\footnote{Choosing a different ordering (instead of symmetrization/anti-symmetrization for bosons/fermions, respectively) would spoil the representation~\eqref{eq:Lie-algebra-representation}.\label{ft:spoil-rep}}

We note that $[\widehat{K},\hat{\xi}^a]$ is again linear in the observables $\hat{\xi}^b$. More specifically, $\mathfrak{g}$ has a well-defined action\footnote{The minus sign is a consequence of our choice of $\widehat{K}$, which was made to be consistent with~\eqref{eq:Equivariance}. 
This will allow us to define a group action, which on $\mathcal{U}(M)\ket{\Psi}$ and on operators is $\mathcal{U}^\dagger(M)\mathcal{O}\mathcal{U}(M)$, like time evolution in the Schrödinger or Heisenberg picture, respectively.}
\begin{align}
  \label{eq:EquivarianceInfinitesimal}
  [\widehat{K},\hat{\xi}^a]
  =-{K^a}_b\hat{\xi}^b\,.
\end{align}
This action preserves the Hermiticity of the $\hat{\xi}^a$.

We thus see that there exists a precise relation between quadratic Hamiltonians and elements of the Lie algebra $\mathfrak{g}$. Let us elaborate on this. Given a Hamiltonian $\hat{H}$ as defined in Eq.~\eqref{eq:Hamiltonian}, \ie in terms of the matrix $h_{ab}$, we can associate with it an element $K\in\mathfrak{g}$, defined by
\begin{align}
    K^a{}_b=\left\{\begin{array}{ll}
    \Omega^{ac}h_{cb}     &  \textbf{(bosons)}\\[2mm]
    G^{ac}h_{cb}     &  \textbf{(fermions)}
    \end{array}\right.\,.\label{eq:Kfromh}
\end{align}
We can then use the embedding~\eqref{eq:Embedding} to associate a quadratic operator $\widehat{K}$ with $K$. This operator turns out to satisfy $\widehat{K}=-\ii\hat{H}$.

The mathematical machinery introduced so far allows us to express many quantities, such as $\braket{0|e^{-\ii \hat{H}}|0}$ for some quadratic Hamiltonian $\hat{H}$, only in terms of $2N$-by-$2N$ matrices. 
For instance, the expectation value $\braket{0|e^{-\ii \hat{H}}|0}$ can be expressed as $\braket{J|e^{\widehat{K}}|J}$, which is only depends on a linear complex structure $J$ and a Lie algebra element $K$.
More precisely, any vacuum state $\ket{0}$ is always defined with respect to a set of annihilation operators $\hat{a}_i$ satisfying the standard commutation/anti-commutation relations, \ie $[\hat{a}_i,\hat{a}_j^\dagger]=\delta_{ij}$ and $\{\hat{a}_i,\hat{a}_j^\dagger\}=\delta_{ij}$ for bosons and fermions, respectively. Making such a choice is equivalent to choosing a complex structure $J$, such that its eigenspace $V^*_-$ with eigenvalue $-\ii$ corresponds exactly to the space of annihilation operators, \ie when writing $\hat{a}_i=v_{ia}\hat{\xi}^a$, we have $v_{ia}J^a{}_b=-\ii v_{i b}$. Therefore, we can describe a general vacuum $\ket{0}=\ket{J}$ (Gaussian state) fully in terms of a complex structure $J$.
Furthermore, a general quadratic Hamiltonian $\hat{H}$ of the form~\eqref{eq:Hamiltonian} is in one-to-one correspondence with a symplectic or orthogonal Lie algebra element $K$ from~\eqref{eq:Kfromh}, where we have the relation $-\ii \hat{H}=\widehat{K}$. In this language, the expectation value $\braket{0|e^{-\ii \hat{H}}|0}$ can be written as $\braket{J|e^{\widehat{K}}|J}$.

\subsection{Gaussian unitaries}\label{sec:GaussianUnitaries}
In the previous section, we have established that each purely quadratic Hamiltonian $\hat{H}$ can be linked to an element $K$ in $\mathfrak{g}$. However, we are frequently not so much interested in the Hamiltonian itself but rather in the time evolution generated by it.
  
To obtain the time evolution we need to exponentiate $\widehat{K}=-\ii\hat{H}$ and consider operators of the form $e^{-\ii\hat{H}}$ that we refer to as Gaussian unitaries. However, exponentiating the Fock space operator $\widehat{K}$ is very different from exponentiating the matrix $K$. Elements of the form $M=e^K$ where $K\in\mathfrak{g}$ generate the respective matrix group $\mathcal{G}$ introduced in~\eqref{eq:DefGroup}. In contrast, elements $e^{\widehat{K}}$ live in a different group $\widetilde{\mathcal{G}}$. For bosons, this group $\widetilde{\mathcal{G}}$ is the metaplectic group $\mathrm{Mp}(2N,\mathbb{R})$, a double cover of $\mathrm{Sp}(2N,\mathbb{R})$.\footnote{We note that the metaplectic group does not admit a faithful finite-dimensional linear representation. It hence cannot be presented as a matrix group.} Similarly, when considering fermions, the group is the spin group $\mathrm{Spin}(2N,\mathbb{R})$, a double cover of $\textrm{SO}(2N,\mathbb{R})$. We will now explain in more detail how this double cover group arises.

Using Baker-Campbell-Hausdorff for $e^{\widehat{K}}$, we find the relations
\begin{align}
    e^{-\widehat{K}}\hat{\xi}^ae^{\widehat{K}}&=(e^K)^a{}_b\hat{\xi}^b\,,\\
    e^{-\widehat{K}'}e^{-\widehat{K}}\hat{\xi}^a\,e^{\widehat{K}}e^{\widehat{K}'}&=(e^Ke^{K'})^a{}_b\hat{\xi}^b\,.
\end{align}
This may suggest that we could define $\mathcal{U}(e^K)=e^{\widehat{K}}$ and more generally $\mathcal{U}(M)$ for every group element $M$, such that the relation~\eqref{eq:EquivarianceInfinitesimal} is lifted to the group level as
\begin{align}
    \label{eq:Equivariance}
    \mathcal{U}^\dagger(M)\hat{\xi}^a\mathcal{U}(M)=M^a{}_b\hat{\xi}^b\,,
\end{align}
which seems to indicate that products of $e^{\widehat{K}}$ generate a representation of the group $\mathcal{G}$. However, it is easy to see that~\eqref{eq:Equivariance} characterizes $\mathcal{U}(M)$ only up to a complex phase, \ie for a given $\mathcal{U}(M)$ satisfying~\eqref{eq:Equivariance} also $e^{\ii \varphi}\mathcal{U}(M)$ will satisfy it.\footnote{Apart from this freedom in the complex phase, relation~\eqref{eq:Equivariance} does characterize $\mathcal{U}(M)$ uniquely, due to the representation of $\hat{\xi}^a$ being faithful, as explained in proposition 6 of~\cite{hackl2021bosonic}.} However, in the following we will argue that $\mathcal{U}(M)$ can only be defined as projective representation of $\mathcal{G}$, where $\mathcal{U}(M)$ is only specified up to an overall sign.

The space of anti-Hermitian quadratic operators $\widehat{K}$ forms a finite-dimensional Lie subalgebra of the space all anti-Hermitian operators, which itself forms the Lie algebra associated to the unitary group\footnote{For bosons, this is the unitary group of the unique infinite dimensional separable Hilbert space. For fermions, this is the unitary group $\mathrm{U}(2^N)$, as the fermionic Hilbert space has $\dim\mathcal{H}=2^N$.} $\mathrm{U}(\mathcal{H})$ of Hilbert space. Exponentiating elements $\widehat{K}$ of this Lie algebra representation will give rise to a representation of a Lie subgroup of the unitary group, whose Lie algebra is $\mathfrak{g}$. However, for a given Lie algebra $\mathfrak{g}$, there exist in general many associated Lie groups that recover $\mathfrak{g}$ from the infinitesimal group action on the tangent space at the identity. For bosons with $\mathfrak{g}=\mathfrak{sp}(2N,\mathbb{R})$, the Lie group candidates consist of the symplectic group $\mathrm{Sp}(2N,\mathbb{R})$ and its $n$-fold covers, including its universal cover with $n=\infty$. For fermions with $\mathfrak{g}=\mathfrak{so}(2N,\mathbb{R})$, the Lie group candidates are either the special orthogonal group $\mathrm{SO}(2N,\mathbb{R})$ or its universal cover $\mathrm{Spin}(2N,\mathbb{R})$.\footnote{Let us note that both $\mathrm{Sp}(2N,\mathbb{R})$ and $\mathrm{SO}(2N,\mathbb{R})$ have non-trivial centers consisting of $(\id,-\id)$ forming $\mathbb{Z}_2$, which implies that the ``smallest'' Lie group with Lie algebra $\mathfrak{g}$ is actually given by $\mathcal{G}/\mathbb{Z}_2$, \ie by $\mathrm{PSp}(2N,\mathbb{R})$ and $\mathrm{PSO}(2N,\mathbb{R})$, respectively. The only exception is $N=1$ for fermions, where $\mathcal{G}=\mathrm{SO}(2,\mathbb{R})$ is the unique compact Lie group with one-dimensional real Lie algebra.} These candidates are closely related to the fundamental groups, namely $\pi_1(\mathrm{Sp}(2N,\mathbb{R}))=\mathbb{Z}$ and $\pi_1(\mathrm{SO}(2N))=\mathbb{Z}_2$. Note that by construction, we will only get the group elements connected to the identity, as we are only multiplying exponentiated Lie algebra elements with each other.

We can determine which group the exponentials of the form $e^{\widehat{K}}$ generate by considering how loops in $\mathcal{G}$ are lifted. It turns out that for this, it suffices to consider a single bosonic or fermionic degree of freedom, as the same argument also applies to larger systems, as we can always embed such a single degree of freedom into a larger system. For both bosons and fermions, we consider
\begin{align}
    K=\begin{pmatrix}
        0 & 1\\         -1 & 0
    \end{pmatrix}\,,
\end{align}
with respect to the basis $(\hat{q},\hat{p})$. We compute the associated operator\footnote{The offset $\pm\frac{1}{2}$ in $\widehat{K}=-\ii(\hat{a}^\dagger\hat{a}\pm \frac{1}{2})$ is a consequence of the symmetrization/anti-symmetrization in $\hat{\xi}^a$. Trying to remove it by adding/subtracting an appropriate offset will spoil the commutation relations~\eqref{eq:Lie-algebra-representation} to close, as discussed in footnote~\ref{ft:spoil-rep}.}
\begin{align}
    \widehat{K}=\begin{cases}
    \ii(\hat{n}+\frac{1}{2})    & \textbf{(bosons)}\\
    \ii(\hat{n}-\frac{1}{2})    & \textbf{(fermions)}
    \end{cases}\,,
\end{align}
where $\hat{n}=\hat{a}^\dagger\hat{a}$ with $\hat{a}=\frac{1}{\sqrt{2}}(\hat{q}+\ii\hat{p})$. We now consider the trajectory $U(t)=e^{t\widehat{K}}$ with varying $t$, where we need to choose $t\in[0,4\pi]$ for the trajectory to form a closed loop starting and ending at the identity. In contrast, $M(t)=e^{tK}$ will already return at $t=2\pi$ to the identity $M(2\pi)=\id$ and we can use the group property $M(2\pi+t)=M(2\pi)M(t)=M(t)$ to see that $M(t)$ will run over the same loop twice, when $U(t)$ only performs a single loop. This implies that for both bosons and fermions, $e^{\widehat{K}}$ generates a representation of the double cover $\widetilde{\mathcal{G}}$ of $\mathcal{G}$, \ie the metaplectic group $\mathrm{Mp}(2N,\mathbb{R})$ for bosons and the spin group $\mathrm{Spin}(2N,\mathbb{R})$ for fermions. This argument sketches the idea of a general proof.\footnote{Loops in both $\mathrm{Sp}(2N,\mathbb{R})$ and $\mathrm{SO}(2N,\mathbb{R})$ can be characterized by a single winding number (either in $\mathbb{Z}$ or $\mathbb{Z}_2$, respectively). By establishing that a non-contractible loop in $\mathcal{G}$, such as $M(t)=e^{tK}$ with $t\in[0,2\pi]$, is covered twice by the respective loop in $U(t)=e^{t\widehat{K}}$ with $t\in[0,4\pi]$ in $\widetilde{\mathcal{G}}$, we establish that $\widetilde{\mathcal{G}}$ is the double cover of $\mathcal{G}$. For larger $N$, we can still use this argument where $\widehat{K}$ is just constructed from the first bosonic or fermionic degree of freedom in the system.}

Given a group element $M\in\mathcal{G}$, there thus exist two unitary operators $\pm\mathcal{U}(M)$ in the set generated by $e^{\widehat{K}}$ for arbitrary $\widehat{K}$. Slightly abusing notation\footnote{Indeed, when dealing with projective representations, $\mathcal{U}(M)$ does not refer to a single operator, but to the set of operators related by multiplication with the respective complex phases $e^{\ii \varphi}$, \ie $\pm 1$ in our case of the double cover.}, $\mathcal{U}(M)$ is thus only defined up to a sign\footnote{We conclude this from the fact that a double cover requires phases $e^{\ii \varphi}$ forming $\mathbb{Z}_2$, which yields $\pm 1$. However, this is also proven explicitly for bosons in chapter 4, Theorem (4.37) of~\cite{folland1989harmonic}. 
} and we have the relation
\begin{align}
    \mathcal{U}(M_1)\mathcal{U}(M_2)=\pm\mathcal{U}(M_1M_2)\label{eq:U(M)-product}
\end{align}
describing a projective representation of $\mathcal{G}$. In section~\ref{sec:rep-double-cover}, we will turn this into a proper representation $\mathcal{U}(M,\psi)$, where $(M,\psi)\in\widetilde{\mathcal{G}}$ encodes group elements in the respective double cover.

The name Gaussian unitaries stems from the fact that $\mathcal{U}(M)$ maps Gaussian states onto Gaussian states. This can be seen by plugging in $\mathcal{U}(M)\ket{J}$ into~\eqref{eq:z-displacement} and using~\eqref{eq:Equivariance} to find that $J$ is mapped to $MJM^{-1}$. We therefore have
\begin{align}
    \mathcal{U}(M)\ket{J}=e^{\ii \varphi}\ket{MJM^{-1}}\,,
\end{align}
\ie $\mathcal{U}(M)$ acting on $\ket{J}$ gives a Gaussian state vector $\ket{MJM^{-1}}$. The complex phase $\varphi$ depends on which state vector representatives $\ket{J}$ and $\ket{MJM^{-1}}$ we choose. For $u\in\mathrm{U}(N)$, we have
\begin{align}
    \mathcal{U}(u)\ket{J}=e^{\ii\varphi}\ket{J}\,,\label{eq:U(u)-relation}
\end{align}
\ie $\ket{J}$ is an eigenvector of $\mathcal{U}(u)$. Here, the complex phase is independent of our choice of state vector representative $\ket{J}$, as it really compared $\ket{J}$ with $\mathcal{U}(u)\ket{J}$. Note that we still have an overall sign ambiguity due to $\mathcal{U}(M)$ being only defined up to an overall sign. We can also determine how $\mathcal{U}(M)$ acts on the quadratic operator $\widehat{K}$ to find
\begin{align}
    \mathcal{U}(M)\widehat{K}\mathcal{U}^\dagger(M)=\widehat{MKM^{-1}}\,,\label{eq:unitary-K}
\end{align}
where one should note the opposite order of $\mathcal{U}(M)$ and $\mathcal{U}^\dagger(M)$, compared to~\eqref{eq:Equivariance}. In this case, the sign ambiguity is not present as~\eqref{eq:unitary-K} is invariant under changing the sign of $\mathcal{U}(M)$.

\subsection{Principal fiber bundle}\label{sec:principal-fiber-bundles}
We will use the Cartan decomposition, as introduced in section~\ref{sec:cartan}, to understand the Lie group $\mathcal{G}$ as a principal fiber bundle with right action by the subgroup $\mathrm{U}(N)$. This will allow us to get a more intricate understanding of the group and its structure, which is especially helpful when trying to build its double cover explicitly. Note that both the Cartan decomposition and the subgroup $\mathrm{U}(N)$ are defined with respect to a choice of reference complex structure $J$, so everything discussed here relies on this initial choice.

We consider $\mathcal{G}$ as principal fiber bundle with base manifold $\mathcal{M}$ from~\eqref{eq:def-M-manifold} and right action $\triangleleft: \mathcal{G}\times\mathrm{U}(N)\to\mathcal{G}$ of the stabilizer group $\mathrm{U}(N)$ from~\eqref{eq:U-group} with $M\triangleleft u=Mu$. Given a group element $M\in\mathcal{G}$, we have the projection $\pi(M)=J_M=MJM^{-1}\in\mathcal{M}$ onto the base manifold. The associated fiber over $J_M\in\mathcal{M}$ is then $\pi^{-1}(J_M)=\{Mu|u\in\mathrm{U}(N)\}$, \ie the orbit of $M$ under the right action $\mathrm{U}(N)$. This should not surprise, as $\mathcal{M}$ in~\eqref{eq:def-M-manifold} was constructed as the equivalence class of group elements related by right-multiplication of $u\in\mathrm{U}(N)$, so the equivalence classes that make up $\mathcal{M}$ are exactly the fibers of $\mathcal{G}$. This construction has been used to define and study circuit complexity of Gaussian unitaries in the context of bosonic and fermionic field theories~\cite{hackl2018circuit,chapman2019complexity,torres2024toward}, but it can also be used to optimize functions over the manifold of Gaussian states~\cite{windt2021local}.

We can recognize the Cartan decomposition as a way to select for each fiber $\{Mu|u\in\mathrm{U}(N)\}$ a unique element $T=\sqrt{-MJM^{-1}J}$, which works everywhere for bosons and almost everywhere for fermions, as discussed in section~\ref{sec:cartan}. From the fiber bundle perspective, we can understand the decomposition $\mathfrak{g}=\mathfrak{u}(N)\oplus\mathfrak{u}_\perp(N)$ as a natural orthogonal decomposition of the tangent space at $\id$, where $\mathfrak{u}(N)$ corresponds to the vertical direction parallel to $\mathrm{U}(N)$, while $\mathfrak{u}_\perp(N)$ represents the horizontal direction cutting through different fibers, which is orthogonal to $\mathfrak{u}(N)$ with respect to the Killing form on $\mathfrak{g}$.

Moving into the different directions $K_+\in\mathfrak{u}_\perp(N)$ will at least locally intersect with each fiber just once. We can consider all group elements on the trajectories $e^{t K_+}$ where we move into the direction $K_+$ as we increase $t$. This set is given by
\begin{align}
    \exp(\mathfrak{u}_\perp(N))=\{e^{K_+}|K_+\in\mathfrak{u}_\perp(N)\}
\end{align}
and we can consider its properties for bosons and fermions separately.

\textbf{Bosons.} For bosons, $\{K_+,J\}=0$ implies that $K_+G-GK_+^\intercal=0$, which means that $K_+$ is represented by a symmetric matrix in a basis, where the covariance matrix $G$ of the reference state $\ket{J}$ is given by the identity. Thus, $K_+$ is diagonalizable and its exponential $e^{K_+}$ will also be symmetric in such a basis and its eigenvalues will be positive. Vice versa, for any symplectic group element $T$ with purely positive eigenvalues that is symmetric with respect to $G$, there exists a unique logarithm $K_+=\log(T)\in \mathfrak{u}_\perp(N)$. Thus, the two sets $\exp(\mathfrak{u}_\perp(N))$ and $\mathfrak{u}_\perp(N)$ are diffeomorphic to each other and to $\mathbb{R}^{N(N+1)}$, as $\mathfrak{u}_\perp(N)$ is a real vector space of this dimension. Recall that the Cartan decomposition of a symplectic group element $M=Tu$ with respect to a complex structure $J$ is unique, we thus see that $\exp(\mathfrak{u}_\perp(N))$ intersects each fiber once and is thus itself diffeomorphic to the base manifold $\mathcal{M}$ for bosons.

\textbf{Fermions.} For fermions, $K_+$ is also diagonalizable, as all elements of $\mathfrak{g}$ are antisymmetric with respect to $G$, but in contrast to bosons $K_+$ will have purely imaginary eigenvalues. Therefore, the exponential map will fold $\mathfrak{u}_\perp(N)$ infinitely many times over itself when mapping to $\mathcal{G}$, which should not come as a surprise as $\mathcal{G}$ is compact for fermions.
From our previous analysis in section~\ref{sec:cartan}, we already know that there exists a unique Cartan decomposition of a group element $M\in\mathcal{G}$ if and only if $\Delta_M$ does not have $-1$ as eigenvalue, which is equivalent to requiring $\det(\id+\Delta_M)\neq 0$. For those, we can enforce uniqueness of $T$ by requiring that its eigenvalues lie in the set $e^{\pm \ii\theta}$ with $\theta\in[0,\frac{\pi}{2})$, which corresponds to the unique $K_+=\log(T)=\frac{1}{2}\log(\Delta_M)$ that lies in the set $\mathcal{I}_{\mathfrak{u}_\perp(N)}$ from~\eqref{eq:I-u-perp}. Vice versa, for each choice of $K_+\in \mathcal{I}_{\mathfrak{u}_\perp(N)}$, we will have $T=e^{K_+}$ that will appear in the Cartan decomposition of $M=Tu$ for arbitrary $u\in\mathrm{U}(N)$. Therefore, $\exp(\mathcal{I}_{\mathfrak{u}_\perp(N)})$ describes the set of all $T$, in Cartan decompositions of $M$ with $\det(\id+\Delta_M)\neq 0$.

\begin{figure}[t]
    \centering
    \tdplotsetmaincoords{90}{90}
		\tdplotsetrotatedcoords{0}{20}{70}
            \tikzset{zxplane/.style={canvas is zx plane at y=#1,very thin}}
            \tikzset{yxplane/.style={canvas is yx plane at z=#1,very thin}}
		\begin{tikzpicture}[scale=1.4]
		\tikzset{->-/.style={decoration={
					markings,
					mark=at position #1 with {\arrow{>}}},postaction={decorate}}}
		\coordinate (A) at (-2,0,2);
		\coordinate (B) at (-2,0,-2);
		\coordinate (C) at (2,0,-2);
		\coordinate (D) at (2,0,2);
		\coordinate (S) at (0,-2.5,0);
		\coordinate (E) at ($(A)+(S)$);
		\coordinate (F) at ($(B)+(S)$);
		\coordinate (G) at ($(C)+(S)$);
		\coordinate (H) at ($(D)+(S)$);
		\coordinate (S2) at (0,-2,0);
		\coordinate (E2) at ($(A)+(S)+(S2)$);
		\coordinate (F2) at ($(B)+(S)+(S2)$);
		\coordinate (G2) at ($(C)+(S)+(S2)$);
		\coordinate (H2) at ($(D)+(S)+(S2)$);
		\coordinate (S3) at (0,-1.5,0);
            \coordinate (V) at (0,-1.5,0);
		\coordinate (E3) at ($(A)+(S3)$);
		\coordinate (F3) at ($(B)+(S3)$);
		\coordinate (G3) at ($(C)+(S3)$);
		\coordinate (H3) at ($(D)+(S3)$);
            \coordinate (P1) at (0,-.2,0); 
            \coordinate (P2) at (0,.2,0); 
            \draw[->] ($(E)+(P1)$) -- node[right,font=\footnotesize]{$\pi$} ($(E)+ (S2)+(P2)$);
		\draw[densely dashed] (A) -- (B) -- (C) node[right,font=\footnotesize]{$\mathcal{G}$} -- (D);
		\draw[densely dashed] (E) -- (H);
		\draw[densely dashed] (C) -- (G);
		\draw[densely dashed] (H) -- (G);
		\draw[densely dashed,opacity=0.2] (E) -- (F) -- (G);
		\draw[densely dashed,opacity=0.2] (B) -- (F);
		\shadedraw[densely dashed] (E2) -- (F2) -- (G2) node[right,font=\footnotesize]{$\mathcal{M}$} -- (H2) -- cycle;
            
		\draw[blue, thick] (0,-2.5,0) -- (0,0,0) node[above,blue,font=\footnotesize]{$\mathrm{U}(N)$};
            \draw[purple, thick] (1,-2.5,0) -- (1,0,0) node[above,font=\footnotesize]{$[M]$};
		\fill[lred,opacity=.9] (E3) node[above,xshift=14mm,red,opacity=1,font=\footnotesize]{$\exp\big(\mathfrak{u}_\perp(N)\big)$} -- (F3) -- (G3) -- (H3) -- cycle;
            \begin{scope}[zxplane=-1.5]
                \filldraw[red] (.6,-.6) node[left,font=\footnotesize,black]{$\mathfrak{u}_\perp(N)$} rectangle (-.6,.6);
                \draw[black] (.6,-.6) rectangle (-.6,.6);
                \draw[very thick, dred,->] (0,0) -- (0,.5) node[right=-1.8mm,yshift=-2.2mm]{$K_+$};
            \end{scope}
		\draw[blue, thick] (0,0,0) -- (0,-1.5,0);
            \draw[purple, thick] (1,0,0) -- (1,-1.5,0);
            \draw[blue, very thick,<-] (0,-1,0) node[left,font=\footnotesize]{$\mathfrak{u}(N)$} -- (0,-1.5,0);
            \fill [blue] ($(0,0,0)+(S)+(S2)$) node[left,font=\footnotesize]{$J$} circle (1.5pt);
            \fill [purple] ($(1,0,0)+(S)+(S2)$) node[right,font=\footnotesize]{$J_M$} circle (1.5pt);
            \fill [black] ($(0,0,0)+(S3)$) node[left,font=\footnotesize]{$\id$} circle (1.5pt);
            \fill [purple] ($(1,0,0)+(S3)$) node[right,font=\footnotesize]{$T=e^{K_+}$} circle (1.5pt);
            \fill [purple] ($(1,-1,0)$) node[right,font=\footnotesize]{$M=Tu$} circle (1.5pt);
		\draw[densely dashed] (A) -- (B) -- (C) -- (D);
		\draw[densely dashed] (E) -- (H);
		\draw[densely dashed] (C) -- (G);
		\draw[densely dashed] (H) -- (G);
            \draw[densely dashed] (E) -- (A) -- (D);
            \begin{scope}
            \clip [] ($(D)-(-.1,0,0)$) rectangle ($(D)-(.1,.1,0)$) ($(D)-(-.1,.4,0)$) rectangle ($(D)-(.1,.62,0)$) ($(D)-(-.1,.9,0)$) rectangle ($(H)-(.1,0,0)$);
            \draw[densely dashed] (D) -- (H);
            \end{scope}
		\end{tikzpicture}
    \caption{\emph{Bosonic fiber bundle.} We illustrate the group $\mathcal{G}=\mathrm{Sp}(2N,\mathbb{R})$ as principal fiber bundle with group action of $\mathrm{U}(N)$ and base manifold $\mathcal{M}=\mathcal{G}/\mathrm{U}(N)$. The fibers are the equivalence classes $[M]$ by the relation~\eqref{eq:equivalence-relation}, which we can uniquely characterized by their complex structure $J_M=MJM^{-1}$. $J$ is an a priori chosen reference complex structure. We illustrate the Lie algebra as the tangent space at the identity, which we can decompose as $\mathfrak{g}=\mathfrak{u}(N)\oplus\mathfrak{u}_\perp(N)$ into a direct sum of the ``vertical'' $\mathfrak{u}(N)$ subalgebra tangential to the fiber $\mathrm{U}(N)$ and its orthogonal complement $\mathfrak{u}_\perp(N)$ being ``horizontal''. The exponential map provides a diffeomorphsim from $\mathfrak{u}_\perp(N)$ to the set $\exp(\mathfrak{u}_\perp(N))$, which intersects each fiber once. For a fiber $[M]=\{Mu|u\in\mathrm{U}(N)\}$, this intersection point is $T=\sqrt{-MJM^{-1}J}$ appearing in the Cartan decomposition $M=Tu$. While the vertical direction of $\mathrm{U}(N)$ is compact, both the base manifold $\mathcal{M}$ and $\exp(\mathfrak{u}_\perp(N))$ are $N(N+1)$-dimensional and non-compact (dashed lines indicate that the direction continues).}
    \label{fig:bosonic-fiber-bundle}
\end{figure}

\begin{figure}[t]
    \centering
    \tdplotsetmaincoords{90}{90}
		\tdplotsetrotatedcoords{0}{20}{70}
            \tikzset{zxplane/.style={canvas is zx plane at y=#1,very thin}}
            \tikzset{yxplane/.style={canvas is yx plane at z=#1,very thin}}
		\begin{tikzpicture}[scale=1.4]
		\tikzset{->-/.style={decoration={
					markings,
					mark=at position #1 with {\arrow{>}}},postaction={decorate}}}
            \coordinate (O) at (0,0,0);
            \coordinate (s) at ({2*cos(22)},0,{-2*sin(22)});
		\coordinate (A) at (-2,0,2);
		\coordinate (B) at (-2,0,-2);
		\coordinate (C) at (2,0,-2);
		\coordinate (D) at (2,0,2);
		\coordinate (S) at (0,-2.5,0);
		\coordinate (E) at ($(A)+(S)$);
		\coordinate (F) at ($(B)+(S)$);
		\coordinate (G) at ($(C)+(S)$);
		\coordinate (H) at ($(D)+(S)$);
		\coordinate (S2) at (0,-2,0);
		\coordinate (E2) at ($(A)+(S)+(S2)$);
		\coordinate (F2) at ($(B)+(S)+(S2)$);
		\coordinate (G2) at ($(C)+(S)+(S2)$);
		\coordinate (H2) at ($(D)+(S)+(S2)$);
		\coordinate (S3) at (0,-1.5,0);
		\coordinate (E3) at ($(A)+(S3)$);
		\coordinate (F3) at ($(B)+(S3)$);
		\coordinate (G3) at ($(C)+(S3)$);
		\coordinate (H3) at ($(D)+(S3)$);
            \coordinate (P1) at (0,-.3,0); 
            \coordinate (P2) at (0,.3,0); 
            \draw[->] ($(S)-(s)+(P1)$) -- node[right,font=\footnotesize]{$\pi$} ($(S)-(s)+ (S2)+(P2)$);
                \phantom{
    		\draw (A) -- (B) -- (C) node[right,font=\footnotesize]{$\mathcal{G}$} -- (D) -- cycle;
    		\draw (E) -- (H);
    		\draw (C) -- (G);
    		\draw (H) -- (G);
                \draw (A) rectangle (H);
    		\draw[opacity=0.2] (E) -- (F) -- (G);
    		\draw[opacity=0.2] (B) -- (F);
    		\draw (E2) -- (F2) -- (G2) node[right,font=\footnotesize]{$\mathcal{M}$} -- (H2) -- cycle;
                }
                \begin{scope}[zxplane=0]
                   \draw[orange,thick] (0,0) circle[radius=2cm] ;
                \end{scope}
                \begin{scope}[zxplane=-2.5]
                   \draw[orange,thick] (0,0) circle[radius=2cm] ;
                \end{scope}
                \begin{scope}[zxplane=-4.5]
                   \shadedraw[orange,thick] (0,0) circle[radius=2cm] ;
                \end{scope}
                \draw[orange,thick] (s) -- ($(s)+(S)$);
                \draw[orange,thick] ($(O)-(s)$) -- ($(S)-(s)$);
		\draw[blue, thick] (0,-2.5,0) -- (0,0,0) node[above,blue,font=\footnotesize]{$\mathrm{U}(N)$};
            \draw[purple, thick] (1,-2.5,0) -- (1,0,0) node[above,font=\footnotesize]{$[M]$};
            \begin{scope}[zxplane=-1.5]
               \fill[lred,opacity=.9] (0,0) circle[radius=2cm];
               \draw[dgreen,thick] (0,0) circle[radius=2cm];
            \end{scope}
            \begin{scope}[zxplane=-1.5]
                \draw[red] (.6,-.6) node[left,font=\footnotesize,black]{$\mathfrak{u}_\perp(N)$} rectangle (-.6,.6);
                \fill[red] (0,0) circle[radius=.6cm];
                \draw[dgreen,thick] (0,0) circle[radius=.6cm];
                \draw (-.6,0) node[above,dgreen,xshift=4mm,yshift=0mm,font=\footnotesize]{$\mathfrak{B}_{\mathfrak{u}_\perp(N)}$};
                \draw[black] (.6,-.6) rectangle (-.6,.6);
                \draw[very thick, dred,->] (0,0) -- (0,.5) node[right=-1.8mm,yshift=-2.2mm]{$K_+$};
                \draw[black] (.2,-.2) -- (-.3,-.8) node[left,red]{$\mathcal{I}_{\mathfrak{u}_\perp(N)}$};
            \end{scope}
		\draw[blue, thick] (0,0,0) -- (0,-1.5,0);
            \draw[purple, thick] (1,0,0) -- (1,-1.5,0);
            \draw[blue, very thick,<-] (0,-1,0) node[left,font=\footnotesize]{$\mathfrak{u}(N)$} -- (0,-1.5,0);
            \fill [blue] ($(0,0,0)+(S)+(S2)$) node[left,font=\footnotesize]{$J$} circle (1.5pt);
            \draw[red] ($(0,0,1.5)+(S3)$) node[font=\footnotesize]{$\exp(\mathcal{I}_{\mathfrak{u}_\perp(N)})$};
            \draw ($(-1,0,0)+(S)+(S2)$) node[font=\footnotesize]{$\mathcal{I}_{\mathcal{M}}$} ($(-2,0,-1.6)+(S)+(S2)$) node[font=\footnotesize,orange]{$\mathcal{B}_{\mathcal{M}}$};
            \draw ($(-1,0,0)$) node[font=\footnotesize]{$\mathcal{I}_{\mathcal{G}}$} ($(-2,0,-1.6)$) node[font=\footnotesize,orange]{$\mathcal{B}_{\mathcal{G}}$};
            \draw[dgreen] ($(0,0,1.5)+(S3)$) node[font=\footnotesize,below,yshift=-2mm,xshift=-3mm]{$\exp(\mathcal{B}_{\mathfrak{u}_\perp(N)})$};
            \fill [purple] ($(1,0,0)+(S)+(S2)$) node[right,font=\footnotesize]{$J_M$} circle (1.5pt);
            \fill [black] ($(0,0,0)+(S3)$) node[left,font=\footnotesize]{$\id$} circle (1.5pt);
            \fill [purple] ($(1,0,0)+(S3)$) node[right,font=\footnotesize]{$T=e^{K_+}$} circle (1.5pt);
            \fill [purple] ($(1,-1,0)$) node[right,font=\footnotesize]{$M=Tu$} circle (1.5pt);
            \begin{scope}[zxplane=0]
                   \draw[orange,thick] (0,0) circle[radius=2cm] ;
                \end{scope}
            \draw (s) node[right,font=\footnotesize]{$\mathcal{G}$};
            \draw ($(s)+(S)+(S2)$) node[right,font=\footnotesize]{$\mathcal{M}$};
		\end{tikzpicture}
    \caption{\emph{Fermionic fiber bundle.} The structure of the fermionic fiber bundle is almost identical to the bosonic one, but while the bosonic base manifold (and group) are non-compact in the horizontal direction, the fermionic one is compact. We illustrate this by drawing quasi-boundaries $\mathcal{B}_\mathcal{M}$ and $\mathcal{B}_{\mathcal{G}}$, which represent points where different parts of the manifolds are glued together leaving the overall manifolds without actual boundary (thus quasi-boundaries). In particular, we have $\dim\mathcal{B}_{\mathcal{M}}=\dim\mathcal{I}_{\mathcal{M}}-2$, so in our picture where $\mathcal{I}_{\mathcal{M}}$ is illustrated as a disk, its quasi boundary $\mathcal{B}_{\mathcal{M}}$ would be zero-dimensional, so we should think of the surrounding circle as being identified to a single point in the picture. It then does not come as a surprise that $\mathcal{B}_{\mathfrak{u}_\perp(N)}$, illustrated as a circle, actually runs parallel to the fiber over this point. We also show the action of the exponential map to $\mathcal{I}_{\mathfrak{u}_\perp(N)}$ its boundary to get the respective sets in the group. Both, the base manifold $\mathcal{M}$ and $\mathfrak{u}_\perp(N)$ are $N(N-1)$-dimensional and compact (drawn boundaries indicate where the manifold is glued together).}
    \label{fig:fermionic-fiber-bundle}
\end{figure}

For fermions, we can thus decompose both $\mathcal{G}$ and its base manifold $\mathcal{M}$ into inner parts
\begin{align}
    \mathcal{I}_{\mathcal{G}}&=\{M\in\mathcal{G}|\det(\id+\Delta_M)\neq0\}\,,\\
    \mathcal{I}_{\mathcal{M}}&=\{\tilde{J}\in\mathcal{M}|\det(\id-\tilde{J}J)\neq0\}\,,
\end{align}
which are directly related to $\mathcal{I}_{\mathfrak{u}_\perp(N)}$ via the exponential map, \ie writing $\tilde{J}=e^{K_+}Je^{-K_+}=e^{2K_+}J$ provides a diffeomorphism from $\mathcal{I}_{\mathfrak{u}_\perp(N)}$ to $\mathcal{I}_{\mathcal{M}}$ and $\exp(\mathcal{I}_{\mathfrak{u}_\perp(N)})$ intersects with each fiber in $\mathcal{I}_{\mathcal{G}}$ exactly once. We can also introduce the respective quasi-boundaries\footnote{We refer to these sets as quasi-boundaries, as we glue them non-trivially onto the inner parts to make up the full manifold, \ie $\mathcal{G}$ or $\mathcal{M}$, respectively. For example, we could decompose the two-sphere into an open disk and a single point that can be glued back together to give $S^2$.}
\begin{align}
    \mathcal{B}_{\mathcal{G}}&=\{M\in\mathcal{G}|\det(\id+\Delta_M)=0\}\,,\label{eq:B_G}\\
    \mathcal{B}_{\mathcal{M}}&=\{\tilde{J}\in\mathcal{M}|\det(\id-\tilde{J}J)=0\}\,,
\end{align}
such that $\mathcal{G}=\mathcal{I}_{\mathcal{G}}\cup\mathcal{B}_{\mathcal{G}}$ and $\mathcal{M}=\mathcal{I}_{\mathcal{M}}\cup\mathcal{B}_{\mathcal{M}}$. While $\mathcal{I}_{\mathcal{M}}$ is contractible, the non-trivial topology of $\mathcal{M}$ is arises from the way we attach the quasi-boundaries. Consequently, $\mathcal{I}_{\mathcal{G}}$ is a trivial $\mathrm{U}(N)$ fiber bundle over $\mathcal{I}_{\mathcal{M}}$, \ie $\mathcal{I}_{\mathcal{G}}=\mathcal{I}_{\mathcal{M}}\times\mathrm{U}(N)$.

It is instructive to consider what happens at the quasi boundaries. From the perspective of $\mathfrak{u}_\perp(N)$, we can define the boundary of the inner part $\mathcal{I}_{\mathfrak{u}_\perp(N)}$ from~\eqref{eq:I-u-perp} as
\begin{align}
    \mathcal{B}_{\mathfrak{u}_\perp(N)}=\{K_+\in\mathfrak{u}_\perp(N)\,|\,\lVert K_+\rVert_{\infty}=\tfrac{\pi}{2}\}\,,
\end{align}
which consists of those elements of $\mathfrak{u}_\perp(N)$ that have $\pm\ii \frac{\pi}{2}$ as largest eigenvalues. The eigenvalues of $K_+$ are purely imaginary and come in conjugate pairs of multiplicity two.\footnote{A quick way to see this is to recognize that $K_+\in\mathfrak{u}_\perp(N)$ can be written as $K_+=\begin{pmatrix}
    A & B\\
    B & -A
\end{pmatrix}=\begin{pmatrix}
    0 & \id\\
    -\id & 0
\end{pmatrix}\begin{pmatrix}
    B & -A\\
    -A & -B
\end{pmatrix}$ in the basis of~\eqref{eq:xi-basis}, where $A$ and $B$ are antisymmetric matrices. By the so-called Stenzel condition~\cite{ikramov2009product} all non-zero eigenvalues then appear with even multiplicity. As $K_+$ is real and antisymmetric, all eigenvalues are also imaginary and come in conjugate pairs.} Consequently, $\mathcal{B}_{\mathfrak{u}_\perp(N)}$ consists of $K_+$ that have at least one eigenvalue quadruple $(\ii\frac{\pi}{2},\ii\frac{\pi}{2},-\ii\frac{\pi}{2},-\ii\frac{\pi}{2})$. Assuming the generic case of just a single quadruple, the associated four eigenvectors will span the four-dimensional eigenspace of $\Delta=e^{2K_+}$ with eigenvalue $-1$. We can ask how many $K_+$ give rise to the same $\Delta$, such that $T=e^{K_+}$ lies in the same fiber. Focusing just on the respective four-dimensional eigenspace of $\Delta$ and expressing everything in a standard basis where $J$ takes the form~\eqref{eq:J-standard-form}, the respective block in $K_+\in\mathcal{B}_{\mathcal{G}}$ must have the form
\begin{align}
    \frac{\pi}{2}\left(\begin{array}{cccc}
 0 & \cos (\phi ) & 0 & \sin (\phi ) \\
 -\cos (\phi ) & 0 & -\sin (\phi ) & 0 \\
 0 & \sin (\phi ) & 0 & -\cos (\phi ) \\
 -\sin (\phi ) & 0 & \cos (\phi ) & 0
\end{array}\right)\,,
\end{align}
which is a one-dimensional family parametrized by $\phi\in[0,2\pi)$. As all of these $K_+$ give rise to the same $\Delta$, this family is tangential to the fiber over the base point $J_{e^{K_+}}=e^{K_+}Je^{-K_+}=e^{2K_+}J$. This means the projection $\mathcal{B}_{\mathcal{M}}$ of the set $\exp(\mathcal{B}_{\mathfrak{u}_\perp(N)})$ will be of lower dimension, as the aforementioned vertical direction will be projected out when going to the base manifold. At a generic point of $\dim\mathcal{B}_{\mathcal{M}}$, we will have $\dim\mathcal{B}_{\mathcal{M}}=\dim\mathcal{I}_{\mathcal{M}}-2$, so it is indeed a quasi-boundary that tells us how to glue $\mathcal{I}_{\mathcal{M}}$ together to form the compact manifold $\mathcal{M}$ without boundary. Please consider figure~\ref{fig:fermionic-fiber-bundle} for a general illustration and section~\ref{sec:case-two-fermionic-modes} for a concrete example where we work out explicitly how the boundary $\mathcal{B}_{\mathfrak{u}_\perp(N)}$ lives inside the fiber over a single base point. For $N\geq 4$, there can be two or more eigenvalue quadruples $(\ii\frac{\pi}{2},\ii\frac{\pi}{2},-\ii\frac{\pi}{2},-\ii\frac{\pi}{2})$ for $K_+\in\dim\mathcal{B}_{\mathfrak{u}_\perp(N)}$, which leads to more directions being tangential to the fiber, which means such points project to lower dimensional edges of $\mathcal{B}_{\mathcal{M}}$ on the base manifold.

\section{Towards constructing the double cover}\label{sec:DoubleCover}
In the previous section, we learned that the Gaussian unitaries form a representation of the group $\widetilde{\mathcal{G}}$, which is the double cover of the group $\mathcal{G}$, \ie the symplectic group for bosons and the special orthogonal group for fermions. The goal of this section is to construct an explicit parametrization of this double cover $\widetilde{\mathcal{G}}$, \ie the metaplectic group $\mathrm{Mp}(2N,\mathbb{R})=\widetilde{\mathrm{Sp}}(2N,\mathbb{R})$ for bosons and the spin group $\mathrm{Spin}(2N,\mathbb{R})=\widetilde{\mathrm{SO}}(2N,\mathbb{R})$ for fermions.\footnote{While the spin group is not only the double cover, but also the universal cover of $\mathrm{SO}(2N,\mathbb{R})$ for $N>1$, the metaplectic group is not the universal cover of $\mathrm{Sp}(2N,\mathbb{R})$.}
Our construction is based on~\cite{robinson1989metaplectic,rawnsley2012universal}. The idea is that we construct the double cover as the subset
\begin{align}
    \widetilde{\mathcal{G}}=\left\{(M,\psi)\,\big|\,\varphi(M)=\psi^2\right\}\subset \mathcal{G}\times \mathrm{U}(1)\,,\label{eq:double-cover}
\end{align}
albeit with a non-trivial group multiplication that mixes the two factors. The constraint $\varphi(M)=\psi^2$ involves a so-called circle function $\varphi: \mathcal{G}\to \mathrm{U}(1)$ and ensures that for every group element $M$, there will be two solutions $\psi$ and $-\psi$. Here, we represent $\mathrm{U}(1)$ as the unit circle $\mathrm{U}(1)=\{\psi\in\mathbb{C}\,|\,\psi^*\psi=1\}$ embedded in the complex plane. Since $\widetilde{\mathcal{G}}$ is a double cover, it locally looks like $\mathcal{G}\times\mathbb{Z}_2$ as manifold and will have exactly two group elements $(M,\psi)$ and $(M,-\psi)$ that are associated with each element $M\in\mathcal{G}$ under the identification $\mathcal{G}=\widetilde{\mathcal{G}}/\mathbb{Z}_2$.

While we will fully succeed in reaching our goal for bosons (just as Rawnsley~\cite{rawnsley2012universal} did before us), the case of fermions will turn out to be more difficult, as our circle functions $\varphi: \mathrm{SO}(2N,\mathbb{R})\to\mathrm{U}(1)$ will only be defined \emph{almost} everywhere, \ie there will be a set of measure zero, for which $\varphi$ is undefined. Luckily, those points will be exactly the group elements $M$, for which $\braket{J|\mathcal{U}(M)|J}=0$, so that for the purpose of computing $\braket{J|\mathcal{U}(M)|J}=0$ it is irrelevant whether we multiply the expectation value by a complex phase. However, we will still discuss potential solutions to make the parametrization work at these degenerate points.

\subsection{Complex linear group (real representation)}\label{sec:complex-linear-group}
The whole construction will be based on a choice of a fixed reference linear complex structure $J$ that is compatible with $\Omega$ for bosons and with $G$ for fermions, as discussed in section~\ref{sec:complex-Gaussian}. Any such choice defines the subgroup $\mathrm{GL}(N,\mathbb{C})\subset\mathrm{GL}(2N,\mathbb{R})$ introduced in~\eqref{eq:GLNC}. There, we defined elements of $\mathrm{GL}(N,\mathbb{C})$ as real $2N$-by-$2N$ matrices $M$ that commute with $J$, which itself can be interpreted as the real matrix form of the imaginary unit $\ii$. We will later show how to map these matrices to complex $N$-by-$N$ matrices. In the present section, we show how any group or algebra element can be uniquely decomposed into a complex-linear element and a complex-antilinear element, which allows us to prove four important lemmas for later use. From this perspective, the subgroup $\mathrm{GL}(N,\mathbb{C})\subset\mathrm{GL}(2N,\mathbb{R})$ constitutes the set of invertible \emph{complex linear} (as opposed to antilinear) transformations inside the larger group of \emph{real} linear transformations.

Given a $2N$-dimensional vector space $V$ equipped with a complex structure $J$, we can decompose every linear map $M: V\to V$ into its linear and anti-linear part $C_M$ and $D_M$ with respect to $J$, \ie into parts that commute or anti-commute with $J$. In terms of explicit projections we have\footnote{We recall that $J^{-1}=-J$ due to $J^2=-\id$.}
\begin{align}
    C_M=\frac{1}{2}(M-JMJ)\,,\,\,D_K=\frac{1}{2}(M+JMJ)\,,\label{eq:CMDM}
\end{align}
such that indeed $M=C_M+D_M$ and we have
\begin{align}
    [C_M,J]=0\quad\text{and}\quad \{D_M,J\}=0\,.
\end{align}
We will apply this decomposition specifically to group elements $M\in\mathcal{G}$ represented as linear maps $M: V\to V$, but it is important to note that $C_M$ and $D_M$ themselves will in general not represent group elements. The same decomposition can also be applied to Lie algebra elements $K$ and coincides with the Lie algebra Cartan decomposition introduced in~\eqref{eq:Cartan-on_frakg}.

\begin{lemma}\label{lem:invertible-CM}
Let $M$ be invertible. The linear map $C_M$ is always invertible in bosonic systems. In fermionic systems, it is invertible if and only if $\Delta_{M^{-1}}=-M^{-1}JMJ$ has eigenvalues that are all distinct from $-1$.
\end{lemma}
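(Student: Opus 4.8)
The plan is to reduce the invertibility question to a statement about the spectrum of the relative complex structure $\Delta_{M^{-1}}=-M^{-1}JMJ$ by means of a single algebraic factorization. Starting from the definition $C_M=\tfrac{1}{2}(M-JMJ)$ in~\eqref{eq:CMDM}, I would pull out a factor of $M$ on the left and write
\begin{align}
    C_M=\tfrac{1}{2}M\left(\id-M^{-1}JMJ\right)=\tfrac{1}{2}M\left(\id+\Delta_{M^{-1}}\right),
\end{align}
where the last step is just the definition $\Delta_{M^{-1}}=-M^{-1}JMJ$. This identity is the heart of the matter: since $M$ is assumed invertible, $C_M$ is invertible if and only if the factor $\id+\Delta_{M^{-1}}$ is invertible.

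I would then translate this into a spectral condition. The map $\id+\Delta_{M^{-1}}$ fails to be invertible precisely when $\Delta_{M^{-1}}$ has $-1$ as an eigenvalue. Because $M^{-1}$ again lies in $\mathcal{G}$, the matrix $\Delta_{M^{-1}}$ is a bona fide relative complex structure, to which the spectral classification recalled in section~\ref{sec:cartan} applies. For fermions this already closes the argument: $C_M$ is invertible if and only if $-1$ is not an eigenvalue of $\Delta_{M^{-1}}$, which is exactly the claimed criterion.

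For bosons I would then invoke the spectral structure of $\Delta$ established earlier, namely that its eigenvalues come in pairs $(e^{\lambda},e^{-\lambda})$ with $\lambda\in\mathbb{R}$ and are therefore strictly positive. In particular $-1$ can never occur in the spectrum of $\Delta_{M^{-1}}$, so $\id+\Delta_{M^{-1}}$ is automatically invertible and hence so is $C_M$. This makes transparent the dichotomy in the statement: the \emph{same} factorization governs both cases, and the only difference is that bosonic spectra are positive while fermionic spectra live on the unit circle and may hit $-1$.

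I do not anticipate a serious technical obstacle; the argument is essentially this one factorization plus the already-proven spectrum of $\Delta$. The points that require care are the placement of $M$ (left rather than right) so that precisely $\Delta_{M^{-1}}$, and not some conjugate, appears in the factor, and the observation that $M^{-1}\in\mathcal{G}$ so that the spectral classification is legitimately applicable. As a sanity check I would redo the claim in the complexified picture: decomposing $V\otimes\mathbb{C}$ into the $\pm\ii$ eigenspaces of $J$, the linear part $C_M$ is the block-diagonal piece of $M$, whose invertibility is controlled by a single complex block, and the vanishing of that block's determinant should reproduce exactly the condition $-1\in\mathrm{spec}(\Delta_{M^{-1}})$.
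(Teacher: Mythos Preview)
Your proposal is correct and follows essentially the same approach as the paper: both reduce to the factorization $C_M=\tfrac{1}{2}M(\id+\Delta_{M^{-1}})$ (equivalently $M^{-1}C_M=\tfrac{1}{2}(\id+\Delta_{M^{-1}})$) and then invoke the known spectral structure of $\Delta$ to conclude that $-1$ never arises for bosons but may for fermions.
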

\begin{proof}
We can multiply $C_M$ by $M^{-1}$ from the left to find
\begin{align}
    M^{-1}C_M=\frac{1}{2}(\id-M^{-1}JMJ)=\frac{1}{2}(\id+\Delta_{M^{-1}})\,.
\end{align}
The spectrum of $\Delta_M$ was carefully studied in~\cite{hackl2021bosonic}. For bosons, the spectrum consists of pairs of positive real numbers $(e^{\lambda}, e^{-\lambda})$, so that $C_M$ is always invertible. For fermions, the spectrum consists of quadruples $(e^{\ii \lambda},e^{\ii \lambda},e^{-\ii \lambda},e^{-\ii \lambda})$ and potentially pairs $(1,1)$.\footnote{For $M\in\mathrm{O}(2N,\mathbb{R})$ with $\det(M)=-1$, there will be an odd number of eigenvalue pairs $(-1,1)$, as discussed in~\cite{hackl2021bosonic}. As we restrict to $M\in\mathrm{SO}(2N,\mathbb{R})$, this does not apply to our case.} Only if a quadruple $(-1,-1,-1,-1)$ appears, $C_M$ cannot be inverted.
\end{proof}

For later calculations, we will also introduce
\begin{align}
    Z_M=C_M^{-1}D_M\,,
\end{align}
which is well-defined whenever $C_M$ is invertible.

\begin{lemma}\label{LM:ZProperties}
$Z_M$ satisfies the following properties:
\begin{align}
    Z_M&=\tfrac{\id-\Delta_{M^{-1}}}{\id+\Delta_{M^{-1}}} \quad\text{with}\quad \Delta_M=-MJM^{-1}J\,,\\
    Z_MJ&=-JZ_M\,,\\
    M&=C_M(\id+Z_M)\,,\label{eq:ZProperty3}\\
    \id&=C_{M^{-1}}C_M(1-Z_M^2)\,,\\
    Z_{M^{-1}}C_M&=-C_MZ_M\label{eq:num5}\,.
\end{align}
\end{lemma}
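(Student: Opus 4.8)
The plan is to prove the five identities in a dependency order, using throughout the single structural fact that every linear map on $V$ splits uniquely into a $J$-commuting part and a $J$-anticommuting part, and that this parity decomposition respects composition so that any operator equation may be separated into its $J$-even and $J$-odd components. Throughout I assume $C_M$ and $C_{M^{-1}}$ are invertible, which is guaranteed for bosons and holds under the spectral condition of Lemma~\ref{lem:invertible-CM} for fermions, so that $Z_M$ and $Z_{M^{-1}}$ are defined.

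First I would dispose of the two identities that are immediate from the definitions. Equation~\eqref{eq:ZProperty3} is just the factorization $M=C_M+D_M=C_M(\id+C_M^{-1}D_M)=C_M(\id+Z_M)$. The anticommutation $Z_MJ=-JZ_M$ follows because $C_M$ (hence $C_M^{-1}$) commutes with $J$ while $D_M$ anticommutes with it, so $Z_MJ=C_M^{-1}D_MJ=-C_M^{-1}JD_M=-JC_M^{-1}D_M=-JZ_M$. For the first identity I would left-multiply $Z_M=C_M^{-1}D_M$ by $M^{-1}M$ and reuse the computation already carried out in the proof of Lemma~\ref{lem:invertible-CM}, namely $M^{-1}C_M=\tfrac12(\id+\Delta_{M^{-1}})$. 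The companion step $M^{-1}D_M=\tfrac12(\id-\Delta_{M^{-1}})$ follows from $\Delta_{M^{-1}}=-M^{-1}JMJ$ together with $D_M=\tfrac12(M+JMJ)$. Writing $Z_M=(M^{-1}C_M)^{-1}(M^{-1}D_M)$ and noting that both factors are functions of $\Delta_{M^{-1}}$, hence commute, yields $Z_M=(\id+\Delta_{M^{-1}})^{-1}(\id-\Delta_{M^{-1}})$, which is the claimed fraction.

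The crux is the last identity, from which the fourth will then follow. I would obtain~\eqref{eq:num5} by applying the parity splitting to the trivial equation $M^{-1}M=\id$. Expanding $(C_{M^{-1}}+D_{M^{-1}})(C_M+D_M)=\id$ and sorting the four products by their behavior under conjugation by $J$ — products of two factors of like parity are $J$-even while products of mixed parity are $J$-odd — the $J$-odd part gives $C_{M^{-1}}D_M+D_{M^{-1}}C_M=0$, that is $D_{M^{-1}}C_M=-C_{M^{-1}}D_M$. Left-multiplying by $C_{M^{-1}}^{-1}$ then gives $Z_{M^{-1}}C_M=-D_M=-C_MZ_M$. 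Finally, for the fourth identity I would combine the factorizations $M=C_M(\id+Z_M)$ and $M^{-1}=C_{M^{-1}}(\id+Z_{M^{-1}})$ into $\id=M^{-1}M=C_{M^{-1}}(\id+Z_{M^{-1}})C_M(\id+Z_M)$, then push $C_M$ through the middle factor using exactly~\eqref{eq:num5} in the rearranged form $(\id+Z_{M^{-1}})C_M=C_M(\id-Z_M)$, collapsing the expression to $C_{M^{-1}}C_M(\id-Z_M)(\id+Z_M)=C_{M^{-1}}C_M(\id-Z_M^2)$.

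I expect the only genuine subtlety to be the parity-splitting argument underlying~\eqref{eq:num5}: one must check carefully that each of the four products $C_{M^{-1}}C_M$, $D_{M^{-1}}D_M$, $C_{M^{-1}}D_M$ and $D_{M^{-1}}C_M$ has the claimed even or odd parity under conjugation by $J$, and that the uniqueness of the even/odd decomposition legitimately permits setting the odd part to zero on its own. Once~\eqref{eq:num5} is in hand, the remaining identities are bookkeeping built on it and on the definition of $Z_M$.
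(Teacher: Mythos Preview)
Your proposal is correct and follows essentially the same route as the paper: the first three identities are handled identically, and the last two both arise from the parity (complex-linear vs.\ anti-linear) splitting of $M^{-1}M=\id$. The only cosmetic difference is that the paper expands the factored form $C_{M^{-1}}(\id+Z_{M^{-1}})C_M(\id+Z_M)=\id$ and reads off both the even and odd components simultaneously, whereas you expand $(C_{M^{-1}}+D_{M^{-1}})(C_M+D_M)=\id$ to get~\eqref{eq:num5} first and then feed it back into the factored form to obtain the fourth identity; the content is the same.
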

\begin{proof}
For the first equation, we compute
\begin{align}
\begin{split}
    Z_M&=C_M^{-1}D_M=(M-JMJ)^{-1}(M+JMJ)\\
    &=(M-JMJ)^{-1}(MM^{-1})(M+JMJ)\\
    &=(\id-M^{-1}JMJ)^{-1}(\id+M^{-1}JMJ)\\
    &=(\id+\Delta_{M^{-1}})^{-1}(\id-\Delta_{M^{-1}})\\
    &=\frac{\id-\Delta_{M^{-1}}}{\id+\Delta_{M^{-1}}}\,.
\end{split}
\end{align}
For the second equation, we use that $[J,C_M]=0$ implies $[J,C^{-1}_M]=0$, which together with $\{D_M,J\}=0$ implies the relation. For the third equation, we calculate
\begin{align}
    M=C_M+D_M=C_M(\id+C_M^{-1}D_M)=C_M(\id+Z_M)\,.
\end{align}
For the fourth and fifth equation, we follow~\cite{rawnsley2012universal} and write
\begin{align}
    \id=M^{-1}M=C_{M^{-1}}(1+Z_{M^{-1}})C_M(1+Z_M)\,.
\end{align}
We can now multiply out this equation and split into the complex linear and complex anti-linear parts to find
\begin{align}
    \id&=C_{M^{-1}}(C_M+Z_{M^{-1}}C_MZ_M)\,,\\
    0&=C_{M^{-1}}(Z_{M^{-1}}C_M+C_MZ_M)\,,
\end{align}
from which the fourth and fifth equation follow.
\end{proof}

Given a Kähler triple $(G,\Omega,J)$ with inverses $g=G^{-1}$ and $\omega=\Omega^{-1}$, we can turn the classical phase space $V\simeq \mathbb{R}^{2N}$ into a complex Hilbert space $V\simeq \mathbb{C}^{N}$ with complex multiplication $\cdot: \mathbb{C}\times V\to V; (z,v)\mapsto \mathrm{Re}(z)v+\mathrm{Im}(z)Jv$ and inner product
\begin{align}
    \braket{v,w}\propto g(v,w)+\ii \omega(v,w)\,,
\end{align}
where we can choose an arbitrary positive real normalization factor. We will now prove various relations between $D_M$ and $C_M$ with respect to this inner product, which will be relevant to proposition~\ref{prop:eta-continuity}.

\begin{lemma}\label{LM:Adjoint}
For the complex inner product $\braket{v,w}=g(v,w)+\ii \omega(v,w)$, we have for $M\in\mathcal{G}$
\begin{align}
    \braket{C_Mv,w}&=\braket{v,C_{M^{-1}}w}\,,\label{eq:CM-rel}\\[2mm]
    \braket{D_Mv,w}&=\begin{cases}
    -\braket{D_{M^{-1}}w,v} & \normalfont{\textbf{(bosons)}}\\[1mm]
    +\braket{D_{M^{-1}}w,v} & \normalfont{\textbf{(fermions)}}
    \end{cases}\,,\label{eq:DM-rel}\\
    \braket{Z_Mv,w}&=\begin{cases}
    +\braket{Z_Mw,v} & \normalfont{\textbf{(bosons)}}\\
    -\braket{Z_Mw,v} & \normalfont{\textbf{(fermions)}}
    \end{cases}\,.\label{eq:ZM-symmetry}
\end{align}
where the inverse metric $g$ and symplectic form $\omega$ were introduced in~\eqref{eq:gomega}.
\end{lemma}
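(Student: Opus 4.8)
The plan is to reduce all three identities to the transformation behaviour of the two real bilinear forms $g$ and $\omega$ under $M\in\mathcal{G}$ and then recombine them into $\braket{v,w}=g(v,w)+\ii\omega(v,w)$. The ingredients I would set up first are the Kähler bridge relations $g(v,w)=\omega(v,Jw)$ and $\omega(v,w)=-g(v,Jw)$, which are immediate from $J=\Omega G^{-1}=-G\Omega^{-1}$ together with the forms in \eqref{eq:gomega}, as well as the transpose behaviour of the maps involved: $J$ is antisymmetric with respect to both $g$ and $\omega$, a symplectic $M$ preserves $\omega$ (so that $\omega(Mv,w)=\omega(v,M^{-1}w)$) while an orthogonal $M$ preserves $g$ (so that $g(Mv,w)=g(v,M^{-1}w)$), and hence $JMJ$ has transpose $JM^{-1}J$ with respect to the relevant form. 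The recurring device is that, once one real-form version of an identity is known, the Kähler bridge converts it into the other at the cost of pushing a single $J$ through $C_{M^{-1}}$ or $D_{M^{-1}}$, which respectively commute and anticommute with $J$ by construction.

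For \eqref{eq:CM-rel} I would compute the relevant transpose of $C_M=\tfrac{1}{2}(M-JMJ)$. Using the transpose of $JMJ$ and the antisymmetry of $J$, the $\omega$-transpose (bosons) respectively $g$-transpose (fermions) of $C_M$ equals $C_{M^{-1}}=\tfrac{1}{2}(M^{-1}-JM^{-1}J)$; this settles the imaginary part (bosons) respectively real part (fermions) of $\braket{C_Mv,w}=\braket{v,C_{M^{-1}}w}$, and the remaining part follows from the bridge together with $[C_{M^{-1}},J]=0$. Identity \eqref{eq:DM-rel} is obtained in the same manner from $D_M=\tfrac{1}{2}(M+JMJ)$, whose relevant transpose is $D_{M^{-1}}$; the opposite-sign behaviour for bosons and fermions appears because the bridge now has to move a $J$ past $D_{M^{-1}}$, which anticommutes with $J$, and because interchanging the two arguments costs a sign in $\omega$ but not in $g$.

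The substantive step is \eqref{eq:ZM-symmetry}, which I would deduce from the first two identities rather than by a direct transpose computation, since $Z_M=C_M^{-1}D_M$ is antilinear and mixes $M$ with $M^{-1}$. First I rewrite $\braket{Z_Mv,w}=\braket{C_M^{-1}D_Mv,w}=\braket{D_Mv,C_{M^{-1}}^{-1}w}$, using that the adjoint of $C_M^{-1}$ is $C_{M^{-1}}^{-1}$ by \eqref{eq:CM-rel}; applying \eqref{eq:DM-rel} then turns this into $\mp\braket{D_{M^{-1}}C_{M^{-1}}^{-1}w,v}$. The key is to show $D_{M^{-1}}C_{M^{-1}}^{-1}=-Z_M$: writing $D_{M^{-1}}=C_{M^{-1}}Z_{M^{-1}}$ and using \eqref{eq:num5} in the form $Z_{M^{-1}}=-C_MZ_MC_M^{-1}$ recasts this expression as $-(C_{M^{-1}}C_M)Z_M(C_{M^{-1}}C_M)^{-1}$, and the conjugating factor $C_{M^{-1}}C_M=(\id-Z_M^2)^{-1}$, supplied by the fourth relation of Lemma~\ref{LM:ZProperties}, is a function of $Z_M$ and hence commutes with it, making the conjugation trivial. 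Substituting back and pulling the real factor $-1$ out of the inner product flips the overall sign relative to \eqref{eq:DM-rel}, which produces $+$ for bosons and $-$ for fermions as claimed. I expect the sign bookkeeping in this last step — simultaneously tracking antilinearity, the antisymmetry of $\omega$, and the $J$-commutations forced by the bridge — to be the only genuinely delicate point, the earlier computations being routine.
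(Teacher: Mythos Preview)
Your proposal is correct and follows essentially the same approach as the paper: reduce \eqref{eq:CM-rel} and \eqref{eq:DM-rel} to the invariance of $\omega$ (bosons) or $g$ (fermions) under $M$, then use the K\"ahler bridge $g(v,w)=\omega(v,Jw)$ together with $[C_{M^{-1}},J]=0$ and $\{D_{M^{-1}},J\}=0$ to obtain the other real part; finally deduce \eqref{eq:ZM-symmetry} from the first two identities via Lemma~\ref{LM:ZProperties}.

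The only noteworthy variation is in the $Z_M$ step. The paper substitutes $w=C_{M^{-1}}w'$ and moves $C_{M^{-1}}$ to the \emph{left} slot as $C_M$, arriving at $\braket{D_Mv,w'}$ and then, after applying \eqref{eq:DM-rel}, using only \eqref{eq:num5} (with $M\to M^{-1}$) in the form $C_{M^{-1}}Z_{M^{-1}}=-Z_MC_{M^{-1}}$ to finish. You instead move $C_M^{-1}$ to the \emph{right} slot and are left with showing $D_{M^{-1}}C_{M^{-1}}^{-1}=-Z_M$; your detour through $C_{M^{-1}}C_M=(\id-Z_M^2)^{-1}$ works, but note that the same identity follows in one line from \eqref{eq:num5} with $M\to M^{-1}$, namely $D_{M^{-1}}=C_{M^{-1}}Z_{M^{-1}}=-Z_MC_{M^{-1}}$, so the fourth relation of Lemma~\ref{LM:ZProperties} is not actually needed.
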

\begin{proof}
For~\eqref{eq:CM-rel} and~\eqref{eq:DM-rel}, we treat bosons and fermions separately. Note that we use $J\in\mathcal{G}$ due to~\eqref{eq:Compatibility}.\\
\textbf{Properties of $C_M$ and $D_M$ for bosons.} We have $\omega(Mv,w)=\omega(v,M^{-1}w)$ for symplectic $M$ which, in combination with $J^2=-\id$, yields
\begin{align}
  \label{eq:omegaequation}
  \hspace{-1mm}\omega((M\pm JMJ)v,Jw)
  &=\omega(v,(M^{-1}\pm JM^{-1}J)Jw)\\
  &=\mp\omega(v,J(M^{-1}\pm JM^{-1}J)w).\nonumber
\end{align}
  Using $g(v,w)=\omega(v,Jw)$ and the symmetry of $g$, the last equality translates into
\begin{align}
  g(C_Mv,w)
  &=g(v,C_{M^{-1}}w)\\
  g(D_Mv,w)
  &=-g(v,D_{M^{-1}}w)
   =-g(D_{M^{-1}}w,v).
\end{align}
  If we evaluate the first line of \eqref{eq:omegaequation} and perform the substitution $w\to-Jw$ we similarly obtain
\begin{align}
  \omega(C_Mv,w)
  &=\omega(v,C_{M^{-1}}w)\\
  \omega(D_Mv,w)
  &=\omega(v,D_{M^{-1}}w)
   =-\omega(D_{M^{-1}}w,v).
\end{align}
Combining all these equalities proves~\eqref{eq:CM-rel} and~\eqref{eq:DM-rel} for bosons.\\
\textbf{Properties of $C_M$ and $D_M$ for fermions.} We have $g(Mv,w)=g(v,M^{-1}w)$ for orthogonal $M$ and $J^2=-\id$ yielding
\begin{align}
  \label{eq:gequation}
  g((M\pm JMJ)v,Jw)
  &=g(v,(M^{-1}\pm JM^{-1}J)Jw)\\
  &=\mp g(v,J(M^{-1}\pm JM^{-1}J)w).\nonumber
\end{align}
Using $\omega(v,w)=-g(v,Jw)$ and the anti-symmetry of $\omega$, this implies
\begin{align}
  \omega(C_Mv,w)
  &=\omega(v,C_{M^{-1}}w)\\
  \omega(D_Mv,w)
  &=-\omega(v,D_{M^{-1}}w)
   =\omega(D_{M^{-1}}w,v).
\end{align}
On the other hand, evaluating the first line of \eqref{eq:gequation} for $w\to-Jw$, combined with the symmetry of $g$ leads to
\begin{align}
  g(C_Mv,w)
  &=g(v,C_{M^{-1}}w)\\
  g(D_Mv,w)
  &=g(D_{M^{-1}}w,v).
\end{align}
Combining all these equalities proves~\eqref{eq:CM-rel} and~\eqref{eq:DM-rel} for fermions.\\
\textbf{Properties of $Z_M$.} We now use the previous findings to prove~\eqref{eq:ZM-symmetry} for bosons and fermions simultaneously. Defining $w=C_{M^{-1}}w'$, we compute
\begin{align}
\begin{split}
    \braket{Z_Mv,w}&=\braket{Z_Mv,C_{M^{-1}}w'}=\braket{C_MZ_Mv,w'}\\
    &=\braket{D_Mv,w'}=\mp \braket{D_{M^{-1}}w',v}\\
    &=\mp \braket{C_{M^{-1}}Z_{M^{-1}}w',v}\\
    &=\pm \braket{Z_MC_{M^{-1}}w',v}=\pm \braket{Z_{M}w,v}\,,
\end{split}
\end{align}
where we used~\eqref{eq:num5} to get in the second-last line and the upper sign refers to bosons, while the lower sign refers to fermions.
\end{proof}

\subsection{Complex linear group (complex representation)}
In the previous section, we discussed the group $\mathrm{GL}(N,\mathbb{C})$ and Lie algebra $\mathfrak{gl}(N,\mathbb{C})$ as the set of invertible real $2N$-by-$2N$ matrices that commute with a given complex structure $J$. Therefore, computing traces or determinants in this representation will yield real numbers. The goal of the present section is to establish a structure-preserving bijection between such real $2N$-by-$2N$ matrices and complex $N$-by-$N$ matrices, allowing us to define complex traces and determinants.

Given a $2N$-dimensional real vector space $V$ with a complex structure $J: V\to V$, we consider linear maps $K: V\to V$ that commute with $J$, \ie $[K,J]=0$, which also implies $C_K=K$. The matrix representation of $K$, with respect to a basis where $J$ takes the standard form~\eqref{eq:J-standard-form}, has the block structure
\begin{align}
    K=C_K=\left(\begin{array}{c|c}
    K_1     &  K_2\\
    \hline
    -K_2    & K_1
    \end{array}\right)\,.\label{eq:dec-form}
\end{align}
Noting that we would like to think of $J$ as corresponding to the imaginary unit $\ii$, this allows us to identify the real $2N$-by-$2N$ matrix $K$ with the complex $N$-by-$N$ matrix
\begin{align}
    \label{eq:ComplexDecomposition}
    \overline{K}=K_1+\ii K_2\,.
\end{align}
This identification is an isomorphism between the respective matrix spaces, \ie between real $2N$-by-$2N$ matrices commuting with $J$ and complex $N$-by-$N$ matrices. Indeed, a straightforward computation yields $\overline{\id}=\id$ and $\overline{J}=\ii\id$ as well as
\begin{align}
    \label{eq:ComplexDecompositionProperties}
    \overline{\alpha K}=\alpha \overline{K}\,,\,\overline{K+K'}=\overline{K}+\overline{K'}\,,\, \overline{KK'}=\overline{K}\,\overline{K'}\,,
\end{align}
where $\alpha\in\mathbb{R}$. We note that the correspondence also preserves the invertibility of a matrix and that $\overline{K^{-1}}=\overline{K}^{-1}$ (see also Corollary~\ref{eq:DetSquared}).

The exact form of the complex matrix $\overline{K}$ for a given $K$ is not unique, but rather depends on the specific basis we use, as long as $J$ takes the standard form~\eqref{eq:J-standard-form}. The form of $J$ is preserved when changing basis by applying any invertible map $M: V\to V$, such that $MJM^{-1}=J$ which is equivalent to $[M,J]=0$. Therefore, $\overline{K}$ is only defined up to arbitrary basis changes $\overline{K}\to \overline{M}\,\overline{K}\,\overline{M}^{-1}$ for an invertible complex $N$-by-$N$ matrix $\overline{M}$.

The identifications \eqref{eq:dec-form} and \eqref{eq:ComplexDecomposition} allow us to define the complex determinant and trace
\begin{align}
    \overline{\det}(K)=\det(\overline{K})\quad\text{and}\quad \overline{\mathrm{Tr}}(K)=\mathrm{Tr}(\overline{K})\,,\label{eq:detbar}
\end{align}
which we now show to be basis-independent.

\begin{lemma}
For any linear map $K: V\to V$, the above defined determinant $\overline{\det}(K)$ and trace $\overline{\mathrm{Tr}}(K)$ are uniquely defined.
\end{lemma}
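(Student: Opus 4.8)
The plan is to reduce the claim to the elementary fact that the determinant and trace of a complex matrix are invariant under similarity transformations. By the definitions in \eqref{eq:detbar}, the quantities $\overline{\det}(K)$ and $\overline{\mathrm{Tr}}(K)$ are computed from the complex $N$-by-$N$ matrix $\overline{K}$ obtained through \eqref{eq:dec-form} and \eqref{eq:ComplexDecomposition}, which is defined for the maps $K$ commuting with $J$ (so that $C_K=K$). The only source of non-uniqueness is the choice of basis in which $J$ assumes the standard form \eqref{eq:J-standard-form}, so it suffices to show that $\det(\overline{K})$ and $\mathrm{Tr}(\overline{K})$ do not depend on this choice.

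First I would recall, as already established in the surrounding text, that two bases both putting $J$ into the form \eqref{eq:J-standard-form} are related by an invertible map $M: V\to V$ satisfying $MJM^{-1}=J$, equivalently $[M,J]=0$. Under such a change of basis the representative of $K$ transforms by the similarity $K\mapsto MKM^{-1}$. Since $K$, $M$, and $M^{-1}$ all commute with $J$, I can invoke the isomorphism properties \eqref{eq:ComplexDecompositionProperties} together with $\overline{M^{-1}}=\overline{M}^{-1}$ to obtain $\overline{MKM^{-1}}=\overline{M}\,\overline{K}\,\overline{M}^{-1}$, confirming that the complex representative is determined only up to conjugation by the invertible complex matrix $\overline{M}$.

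The final step is then immediate: for any invertible complex matrix $\overline{M}$ one has $\det(\overline{M}\,\overline{K}\,\overline{M}^{-1})=\det(\overline{K})$ by multiplicativity of the determinant, and $\mathrm{Tr}(\overline{M}\,\overline{K}\,\overline{M}^{-1})=\mathrm{Tr}(\overline{K})$ by cyclicity of the trace. Hence $\overline{\det}(K)$ and $\overline{\mathrm{Tr}}(K)$ are independent of the chosen basis and therefore uniquely defined.

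I do not expect a genuine obstacle here; the entire content is that the residual ambiguity in $\overline{K}$ is precisely conjugation by an element of $\mathrm{GL}(N,\mathbb{C})$, which both invariants are insensitive to. The only point demanding a little care is to verify that the basis changes preserving the form of $J$ act on $\overline{K}$ as honest similarities in the complex representation, which follows directly from the multiplicativity $\overline{KK'}=\overline{K}\,\overline{K'}$ recorded in \eqref{eq:ComplexDecompositionProperties} rather than requiring any new computation.
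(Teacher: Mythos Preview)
Your proposal is correct and follows essentially the same approach as the paper: identify the residual basis freedom as conjugation by an invertible $M$ commuting with $J$, use the multiplicativity \eqref{eq:ComplexDecompositionProperties} to get $\overline{MKM^{-1}}=\overline{M}\,\overline{K}\,\overline{M}^{-1}$, and conclude by similarity-invariance of the complex determinant and trace. If anything, you are slightly more explicit than the paper in spelling out why $\det$ and $\mathrm{Tr}$ survive conjugation.
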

\begin{proof}
The block form of $K$ follows directly from writing out $KJ=JK$ in blocks. Having chosen a basis, where $J$ takes the standard form~\eqref{eq:J-standard-form}, we can move to another such basis by applying an invertible linear map $M: V\to V$ that preserves $J$, \ie $MJM^{-1}=J$. This is equivalent to $[M,J]=0$, so that the same block decomposition applies to $M$, which we can now explicitly identify with a group element $\overline{M}\in\mathrm{GL}(N,\mathbb{C})$. The general properties \eqref{eq:ComplexDecompositionProperties} imply $\overline{MKM^ {-1}}=\overline{M}\,\overline{K}\,\overline{M}^{-1}$, from which it follows that $\overline{\det}$ and $\overline{\mathrm{Tr}}$ are basis-independent.
\end{proof}

We also have the following relation between the eigenvectors of $K$ and those of $\overline{K}$.
\begin{lemma}\label{lem:eigenvaluesMbar}
    Let $K$ be a complex linear map, i.e.\ $[K,J]=0$. If $\overline{K}$ has eigenvalues $\{\lambda_i\}$ with algebraic multiplicities $\mu_i$, then $K$ will have as eigenvalues the pairs $\{\lambda_i,\lambda_i^*\}$, where the eigenvalues $\lambda_i$ and $\lambda_i^*$ each have algebraic multiplicity $\mu_i$. In particular if $\lambda_i\in \mathbb{R}$ then $K$ will have an eigenvalue $\lambda_i$ with even multiplicity $2\mu_i$.
\end{lemma}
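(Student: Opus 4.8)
The plan is to complexify and diagonalize the action of $J$, which turns the statement into a factorization of the characteristic polynomial of $K$. Since $[K,J]=0$, the map $K$ preserves the eigenspaces of $J$. Viewing $K$ as a complex-linear map on $V^{\mathbb{C}}=\mathbb{C}^{2N}$, recall that $J$ in the standard form~\eqref{eq:J-standard-form} has eigenvalues $\pm\ii$, each with multiplicity $N$, with eigenspaces
\begin{align}
    V_{\pm}=\{(x,\pm\ii x)\,|\,x\in\mathbb{C}^N\}\,.
\end{align}
Because $K$ commutes with $J$, it maps $V_+$ into $V_+$ and $V_-$ into $V_-$, so in a basis adapted to this splitting $K$ is block diagonal. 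The characteristic polynomial of $K$ then factors as the product of those of the two restrictions $K|_{V_+}$ and $K|_{V_-}$, and the whole lemma follows once I identify these restrictions.

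First I would carry out the elementary computation of $K|_{V_\pm}$ using the block form~\eqref{eq:dec-form}. Writing a general element of $V_+$ as $(x,\ii x)$ and applying $K$, one finds
\begin{align}
    K\begin{pmatrix} x\\ \ii x\end{pmatrix}=\begin{pmatrix} (K_1+\ii K_2)x\\ \ii(K_1+\ii K_2)x\end{pmatrix}\,,
\end{align}
so that under the identification $x\leftrightarrow (x,\ii x)$ the restriction $K|_{V_+}$ acts exactly as $\overline{K}=K_1+\ii K_2$. The analogous computation on $V_-$, with elements $(x,-\ii x)$, yields $K|_{V_-}=K_1-\ii K_2=\overline{K}^*$, the entrywise complex conjugate of $\overline{K}$ (recall $K_1,K_2$ are real). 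Equivalently, the explicit change of basis $P=\left(\begin{smallmatrix}\id&\id\\ \ii\id&-\ii\id\end{smallmatrix}\right)$ gives $P^{-1}KP=\overline{K}\oplus\overline{K}^*$.

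The conclusion then follows from the factorization
\begin{align}
    \det(K-\lambda\,\id)=\det(\overline{K}-\lambda\,\id)\,\det(\overline{K}^*-\lambda\,\id)\,.
\end{align}
The first factor contributes precisely the eigenvalues $\lambda_i$ with algebraic multiplicities $\mu_i$. Since $\overline{K}^*$ is the complex conjugate matrix, its characteristic polynomial is the complex conjugate of that of $\overline{K}$, so its roots are the $\lambda_i^*$, again with multiplicities $\mu_i$. Taking the product exhibits the pairs $\{\lambda_i,\lambda_i^*\}$, each with multiplicity $\mu_i$, as the spectrum of $K$; and when $\lambda_i\in\mathbb{R}$ the two factors share the same root, producing algebraic multiplicity $2\mu_i$ as claimed.

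The only real care needed is the bookkeeping of algebraic (rather than geometric) multiplicities, but this is automatic once the argument is phrased through the characteristic polynomial, since the factorization above tracks algebraic multiplicities exactly and no separate Jordan-form analysis is required. The mildest subtlety, and the step I would double-check most carefully, is verifying that the $V_-$ block is $\overline{K}^*$ and not $\overline{K}$ — that is, keeping the signs of $\ii$ straight in the block computation.
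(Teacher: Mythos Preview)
Your proof is correct and rests on the same key observation as the paper's: the complexified phase space splits into the $\pm\ii$ eigenspaces $V_\pm$ of $J$, and $K$ restricts to $\overline{K}$ on $V_+$ and to $\overline{K}^*$ on $V_-$. The difference is in packaging. The paper works directly with generalized eigenvectors, showing explicitly that $(v_l,\ii v_l)\in\ker(K-\lambda\id)^n$ and $(v_l^*,-\ii v_l^*)\in\ker(K-\lambda^*\id)^n$, and then concludes by a dimension count. You instead record the explicit similarity $P^{-1}KP=\overline{K}\oplus\overline{K}^*$ and read off the factorization of the characteristic polynomial. Your route is a bit more economical and automatically handles algebraic multiplicities without any separate Jordan-block bookkeeping; the paper's version is more hands-on but makes the generalized eigenvectors of $K$ visible, which can be useful if one later wants to manipulate them directly. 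Either way, the content is the same.
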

\begin{proof}
    Consider a eigenvalue $\lambda$ of $\overline{K}$ with algebraic multiplicity $\mu$. For sufficiently large $n$, $\ker(\overline{K}-\lambda\id)^n$ will have dimension $\mu$ and we denote a basis of this kernel (\ie the generalized eigenvectors) as $v_l$ with $l=1,\dots,\mu$. We now observe that the vectors $(v_l,\ii v_l)$ are in $\ker(K-\lambda\id)^n$ and the vectors $(v_l^*,-\ii v_l^*)$ are in $\ker(K-\lambda^*\id)^n$. To see this consider 
    \begin{align}
        &\left[\left(\begin{array}{c|c}
        K_1     &  K_2\\
        \hline
        -K_2    & K_1
        \end{array}\right) -\lambda \id \right]^n
        \left(\begin{array}{c}
         v_l\\
         \ii v_l
         \end{array}\right) \nonumber\\
        & \hspace{20mm}=\left(\begin{array}{c}
         (\overline{K}-\lambda \id)^n \, v_l\\
         \ii\, (\overline{K}-\lambda \id)^n \, v_l
         \end{array}\right) =0 \,,
    \end{align}
    where we have repeatedly used $K_1 v+\ii K_2 v=\overline{K}v$. Similarly, using $K_1 v-\ii K_2 v=\overline{K}^* v$ we have
    \begin{align}
        &\left[\left(\begin{array}{c|c}
        K_1     &  K_2\\
        \hline
        -K_2    & K_1
        \end{array}\right) -\lambda^* \id \right]^n
        \left(\begin{array}{c}
         v^*\\
         -\ii v^*
         \end{array}\right) \nonumber\\
         &\hspace{20mm}=\left(\begin{array}{c}
         (\overline{K}^*-\lambda^* \id)^n \, v^*\\
         \ii\, (\overline{K}^*-\lambda^* \id)^n \, v^*
         \end{array}\right) \nonumber\\
         &\hspace{20mm}= \left(\begin{array}{c}
         (\overline{K}-\lambda \id)^n \, v\\
         \ii\, (\overline{K}-\lambda \id)^n \, v
         \end{array}\right)^* =0 \,. 
    \end{align}
    It is straightforward to see that the vectors $(v_l,\ii v_l)$ and $(v_l^*,-\ii v_l^*)$ are linearly independent. This means that $K$ has $\mu$ independent generalized eigenvectors for each of the eigenvalues $\lambda$ and $\lambda^*$. On dimensional grounds it is clear that $K$ cannot have further eigenvalues and eigenvectors that are not of this form.
\end{proof}

As an immediate consequence we have the following.
\begin{corollary}\label{lem:determinant}
  \label{eq:DetSquared}
  $\det K=\bigl|\overline{\det}(K)\bigr|^2$ and for $K$ with purely non-negative eigenvalues $\overline{\det}(K)=\sqrt{\det{(K)}}$.
\end{corollary}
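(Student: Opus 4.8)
The plan is to read both identities straight off Lemma~\ref{lem:eigenvaluesMbar}, which already encodes the full spectrum of $K$ in terms of that of $\overline{K}$. First I would recall that for a finite-dimensional linear map the determinant equals the product of the eigenvalues counted with algebraic multiplicity. By the lemma, if $\overline{K}$ has eigenvalues $\lambda_i$ with multiplicities $\mu_i$, then $K$ has the conjugate pairs $\lambda_i,\lambda_i^*$ as eigenvalues, each still carrying multiplicity $\mu_i$. Pairing each factor with its conjugate then gives
\begin{align}
    \det K=\prod_i\lambda_i^{\mu_i}(\lambda_i^*)^{\mu_i}=\prod_i|\lambda_i|^{2\mu_i}=\Bigl|\prod_i\lambda_i^{\mu_i}\Bigr|^2=|\det\overline{K}|^2,
\end{align}
and recognizing $\det\overline{K}=\overline{\det}(K)$ establishes the first claim. (Equivalently, one can note that a complex change of basis block-diagonalizes $K$ into $\overline{K}\oplus\overline{K}^*$, so that $\det K=\det\overline{K}\,\overline{\det\overline{K}}=|\det\overline{K}|^2$; but since the corollary is advertised as an immediate consequence of the eigenvalue lemma, the product-over-eigenvalues argument is the natural route.)

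For the second statement I would argue that the hypothesis that $K$ has purely non-negative eigenvalues forces every $\lambda_i$ to be real and non-negative: any non-real $\lambda_i$ would contribute the non-real pair $\lambda_i,\lambda_i^*$ to the spectrum of $K$, contradicting the assumption. With all $\lambda_i\geq0$ real, the lemma states that each appears in $K$ with even multiplicity $2\mu_i$, so $\det K=\prod_i\lambda_i^{2\mu_i}\geq0$ while $\overline{\det}(K)=\prod_i\lambda_i^{\mu_i}\geq0$. Taking the principal (non-negative) square root of $\det K$ therefore returns exactly $\overline{\det}(K)$, giving $\overline{\det}(K)=\sqrt{\det K}$.

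The only real subtlety, and the point I would flag explicitly, is the choice of branch of the square root: it is precisely the non-negativity of the spectrum that makes both $\det K$ and $\overline{\det}(K)$ non-negative reals, rendering $\sqrt{\,\cdot\,}$ unambiguous and fixing the correct sign. Everything else is a routine product-over-eigenvalues computation, with no basis-dependence to worry about, since the preceding lemmas already guarantee that $\overline{\det}$ is well defined.
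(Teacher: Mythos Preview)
Your proof is correct and follows essentially the same route as the paper: both read the product of eigenvalues of $K$ off Lemma~\ref{lem:eigenvaluesMbar} and pair conjugates to obtain $|\overline{\det}(K)|^2$, then specialize to non-negative spectrum for the square-root statement. If anything, you are slightly more careful than the paper in explicitly arguing that non-negativity of the spectrum of $K$ forces the $\lambda_i$ of $\overline{K}$ to be non-negative reals; the paper's proof tacitly identifies the two hypotheses.
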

\begin{proof}
    From lemma~\ref{lem:eigenvaluesMbar} it follows that $\det K = (\lambda_1 \lambda_1^*)^{\mu_1} (\lambda_2 \lambda_2^*)^{\mu_2} \cdots (\lambda_d \lambda_d^*)^{\mu_d} = |\lambda_1 ^{\mu_1} \lambda_2 ^{\mu_2} \cdots \lambda_d^{\mu_d}|^2=|\det \overline{K}|^2$, where $\lambda_i$ for $i=1,\dots,d$ are the eigenvalues of $\overline{K}$ with multiplicities $\mu_i$. If these eigenvalues are all non-negative, then the previous expression reduces to $\det K = (\lambda_1 ^{\mu_1} \lambda_2 ^{\mu_2} \cdots \lambda_d^{\mu_d})^2=(\overline{\det} K)^2$ where we can take the square root because $\overline{\det} K = \lambda_1 ^{\mu_1} \lambda_2 ^{\mu_2} \cdots \lambda_d^{\mu_d} \geq 0$.
\end{proof}

\begin{lemma}\label{lem:coshKplus}
Given $K_+\in\mathfrak{u}_\perp(N)$ for bosons and $K_+\in\mathcal{I}_{\mathfrak{u}_\perp(N)}$ for fermions, $\cosh{K_+}$ commutes with $J$ and we have $\overline{\det}(\cosh{K_+})=\sqrt{\det(\cosh{K_+})}>0$.
\end{lemma}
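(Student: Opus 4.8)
The plan is to verify the commutation claim by an elementary parity argument and then to reduce the determinant identity to Corollary~\ref{lem:determinant} by showing that $\cosh K_+$ has purely positive eigenvalues in both the bosonic and fermionic cases.

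First I would handle the commutation. Since $K_+\in\mathfrak{u}_\perp(N)$ satisfies $\{K_+,J\}=0$, one immediately gets $K_+^2 J = K_+(-JK_+) = -(-JK_+)K_+ = JK_+^2$, so every even power $K_+^{2n}$ commutes with $J$. Because $\cosh K_+=\sum_{n\ge 0}K_+^{2n}/(2n)!$ is a convergent power series involving only even powers, it follows that $[\cosh K_+,J]=0$. In particular $\cosh K_+$ is complex linear, so that $C_{\cosh K_+}=\cosh K_+$ and the complex determinant $\overline{\det}(\cosh K_+)$ from~\eqref{eq:detbar} is well defined.

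Next I would analyze the spectrum of $\cosh K_+$, treating the two cases separately. For bosons, $K_+$ is symmetric with respect to the positive-definite metric $G$ associated to $J$, hence diagonalizable with real eigenvalues $\mu$; consequently $\cosh K_+$ has eigenvalues $\cosh\mu\ge 1>0$, with no restriction on $K_+$ needed. For fermions, $K_+$ is antisymmetric with respect to $G$ and therefore has purely imaginary eigenvalues; the restriction $K_+\in\mathcal{I}_{\mathfrak{u}_\perp(N)}$, \ie $\lVert K_+\rVert<\tfrac{\pi}{2}$ as in~\eqref{eq:I-u-perp}, forces these to be of the form $\pm\ii\lambda$ with $\lambda\in[0,\tfrac{\pi}{2})$, so that $\cosh(\pm\ii\lambda)=\cos\lambda\in(0,1]$. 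In either case the eigenvalues of $\cosh K_+$ are strictly positive real numbers.

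Finally, having established that $\cosh K_+$ commutes with $J$ and has purely non-negative (indeed strictly positive) eigenvalues, I would invoke Corollary~\ref{lem:determinant}, which gives $\overline{\det}(\cosh K_+)=\sqrt{\det(\cosh K_+)}$, with both quantities strictly positive since each eigenvalue contributes a positive factor. The step I expect to require the most care is the fermionic spectral analysis: it is precisely the strict bound $\lambda<\tfrac{\pi}{2}$ coming from membership in $\mathcal{I}_{\mathfrak{u}_\perp(N)}$ that excludes the degenerate case $\lambda=\tfrac{\pi}{2}$, where $\cos\lambda=0$ would make the determinant vanish and destroy the strict positivity; the bosonic case, by contrast, needs no such restriction.
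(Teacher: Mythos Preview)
Your proof is correct and follows essentially the same route as the paper's own argument: the even-power parity observation for the commutation with $J$, the case split into real eigenvalues (bosons) versus imaginary eigenvalues bounded by $\tfrac{\pi}{2}$ (fermions) to force positivity of the spectrum of $\cosh K_+$, and the final appeal to Corollary~\ref{lem:determinant}. Your exposition is slightly more explicit in justifying why $K_+$ has real respectively imaginary spectrum via its (anti)symmetry with respect to $G$, but the substance is identical.
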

\begin{proof}
We recall that $\cosh{K_+}=\sum^\infty_{l=0}\frac{K_+^{2l}}{(2l)!}$ only consists of even powers of $K_+$. As $K_+$ itself anti-commutes with $J$, the even powers ensure that $(\cosh{K_+})J=J(\cosh{K_+})$. This implies that $\overline{\det}(\cosh{K_+})$ is well-defined. For bosons, all eigenvalues of $K_+$ are real, which implies that all eigenvalues $\cosh{K_+}$ are positive. For fermions, we know that the eigenvalues $\pm\ii\lambda$ of $K_+\in \mathcal{I}_{\mathfrak{u}_\perp(N)}$ are imaginary with modulus $\lambda<\frac{\pi}{2}$. The eigenvalues of $\cosh{K_+}$ are then given by $\cos{\lambda}$ which are necessarily also all positive (as $\cos{\lambda}>0$ for $\lambda\in[0,\frac{\pi}{2})$). With all eigenvalues of $\cosh{K_+}$ being positive and real, we have $\overline{\det}(\cosh{K_+})=\sqrt{\det(\cosh{K_+})}>0$ due to Corollary~\ref{lem:determinant}.
\end{proof}

We note that $C_M$ by construction commutes with $J$, so that $\overline{\mathrm{det}}(C_M)$ and $\overline{\mathrm{tr}}(C_M)$ are always well-defined, regardless of the choice of $M$.

\subsection{Circle and cocycle function}\label{sec:circle-and-cocycle}
The goal of this subsection is to introduce a so-called circle function $\varphi: \mathcal{G}\to\mathrm{U}(1)$, which is a mapping from $\mathcal{G}$ to $\mathrm{U}(1)$. In the case of fermions, we will see that we can define $\varphi$ only \emph{almost} everywhere on $\mathcal{G}$. In a second step, we will then introduce the so-called cocycle function $\eta: \mathcal{G}\times\mathcal{G}\to\mathbb{R}$, which encodes the difference between $\varphi(M_1M_2)$ and $\varphi(M_1)\varphi(M_2)$ and plays a crucial role for modelling the double cover of $\mathcal{G}$.

For a reference complex structure $J$ that is compatible with $\Omega$ or $G$, respectively, we define $\varphi: \mathcal{G}\to\mathrm{U}(1)$ as
\begin{align}\label{eq:circle-function}
    \varphi(M)=\frac{\overline{\det}(C_M)}{|\overline{\det}(C_M)|}\,,
\end{align}
where $\overline{\mathrm{det}}(C_M)$ was introduced in~\eqref{eq:detbar}. This choice is motivated by the idea that gluing two copies of $\mathcal{G}$ to get its double cover $\widetilde{\mathcal{G}}$ requires some understanding of the winding number, as encoded by the fundamental group $\pi_1(\mathcal{G})$. For bosons, the circle function was introduced in~\cite{rawnsley2012universal}, because it induces an isomorphism of the fundamental groups, such that a closed curve in $\mathrm{Sp}(2N,\mathbb{R})$ is mapped to a closed curve in $\mathrm{U}(1)$ with the same winding number. For fermions,~\eqref{eq:circle-function} is not quite a circle function in the sense of~\cite{rawnsley2012universal}, because it does not induce an isomorphism between fundamental groups of $\mathrm{SO}(2N,\mathbb{R})$ and $\mathrm{U}(1)$ for $N>1$, as the two groups are different, \ie $\pi_1(\mathrm{SO}(2N,\mathbb{R}))=\mathbb{Z}_2$ and $\pi_1(\mathrm{U}(1))=\mathbb{Z}$. Moreover, lemma~\ref{lem:invertible-CM} implies that there exist group elements $M$ for fermions, for which $C_M$ is not invertible, such that the right hand side of~\eqref{eq:circle-function} is ill-defined due to $\overline{\det}(C_M)=0$. Despite these limitations, the fermionic circle function can still be used to determine if a loop $\gamma$ in $\mathrm{SO}(2N,\mathbb{R})$ can be contracted or not\footnote{We assume here that $\varphi$ is defined for all points on the loop $\gamma$.}, depending on whether the winding number of its image $\varphi(\gamma)$ has even or odd winding number in $\mathrm{U}(1)$. In summary, the circle function is designed to probe non-contractable directions in $\mathcal{G}$, which will later allow us to glue two copies of $\mathcal{G}$ together to get its double cover $\widetilde{\mathcal{G}}$.

\begin{lemma}
The circle function satisfies the following properties:
\begin{itemize}
    \item[(i)]  Given a complex $J$ and a group element $M\in\mathcal{G}$, such that the respective Cartan decomposition is given by $M=Tu$, we have
\begin{align}
    \varphi(M)=\overline{\det}(u)\,,
\end{align}
including $\varphi(\id)=1$. For fermions, $\varphi$ is only defined for $M\in\mathcal{G}$ with $\det(\id+\Delta_M)\neq 0$.
    \item[(ii)] For $u_1,u_2\in\mathrm{U}(N)$, we have
\begin{align}
    \varphi(u_1Mu_2)=\varphi(u_1)\varphi(M)\varphi(u_2)\,.\label{eq:phi-equivariance}
\end{align}
    \item[(iii)] The inverse $M^{-1}$ satisfies
    \begin{align}
    \varphi(M^{-1})=\varphi(M)^*\,.\label{eq:phi-inverse}
\end{align}
\end{itemize}
\end{lemma}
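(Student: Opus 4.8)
The plan is to prove the three parts in order, leaning throughout on the Cartan decomposition $M=Tu$ and the multiplicativity $\overline{\det}(AB)=\overline{\det}(A)\,\overline{\det}(B)$ for maps commuting with $J$ recorded in~\eqref{eq:ComplexDecompositionProperties}. For part~(i) I would first compute $C_M$ explicitly in terms of the Cartan factors. Writing $T=e^{K_+}$ with $K_+\in\mathfrak{u}_\perp(N)$, the anticommutation $\{K_+,J\}=0$ gives $JK_+=-K_+J$, hence $JT=T^{-1}J$ and $JTJ=T^{-1}J^2=-T^{-1}$; while $u\in\mathrm{U}(N)$ satisfies $[u,J]=0$. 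Substituting into $C_M=\tfrac{1}{2}(M-JMJ)$ and moving $u$ past $J$ yields
\begin{align}
C_M=\tfrac{1}{2}(Tu-JTuJ)=\tfrac{1}{2}(T+T^{-1})u=(\cosh K_+)\,u\,.
\end{align}
Both factors commute with $J$, so multiplicativity gives $\overline{\det}(C_M)=\overline{\det}(\cosh K_+)\,\overline{\det}(u)$. By Lemma~\ref{lem:coshKplus} the first factor is real and strictly positive, and $\overline{u}$ is unitary, so $|\overline{\det}(u)|=1$. Dividing by $|\overline{\det}(C_M)|$ therefore cancels the positive factor and isolates the phase, leaving $\varphi(M)=\overline{\det}(u)$; taking $T=u=\id$ gives $\varphi(\id)=1$. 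The unitarity of $\overline{u}$ I would obtain for free from Lemma~\ref{LM:Adjoint}: applying~\eqref{eq:CM-rel} to $u$ (for which $C_u=u$) reads $\braket{uv,w}=\braket{v,u^{-1}w}$, i.e.\ $\overline{u}^{\,\dagger}=\overline{u}^{-1}$.

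For the fermionic well-definedness, I would note that the argument above requires $C_M$ to be invertible. By Lemma~\ref{lem:invertible-CM} this holds iff $\Delta_{M^{-1}}$ has no eigenvalue $-1$. Since $\Delta_{M^{-1}}=M^{-1}\Delta_M^{-1}M$ is conjugate to $\Delta_M^{-1}$, its spectrum consists of the reciprocals of that of $\Delta_M$; as $-1$ is its own reciprocal, $-1\in\mathrm{spec}(\Delta_{M^{-1}})$ iff $-1\in\mathrm{spec}(\Delta_M)$. Hence the condition is exactly the stated $\det(\id+\Delta_M)\neq 0$, and on this locus the Cartan decomposition (with $K_+\in\mathcal{I}_{\mathfrak{u}_\perp(N)}$) exists and the above computation goes through.

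Part~(ii) is then a short corollary. Because $u_1,u_2$ commute with $J$, a direct substitution gives $C_{u_1Mu_2}=u_1 C_M u_2$, so multiplicativity together with $|\overline{\det}(u_i)|=1$ yields $\varphi(u_1Mu_2)=\overline{\det}(u_1)\,\varphi(M)\,\overline{\det}(u_2)$; applying part~(i) to the stabilizer elements (whose Cartan decomposition has $T=\id$) identifies $\overline{\det}(u_i)=\varphi(u_i)$, proving~\eqref{eq:phi-equivariance}. For part~(iii) the key observation is that~\eqref{eq:CM-rel} of Lemma~\ref{LM:Adjoint} expresses precisely that $C_{M^{-1}}$ is the adjoint of $C_M$ for the complex inner product $\braket{v,w}=g(v,w)+\ii\omega(v,w)$, i.e.\ $\overline{C_{M^{-1}}}=\overline{C_M}^{\,\dagger}$ in the complex representation. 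Since $\det(A^\dagger)=\det(A)^*$, this gives $\overline{\det}(C_{M^{-1}})=\overline{\det}(C_M)^*$, and dividing by the real modulus produces $\varphi(M^{-1})=\varphi(M)^*$.

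I expect the main obstacle to be the bookkeeping in part~(i): establishing $JTJ=-T^{-1}$ from $\{K_+,J\}=0$ and, most importantly, arguing cleanly that the factorization $C_M=(\cosh K_+)\,u$ places \emph{all} the phase into $\overline{\det}(u)$. This last step hinges entirely on Lemma~\ref{lem:coshKplus} guaranteeing that $\overline{\det}(\cosh K_+)$ is real and positive, which for fermions relies on the eigenvalue restriction $\lVert K_+\rVert<\tfrac{\pi}{2}$ defining $\mathcal{I}_{\mathfrak{u}_\perp(N)}$ in~\eqref{eq:I-u-perp}; outside this region $\cosh K_+$ could acquire negative or vanishing eigenvalues and the clean separation of phase would fail. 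Parts~(ii) and~(iii) are comparatively immediate once multiplicativity and the adjoint relation of Lemma~\ref{LM:Adjoint} are in hand.
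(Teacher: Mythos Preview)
Your proof is correct and, for parts~(i) and~(ii), proceeds essentially as the paper does: factor $C_M=(\cosh K_+)\,u$ via the Cartan decomposition, invoke Lemma~\ref{lem:coshKplus} to see that $\overline{\det}(\cosh K_+)>0$ carries no phase, and then use multiplicativity of $\overline{\det}$. You additionally supply two small justifications the paper leaves implicit---that $|\overline{\det}(u)|=1$ (via~\eqref{eq:CM-rel} applied to $u$), and that the invertibility conditions on $\Delta_{M^{-1}}$ and $\Delta_M$ coincide because $\Delta_{M^{-1}}=M^{-1}\Delta_M^{-1}M$.

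Part~(iii) is where you diverge. The paper re-runs the Cartan argument on $M^{-1}=u^{-1}T^{-1}$ to obtain $\overline{\det}(C_{M^{-1}})=\overline{\det}(u^{-1})\,\overline{\det}(\tfrac{T+T^{-1}}{2})$ and hence $\varphi(M^{-1})=\overline{\det}(u)^{-1}=\varphi(M)^*$, explicitly noting that this requires the Cartan decomposition to be unique. Your route via Lemma~\ref{LM:Adjoint}---observing that~\eqref{eq:CM-rel} says $\overline{C_{M^{-1}}}=\overline{C_M}^{\,\dagger}$, whence $\overline{\det}(C_{M^{-1}})=\overline{\det}(C_M)^*$---is a genuine alternative: it bypasses the Cartan decomposition entirely and works directly wherever $C_M$ is invertible, without needing to worry separately about uniqueness of $T$ and $u$. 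The paper's approach has the advantage of keeping everything within a single computational framework, while yours isolates the conjugation property as a structural consequence of the adjoint relation already established in Lemma~\ref{LM:Adjoint}.
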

\begin{proof}
We prove each statement:\\
(i) We plug $M=Tu$ into $C_M=\frac{M-JMJ}{2}$ and find
\begin{align}
    C_M&=\frac{Tu-JTuJ}{2}=\frac{T+T^{-1}}{2}u\,,
\end{align}
where we used $T^{-1}J=JT$. This implies
\begin{align}
    \overline{\det}(C_M)=\overline{\det}(\tfrac{T+T^{-1}}{2})\overline{\det}(u)\,.
\end{align}
Using $T=e^{K_+}$ allows us to evaluate $\overline{\det}(\frac{T+T^{-1}}{2})=\sqrt{\det(\cosh{K_+})}\geq 0$ using lemma~\eqref{lem:coshKplus}, which implies $\varphi(M)=\overline{\det}(u)$ whenever $K_+$ does not have any eigenvalues $\pm\ii \frac{\pi}{2}$, which is equivalent to requiring $\det(\id+\Delta_M)\neq 0$, as noticed in the context of~\eqref{eq:cartan-fermion-defined}.\\
(ii) Using the above decomposition also yields
\begin{align}
    \overline{\det}{C_{u_1Mu_2}}=\overline{\det}(u_1)\overline{\det}(\tfrac{T+T^{-1}}{2})\overline{\det}(u)\overline{\det}(u_2)\,,
\end{align}
from which $\varphi(u_1Mu_2)=\varphi(u_1)\varphi(M)\varphi(u_2)$ follows.\\
(iii) Using the arguments from (i) with $M^{-1}=u^{-1}T^{-1}$, we can compute
\begin{align}
    \overline{\det}(C_{M^{-1}})=\overline{det}(u^{-1})\overline{\det}(\tfrac{T+T^{-1}}{2})\,,
\end{align}
which implies $\varphi(M^{-1})=\overline{\det}(u^{-1})=\varphi(M)^*$, whenever the Cartan decomposition is unique.
\end{proof}

Whenever the relevant circle functions exist it can be shown that there is a unique continuous map $e^{\ii\eta(M_1,M_2)}$, such that
\begin{align}
    \varphi(M_1M_2)=\varphi(M_1)\varphi(M_2)e^{\ii\eta(M_1,M_2)}\,.\label{eq:cocycle-condition}
\end{align}
As is explained in~\cite{rawnsley2012universal}, actually even the map $\eta(M_1,M_2)$ itself is continuous for bosons (when $\mathcal{G}$ is the symplectic group). In contrast, when $N>1$ this is not true for the group $\mathrm{SO}(2N,\mathbb{R})$ relevant to fermions, where $\eta(M_1,M_2)$ may have jumps of $4\pi$.\footnote{The jumps of $4\pi$ are a symptom of the fact that the fundamental group $\mathrm{SO}(2N,\mathbb{R})$ with $N>1$ is $\mathbb{Z}_2$ rather than $\mathbb{Z}$. If there were no jumps and we could define a continuous function $\eta(M_1,M_2)$ without any jumps, we could use it to construct triple and higher covers of $\mathrm{SO}(2N,\mathbb{R})$, which cannot exist as the double cover is already universal. This is in contrast to $\mathrm{SO}(2)$ and $\mathrm{Sp}(2N,\mathbb{R})$, where the universal cover wraps around an (countably) infinite number of times (the universal cover of $\mathrm{U}(1)$ is the real line).} However, this particular nature of the jumps is sufficient to ensure that both $e^{\ii\eta(M_1,M_2)}$ and also $e^{\ii\eta(M_1,M_2)/2}$ are continuous on all of $\mathcal{G}\times\mathcal{G}$.

\begin{lemma}\label{lem:cocycle-properties}
The cocycle function satisfies:
\begin{align}
    \eta(M_1,M_2)&=\eta(u_1M_1,M_2u_2)\text{ for }u_1,u_2\in\mathrm{U}(N)\,,\label{eq:eta1}\\
    \eta(u,M)&=\eta(M,u)=0\text{ for }u\in\mathrm{U}(N)\,,\label{eq:eta2}\\
    \eta(M,M^{-1})&=0\,,\label{eq:eta3}\\
    \eta(M_1,M_2)&=-\eta(M_2^{-1},M_1^{-1})\,.\label{eq:eta4}
\end{align}
\end{lemma}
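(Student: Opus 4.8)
The plan is to reduce every identity to the defining relation \eqref{eq:cocycle-condition} together with the already-established equivariance \eqref{eq:phi-equivariance} and inversion \eqref{eq:phi-inverse} properties of the circle function, exploiting that all values of $\varphi$ lie in the \emph{abelian} group $\mathrm{U}(1)$ so that they may be freely rearranged and cancelled. Each manipulation will first yield the claimed identity at the level of the phase $e^{\ii\eta}$ (equivalently, modulo $2\pi$); the final step in each case is to upgrade this to an equality of the real-valued $\eta$ by a continuity-and-connectedness argument anchored at a base point.

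For \eqref{eq:eta1} I would evaluate $\varphi\bigl((u_1M_1)(M_2u_2)\bigr)=\varphi(u_1M_1M_2u_2)$ in two ways. Expanding the outer product via \eqref{eq:cocycle-condition} and then stripping the unitaries with \eqref{eq:phi-equivariance} gives $\varphi(u_1)\varphi(M_1)\varphi(M_2)\varphi(u_2)\,e^{\ii\eta(u_1M_1,\,M_2u_2)}$, whereas applying \eqref{eq:phi-equivariance} first and then \eqref{eq:cocycle-condition} to $\varphi(M_1M_2)$ gives $\varphi(u_1)\varphi(M_1)\varphi(M_2)\varphi(u_2)\,e^{\ii\eta(M_1,M_2)}$; cancelling the common $\mathrm{U}(1)$-factors yields $e^{\ii\eta(u_1M_1,M_2u_2)}=e^{\ii\eta(M_1,M_2)}$. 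Identity \eqref{eq:eta2} is the special case in which one argument of $\eta$ is already unitary: comparing $\varphi(uM)=\varphi(u)\varphi(M)$ (a consequence of \eqref{eq:phi-equivariance}) with \eqref{eq:cocycle-condition} forces $e^{\ii\eta(u,M)}=1$, and symmetrically $e^{\ii\eta(M,u)}=1$. For \eqref{eq:eta3} I would use $\varphi(MM^{-1})=\varphi(\id)=1$ together with \eqref{eq:phi-inverse}, which gives $\varphi(M)\varphi(M^{-1})=\varphi(M)\varphi(M)^*=1$, so that \eqref{eq:cocycle-condition} collapses to $e^{\ii\eta(M,M^{-1})}=1$. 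Finally, \eqref{eq:eta4} follows by computing $\varphi(M_2^{-1}M_1^{-1})=\varphi\bigl((M_1M_2)^{-1}\bigr)=\varphi(M_1M_2)^*$ via \eqref{eq:phi-inverse}, expanding both $\varphi(M_2^{-1}M_1^{-1})$ and $\varphi(M_1M_2)^*$ through \eqref{eq:cocycle-condition}, and cancelling to obtain $e^{\ii\eta(M_2^{-1},M_1^{-1})}=e^{-\ii\eta(M_1,M_2)}$.

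The main obstacle, and the step requiring genuine care, is precisely this last upgrade from a phase identity to an identity of real numbers, because for fermions $\eta$ is defined only up to jumps of $4\pi$ and is not itself continuous — only $e^{\ii\eta/2}$ is continuous on all of $\mathcal{G}\times\mathcal{G}$. My plan here is to pass to the continuous half-angle: from each boxed phase identity $e^{\ii\eta(A)}=e^{\ii\eta(B)}$ one obtains $e^{\ii\eta(A)/2}=\pm\,e^{\ii\eta(B)/2}$, and the sign is a continuous $\{\pm1\}$-valued function on a connected parameter space — namely $\mathrm{U}(N)\times\mathrm{U}(N)$ for \eqref{eq:eta1}, $\mathrm{U}(N)$ for \eqref{eq:eta2}, $\mathcal{G}$ for \eqref{eq:eta3}, and $\mathcal{G}\times\mathcal{G}$ for \eqref{eq:eta4} (each of these is connected for both bosons and fermions). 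Hence the sign is constant, and evaluating at the base point $u_i=\id$ respectively $M=\id$, where every identity reduces to the normalization $\eta(\id,\id)=0$, fixes it to $+1$; this promotes the phase identity to the stated equality. Since the fermionic $\varphi$ is defined only off a set of measure zero, I would first establish each identity on that dense open set and then extend it to all of $\mathcal{G}\times\mathcal{G}$ using the continuity of $e^{\ii\eta/2}$.
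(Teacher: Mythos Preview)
Your proposal is correct and follows essentially the same approach as the paper: both derive each identity from the defining cocycle relation \eqref{eq:cocycle-condition} combined with the equivariance \eqref{eq:phi-equivariance} and inversion \eqref{eq:phi-inverse} properties of $\varphi$, and both appeal to continuity to pass from the phase level to the stated equalities. Your treatment is in fact more explicit than the paper's brief sketch, particularly in your half-angle/connectedness argument for fixing the $\pm$ sign, which the paper leaves implicit in the phrase ``considering \eqref{eq:cocycle-condition} as a continuous function.''
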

\begin{proof}
We can prove all these properties by considering~\eqref{eq:cocycle-condition} as a continuous function of $M_1,M_2\in\mathcal{G}$. \eqref{eq:eta1} follows from~\eqref{eq:phi-equivariance} applied to both sides of~\eqref{eq:cocycle-condition}. \eqref{eq:eta2} follows from first computing $\eta(\id,M)=0$ from~\eqref{eq:cocycle-condition} using $\varphi(\id)=1$ and then applying~\eqref{eq:eta1}. \eqref{eq:eta3} and~\eqref{eq:eta4} follow both from applying~\eqref{eq:phi-inverse} to~\eqref{eq:cocycle-condition}.
\end{proof}

\subsection{Computing the cocycle function}
The goal of this section is to develop a simple method to evaluate $\eta(M_1,M_2)$ introduced in~\eqref{eq:cocycle-condition} and eventually arrive at~\eqref{eq:etaDefinitionLog}, which expresses $\eta(M_1,M_2)$ as a simple trace. We can express~\eqref{eq:cocycle-condition} as
\begin{align}
    \hspace{-2mm}e^{\ii\eta(M_1,M_2)}&=\frac{\overline{\det}(C_{M_1M_2})}{\overline{\det}(C_{M_1})\overline{\det}(C_{M_2})}\left|\frac{\overline{\det}(C_{M_1})\overline{\det}(C_{M_2})}{\overline{\det}(C_{M_1M_2})}\right|\nonumber\\
    &=\frac{\overline{\det}(C_{M_1}^{-1}C_{M_1M_2}C_{M_2}^{-1})}{|\overline{\det}(C_{M_1}^{-1}C_{M_1M_2}C_{M_2}^{-1})|}\,,
    \label{eq:eta-def2}
\end{align}
where we used standard properties of the determinant.

Let us first simplify the argument of the determinant.
\begin{proposition}
We have
\begin{align}
    C_{M_1}^{-1}C_{M_1M_2}C_{M_2}^{-1}=\id-Z_{M_1}Z_{M_2^{-1}}\,.\label{eq:reduction}
\end{align}
\end{proposition}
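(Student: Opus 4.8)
The plan is to expand the product $M_1M_2$ in terms of the complex-linear and complex-antilinear projections introduced in~\eqref{eq:CMDM}, read off the complex-linear part $C_{M_1M_2}$ of the product, and then simplify the left-hand side of~\eqref{eq:reduction} using the $Z$-identities collected in Lemma~\ref{LM:ZProperties}. Concretely, I would start from $M_i=C_{M_i}+D_{M_i}$ and multiply out
\[
M_1M_2=C_{M_1}C_{M_2}+C_{M_1}D_{M_2}+D_{M_1}C_{M_2}+D_{M_1}D_{M_2}\,.
\]

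The one genuinely substantive step is to determine the $J$-parity of each of these four terms. Since each $C_{M_i}$ commutes with $J$ and each $D_{M_i}$ anticommutes with $J$, the term $C_{M_1}C_{M_2}$ commutes with $J$, and $D_{M_1}D_{M_2}$ also commutes with $J$ (the two sign flips cancel), whereas the two mixed terms anticommute with $J$. Because the decomposition of any linear map into its $J$-commuting and $J$-anticommuting parts is unique, it follows that the complex-linear part of the product is
\[
C_{M_1M_2}=C_{M_1}C_{M_2}+D_{M_1}D_{M_2}\,.
\]
Everything after this identification is bookkeeping.

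I would then substitute into the left-hand side and cancel the leading factors,
\[
C_{M_1}^{-1}C_{M_1M_2}C_{M_2}^{-1}=\id+C_{M_1}^{-1}D_{M_1}\,D_{M_2}C_{M_2}^{-1}=\id+Z_{M_1}\,D_{M_2}C_{M_2}^{-1}\,,
\]
using $Z_{M_1}=C_{M_1}^{-1}D_{M_1}$. It remains only to rewrite the trailing factor: writing $D_{M_2}=C_{M_2}Z_{M_2}$ and applying~\eqref{eq:num5} in the form $C_{M_2}Z_{M_2}=-Z_{M_2^{-1}}C_{M_2}$ gives $D_{M_2}C_{M_2}^{-1}=-Z_{M_2^{-1}}$, so the whole expression collapses to $\id-Z_{M_1}Z_{M_2^{-1}}$, as claimed.

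The only point requiring a word of care is well-definedness: all of $Z_{M_1}$, $Z_{M_2}$, $Z_{M_2^{-1}}$ and the left-hand side require the relevant $C_M$ to be invertible. By Lemma~\ref{lem:invertible-CM} this is automatic for bosons, and for fermions it holds away from the measure-zero locus where $\Delta_{M_i}$ (equivalently $\Delta_{M_i^{-1}}$, which is similar to it) has $-1$ in its spectrum; note that $C_{M_1M_2}$ itself need not be invertible since it appears without an inverse. I do not expect any real obstacle beyond correctly tracking the $J$-parity of the four terms in the expansion, which is the crux of the argument.
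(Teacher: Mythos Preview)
Your proof is correct and follows essentially the same approach as the paper: expand the product, identify the complex-linear part via the $J$-parity of each term, and then simplify using~\eqref{eq:num5}. The only cosmetic difference is that the paper starts from the factored form $M_i=C_{M_i}(\id+Z_{M_i})$ rather than the additive $M_i=C_{M_i}+D_{M_i}$, but the identification of $C_{M_1M_2}$ and the final reduction are the same.
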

\begin{proof}
We follow the strategy of~\cite{rawnsley2012universal}. From~\eqref{eq:ZProperty3} we have $M=C_M(\id+Z_M)$. Using this, we write the product $M_1M_2$ in the two different ways
\begin{align}
    M_1M_2&=C_{M_1M_2}(\id+Z_{M_1M_2})\,,\\
    M_1M_2&=C_{M_1}(\id+Z_{M_1})C_{M_2}(\id+Z_{M_2})\,.
\end{align}
Equating them and taking the complex linear piece yields
\begin{align}
    C_{M_1M_2}&=C_{M_1}(C_{M_2}+Z_{M_1}C_{M_2}Z_{M_2})\\
    &=C_{M_1}(\id-Z_{M_1}Z_{M_2^{-1}})C_{M_2}\,,
\end{align}
where we used that $C_M$ and $\id$ are complex linear, while $Z_M$ is complex antilinear, so $Z_M$ needs to appear in even powers. We used~\eqref{eq:num5} to reach the last equation, which is equivalent to~\eqref{eq:reduction}, which we wanted to prove.
\end{proof}

Formula \eqref{eq:eta-def2} in conjunction with the previous proposition states that the function $e^{\ii\eta(M_1,M_2)}$ can be expressed in terms of the (potentially generalized) eigenvalues $\lambda_i$ of the {\em complex} matrix $\overline{\id-Z_{M_1}Z_{M_2^{-1}}}$, at least as long as none of them vanishes. Under this assumption, one choice which clearly is consistent with \eqref{eq:eta-def2} is to define
\begin{align}
  \label{eq:eta-explicit}
  \eta(M_1,M_2)=\sum_{i=1}^N \arg(\lambda_i)\;,
\end{align}
where we assume $\arg(\lambda)\in(-\pi,\pi]$, \ie a cut along the negative real axis. We would like to understand how the right hand side varies as a function of $M_1$ and $M_2$. While possibly having jumps by integer multiples of $2\pi$ if one or more eigenvalues cross the negative real axis, the choice~\eqref{eq:eta-explicit} is sufficient to render the function $e^{\ii\eta(M_1,M_2)}$ continuous. The following proposition establishes that we even have a stronger result, as eigenvalues can only cross the negative real axis in pairs, leading to jumps of $\eta(M_1,M_2)$ by integer multiples of $4\pi$. 

\begin{proposition}\label{prop:eta-continuity}
  Formula~\eqref{eq:eta-explicit} gives rise to a continuous function $e^{\ii \eta(M_1,M_2)/2}$ of $M_1$ and $M_2$, whenever the right hand side exists. It exists everywhere for bosons, but for fermions it only exists for those $(M_1,M_2)$, for which $C_{M_1}$, $C_{M_2^{-1}}$ and $\id-Z_{M_1}Z_{M_2^{-1}}$ are all invertible.
\end{proposition}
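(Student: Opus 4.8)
The plan is to reduce everything to tracking the (generalized) eigenvalues $\lambda_i$ of the complex matrix $\overline{\id-Z_{M_1}Z_{M_2^{-1}}}$ relative to the branch cut of $\arg$ along the negative real axis used in \eqref{eq:eta-explicit}. The starting point is that $e^{\ii\eta(M_1,M_2)}$ is \emph{already} continuous wherever it is defined: by \eqref{eq:eta-def2} together with the reduction \eqref{eq:reduction} it equals the $\mathrm{U}(1)$-phase of $\overline{\det}(\id-Z_{M_1}Z_{M_2^{-1}})$, a ratio of continuous, nonvanishing determinants. Consequently the only possible discontinuities of $\eta$ are jumps by $2\pi$ for each eigenvalue $\lambda_i$ that crosses the cut, i.e. $\eta$ jumps by $2\pi$ times the signed number of eigenvalues crossing the negative real axis. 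The whole proposition then reduces to two claims: for bosons no eigenvalue ever reaches the cut (so $\eta$ is continuous), while for fermions the eigenvalues that reach the cut always do so with \emph{even} total multiplicity (so $\eta$ jumps by multiples of $4\pi$ and $e^{\ii\eta/2}$ remains continuous). For the statement about the domain, I would invoke Lemma~\ref{lem:invertible-CM}: the maps $Z_{M_1}=C_{M_1}^{-1}D_{M_1}$ and $Z_{M_2^{-1}}=C_{M_2^{-1}}^{-1}D_{M_2^{-1}}$ exist exactly when $C_{M_1}$ and $C_{M_2^{-1}}$ are invertible, and the $\lambda_i$ are nonzero precisely when $\id-Z_{M_1}Z_{M_2^{-1}}$ is invertible; since $C_M$ is always invertible for bosons, only the last condition can fail, and the boson bound below shows it never does.

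For bosons the key input is that $Z_M$ is a strict contraction with respect to the Kähler inner product $\braket{\cdot,\cdot}$. This follows from Lemma~\ref{LM:ZProperties} and Lemma~\ref{LM:Adjoint}: the identity $\id=C_{M^{-1}}C_M(\id-Z_M^2)$ together with $C_M^\dagger=C_{M^{-1}}$ (from \eqref{eq:CM-rel}) expresses $\id-Z_M^2=(C_M^\dagger C_M)^{-1}$ as a positive operator, forcing the operator norm $\lVert Z_M\rVert<1$. Then $\lVert Z_{M_1}Z_{M_2^{-1}}\rVert<1$, so every eigenvalue $\mu$ of $\overline{Z_{M_1}Z_{M_2^{-1}}}$ satisfies $|\mu|<1$, whence each $\lambda_i=1-\mu$ lies in the open disk $|\lambda_i-1|<1$ and in particular has strictly positive real part. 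Thus $\arg(\lambda_i)\in(-\tfrac{\pi}{2},\tfrac{\pi}{2})$ never approaches the cut, $\id-Z_{M_1}Z_{M_2^{-1}}$ is automatically invertible, and $\eta$ (hence $e^{\ii\eta/2}$) is continuous everywhere.

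For fermions the essential structural fact comes from Lemma~\ref{LM:Adjoint}, specifically the antisymmetry in \eqref{eq:ZM-symmetry}: $\braket{Z_Mv,w}=-\braket{Z_Mw,v}$ says the bilinear form $(v,w)\mapsto\braket{Z_Mv,w}$ is antisymmetric, so in a complex orthonormal frame the conjugate-linear map $Z_M$ is represented by a skew-symmetric matrix. Since the composition of two conjugate-linear maps is complex-linear, the matrix $\overline{Z_{M_1}Z_{M_2^{-1}}}$ is a product of two complex skew-symmetric $N\times N$ matrices. By the Stenzel condition~\cite{ikramov2009product}—already used in the discussion of $\mathfrak{u}_\perp(N)$ earlier—every nonzero eigenvalue of such a product has even algebraic multiplicity. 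Therefore every eigenvalue of $\overline{\id-Z_{M_1}Z_{M_2^{-1}}}$ other than the value $1$ (which arises from $\ker\overline{Z_{M_1}Z_{M_2^{-1}}}$ and stays away from the negative real axis) has even multiplicity. As $(M_1,M_2)$ vary continuously this even multiplicity is preserved, so any eigenvalue trajectory that reaches or crosses the cut carries its whole even-sized cluster with it; the signed count of eigenvalues crossing is always even, $\eta$ jumps only by multiples of $4\pi$, and $e^{\ii\eta/2}$ is continuous. It is worth noting that a complementary check is available: using \eqref{eq:ZM-symmetry} and \eqref{eq:CM-rel} one verifies $(Z_{M_1}Z_{M_2^{-1}})^\dagger=Z_{M_2^{-1}}Z_{M_1}$, which by the cyclic property of the spectrum shows $\mathrm{spec}$ is invariant under complex conjugation, so genuinely complex crossings already occur in conjugate pairs.

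The main obstacle I anticipate is the careful bookkeeping of the crossing argument at non-generic parameter values, where several eigenvalues collide on the negative real axis before splitting: one must argue that a conjugate pair merging into two negative real eigenvalues (which naively contributes a forbidden $2\pi$ jump) is impossible at the level of a \emph{single} copy, because the even-multiplicity constraint of Stenzel forces such merges to happen in doubled clusters, upgrading every potential $2\pi$ jump to $4\pi$. Making this rigorous requires treating algebraic multiplicities and generalized eigenvectors (as in Lemma~\ref{lem:eigenvaluesMbar}) rather than just simple eigenvalues, and confirming that the even-multiplicity property of $\overline{Z_{M_1}Z_{M_2^{-1}}}$ holds pointwise in $(M_1,M_2)$ and is not merely generic. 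The secondary technical point is establishing the strict contraction $\lVert Z_M\rVert<1$ for bosons cleanly from the positivity of $(C_M^\dagger C_M)^{-1}$; everything else is a continuity-and-counting argument built on the lemmas already proven.
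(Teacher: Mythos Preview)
Your proposal is correct and lands on the same conclusion via the same structural ingredients (Lemmas~\ref{LM:ZProperties} and~\ref{LM:Adjoint}, the Stenzel condition), but the two halves are argued differently from the paper.

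For bosons, the paper does not use a contraction bound. Instead it expands $\braket{[(\id-Z_1Z_2)+(\id-Z_1Z_2)^*]v,v}$ directly using the symmetry $\braket{Zv,w}=\braket{Zw,v}$ and rewrites it as $\braket{(\id-Z_1^2)v,v}+\braket{(\id-Z_2^2)v,v}+\lVert(Z_1-Z_2)v\rVert^2$, then invokes positive-definiteness of $(C_M^\dagger C_M)^{-1}$ to conclude $\mathrm{Re}(\lambda_i)>0$. Your route---show $0\leq Z_M^2<\id$, deduce $\lVert Z_M\rVert<1$, hence spectral radius of $Z_{M_1}Z_{M_2^{-1}}$ is below $1$---is more conceptual and yields the sharper localization $|\lambda_i-1|<1$; the paper's identity is more hands-on but avoids any discussion of operator norms for conjugate-linear maps.

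For fermions, the paper applies Stenzel to the real $2N\times 2N$ product $Z_{M_1}Z_{M_2^{-1}}$ (both factors $g$-antisymmetric), obtains even multiplicity there, and then transfers to $\overline{\id-Z_{M_1}Z_{M_2^{-1}}}$ via Lemma~\ref{lem:eigenvaluesMbar}, arguing that non-real eigenvalues of the complex matrix come as $(\lambda,\lambda^*)$ or $(\lambda,\lambda)$ pairs. Your argument is tighter: representing each conjugate-linear $Z_M$ by a complex skew-symmetric $N\times N$ matrix and applying Stenzel directly to $A_1\bar A_2$ gives even multiplicity for \emph{every} nonzero eigenvalue of $\overline{Z_{M_1}Z_{M_2^{-1}}}$, not just the non-real ones. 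This makes the crossing count clean without the case split. Your worry about non-generic collisions is already resolved by this pointwise even-multiplicity: equal pairs move together, so any crossing contributes $\pm4\pi$. The ``complementary check'' on conjugation-invariance is unnecessary for the proof and can be dropped.
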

\begin{proof}
We prove this for bosons and fermions separately:\\[2mm]
\textbf{Bosons.} We show that the eigenvalues of $\id-Z_{M_1}Z_{M_2^{-1}}$ have a positive real part~\cite{rawnsley2012universal}, which follows from
\begin{align}
\begin{split}
    &\hspace{-.4cm}\braket{[(\id-Z_1Z_2)+(\id-Z_1Z_2)^*]v,v}\\
    &=\braket{(\id-Z_1Z_2)v,v}+\braket{v,(\id-Z_1Z_2)v}\\
    &=\braket{(\id-Z_1^2)v,v}+\braket{(\id-Z_2^2)v,v}+\braket{Z_1^2v,v}\\
    &\quad+\braket{Z_2^2v,v}-\braket{Z_1Z_2v,v}-\braket{v,Z_1Z_2v}\\
    &=\braket{(\id-Z_1^2)v,v}+\braket{(\id-Z_2^2)v,v}+\braket{Z_1v,Z_1v}\\
    &\quad+\braket{Z_2v,Z_2v}-\braket{Z_1v,Z_2v}-\braket{Z_2v,Z_1v}\\
    &=\braket{(\id-Z_1^2)v,v}+\braket{(\id-Z_2^2)v,v}+\lVert(Z_1-Z_2)v\rVert^2\,,\hspace{-7mm}
\end{split}
\end{align}
where we used $\braket{Zv,w}=\braket{Zw,v}$ for bosons from~\eqref{eq:ZM-symmetry}. Note that $C_M^*$ refers to the adjoint with respect to the inner product $\braket{\cdot,\cdot}$. The first two terms of the last equation are positive, as due to Lemmas \ref{LM:ZProperties} and \ref{LM:Adjoint} the matrix $\id-Z^2=(C_{M^{-1}}C_M)^{-1}=(C^*_MC_M)^{-1}$ is clearly positive-definite ($C_{M^{-1}}=C_M^*$ follows from~\eqref{eq:CM-rel}) and the last term is manifestly non-negative. If we now consider a vector $v$ with $(\id-Z_1Z_2)v=\lambda v$, above equation implies $\mathrm{Re}(\lambda_i)>0$ for all eigenvalues $\lambda_i$ of $\id-Z_1Z_2$. By lemma~\ref{lem:eigenvaluesMbar} also all the eigenvalues of $\overline{\id-Z_{M_1}Z_{M_2^{-1}}}$ must then stay in the right half of the complex plane, so $\eta(M_1,M_2)$ and thus also $e^{\ii c\eta(M_1,M_2)}$ will be continuous functions of $M_1,M_2\in\mathcal{G}$ regardless of the choice $c\in\mathbb{R}$, including $c=\frac{1}{2}$.\\
\textbf{Fermions.} We need to calculate the eigenvalues of
\begin{align}
    \overline{\id-Z_{M_1}Z_{M_2^{-1}}}\,,
\end{align}
which is well-defined if the spectra of $\Delta_{M_1^{-1}}$ and $\Delta_{M_2}$ do not contain $-1$. Under this assumption, we recall from lemma~\ref{LM:Adjoint} that $Z_M$ is antisymmetric with respect to the inner product $g$. Thus, there exists a basis where both $Z_{M_1}$ and $Z_{M_2^{-1}}$ are represented by anti-symmetric matrices, so they are complex diagonalizable. Based on the so-called Stenzel condition~\cite{ikramov2009product}, each non-zero eigenvalues of the product $Z_{M_1}Z_{M_2^{-1}}$ appear then an even number of times. Of course, due to the vector space being even dimensional, also zero eigenvalues must then appear an even number of times. Moreover, as the matrices are real-valued, we also know that every non-real eigenvalue appears in conjugate pairs, which implies that non-real eigenvalues must appear in quadruples $(\lambda,\lambda,\lambda^*,\lambda^*)$.\\
As a consequence of lemma~\ref{lem:eigenvaluesMbar} and on dimensional grounds, we have the following statement: If a real matrix $M$ with $[M,J]=0$ has eigenvalues $(\lambda,\lambda^*)$, the complex matrix $\overline{M}$ needs to have one of these eigenvalues as well, say $\lambda$. Applied to our situation, this implies that $\overline{\id-Z_{M_1}Z_{M_2^{-1}}}$ has either real eigenvalues or complex eigenvalues come as pairs, namely either as $(\lambda,\lambda^*)$ or as  
The $\mathrm{arg}$ function appearing in the definition \eqref{eq:eta-explicit} of $\eta(M_1,M_2)$ is well-defined on the complex plane with a branch cut along $(-\infty,0]$. Apart from points $(M_1,M_2)$, where one or more eigenvalues vanish, $\eta(M_1,M_2)$ only has discontinuities associated to a jump by a multiple of $4\pi$. The reason is that complex eigenvalues either appear in conjugate pairs $(\lambda,\lambda^*)$, in which case there can be no jump (as the eigenvalues cross the branch cut in opposite directions), or in equal pairs $(\lambda,\lambda)$, in which case $\eta(M_1,M_2)$ will jump by $4\pi$ (as both $\mathrm{arg}(\lambda)$ jump from $\pi$ to $-\pi$ or vice versa). Therefore, $e^{\ii\eta(M_1,M_2)/2}$ is continuous outside of the critical points $(M_1,M_2)$ where $\id-Z_{M_1}Z_{M_2^{-1}}$ has vanishing eigenvalues or where $Z_{M_1}$ or $Z_{M_2^{-1}}$ are ill-defined.
\end{proof}

We thus have $|\eta(M_1,M_2)|<\frac{N\pi}{2}$ for bosons and $|\eta(M_1,M_2)|<N\pi$ for fermions. It is a continuous function for bosons, while there may be jumps of $4\pi$ for fermions. Consequently, $e^{\ii\eta(M_1,M_2)/2}$ is a well-defined and continuous function on all of $\mathcal{G}\times\mathcal{G}$, which we will need in the next step to define the multiplication on $\widetilde{\mathcal{G}}$.

In summary, we can compute $\eta(M_1,M_2)$ in three steps:
\begin{enumerate}
    \item Evaluate the real $2N$-by-$2N$ matrix $\id-Z_{M_1}Z_{M_2^{-1}}$ that commutes with $J$.
    \item Convert it to a complex $N$-by-$N$ matrix using~\eqref{eq:dec-form}.
    \item Compute its $N$ complex eigenvalues and add their complex phases. If the eigenvalue is a negative real number, we take $\pi$ as its complex phase (fixing the logarithm at the branch cut) for definiteness.
\end{enumerate}
We can write this equivalently as
\begin{align}
  \label{eq:etaDefinitionLog}
    \begin{split}
    \hspace{-2mm}\eta(M_1,M_2)&=\mathrm{Im}\,\overline{\mathrm{Tr}}\log(\id-Z_{M_1}Z_{M_2^{-1}})\,.\\
    &= \mathrm{Im}\,\overline{\mathrm{Tr}}\log\left(\id-\frac{\id-\Delta_{M_1^{-1}}}{\id+\Delta_{M_1^{-1}}}\frac{\id-\Delta_{M_2}}{\id+\Delta_{M_2}}\right)
    \end{split}
\end{align}
Our previous analysis ensures that $e^{\ii \eta(M_1,M_2)/2}$ is continuous, whenever it is defined.

\subsection{Constructing the double cover}\label{sec:construction-double-cover}
For bosons, the normalized circle function $\varphi: \mathcal{G}\to \mathrm{U}(1)$ is a homomorphism of the fundamental groups $\pi_1(\mathcal{G})$ and $\pi_1(\mathrm{U}(1))$, \ie a non-contractible loop with a given winding number in $\mathcal{G}$ is mapped to a loop in $\mathrm{U}(1)$ of the same winding number. In order to construct the double cover, we want to assign to each group element $M\in\mathcal{G}$ an additional two-fold choice, \ie $M\to (M,\pm \psi_M)$, such that a loop with winding number $1$ in $\mathcal{G}$ connects exactly the two elements $(M,\psi_M)$ and $(M,-\psi_M)$, as forshadowed in~\eqref{eq:double-cover}. A natural choice is the requirement $\psi_M^2=\varphi(M)$, which uses the natural double covering map $z\mapsto z^2$ of the complex unit circle onto itself.\footnote{Note that the naive choice $\psi_M=\pm 1$ yielding the set $\mathcal{G}\times \mathbb{Z}_2$ would not accurately describe the topology of $\widetilde{\mathcal{G}}$.} For fermions, the picture is similar, though with two caveats: First, the fundamental group of $\mathrm{SO}(2N,\mathbb{R})$ for $N>1$ is $\mathbb{Z}_2$, so the winding number of a loop in $\mathcal{G}$ can only be $0$ or $1$, which means after mapping the loop to $\mathrm{U}(1)$ we only need to distinguish even or odd winding numbers. Second, for a chosen $J$ the fermionic circle function $\varphi(M)$ is ill-defined for some $M$, so to compute the winding number of a loop may require a deformation of the loop to avoid such critical points or a change of $J$.

We can use the normalized circle function $\varphi$ from~\eqref{eq:circle-function} to define the double cover $\widetilde{\mathcal{G}}$ of the group $\mathcal{G}$ as the set
\begin{align}
    \widetilde{\mathcal{G}}=\left\{(M,\psi)\in\mathcal{G}\times\mathrm{U}(1)\,\big|\,\psi^2=\varphi(M)\right\}\,, \label{eq:double-cover-def}
\end{align}
which means that for every $M\in\mathcal{G}$, the double cover contains the two group elements $(M,\pm\sqrt{\varphi(M)})$, where we define the square root $\sqrt{e^{\ii\vartheta}}=e^{\ii\vartheta/2}$ for $\vartheta\in(-\pi,\pi]$.

On this set, we define the group multiplication
\begin{align}
   \hspace{-2mm} (M_1,\psi_1)\cdot(M_2,\psi_2)=(M_1\cdot M_2,\psi_1\psi_2e^{\ii\eta(M_1,M_2)/2}),
    \label{eq:group-multiplication-double-cover}
\end{align}
where $\eta(M_1,M_2)$ is the cocycle associated to $\varphi$ and defined by equation~\eqref{eq:cocycle-condition} and the requirement to be continuous up to jumps of $4\pi$. The conditions guarantee that the multiplication \eqref{eq:group-multiplication-double-cover} is associative and continuous. Explicit formulas for $\eta(M_1,M_2)$ are available in~\eqref{eq:eta-explicit} and~\eqref{eq:etaDefinitionLog}.

The construction of the double cover with~\eqref{eq:group-multiplication-double-cover} was done in~\cite{rawnsley2012universal} for the symplectic group, but we argue that an essentially identical construction can also be used for the double cover of the special orthogonal group. However, there is an important caveat: The circle function from~\eqref{eq:circle-function} is only defined where $\overline{\det}(C_M)\neq 0$. As discussed in section~\ref{sec:circle-and-cocycle}, this will always the case for bosons, but for fermions there exist a submanifold of group elements $M$ with $\overline{\det}(C_M)=0$. Luckily, this is a set of measure zero, so the construction will still work for \emph{almost all} group elements, but just not everywhere. Moreover, we argue here that these ``holes'' will not affect the winding numbers, \ie given a loop in $\mathcal{G}$ we can always continuously deform this loop and/or change our reference complex structure $J$ used in the definition of the circle function, such that $\varphi$ is defined everywhere on the loop and maps it to $\mathrm{U}(1)$. We can then compute the winding number of this loop in $\mathrm{U}(1)$ modulo $2$, which agrees with the homotopy of the original loop in $\mathcal{G}$.

We can define the surjective homomorphism
\begin{align}
    \sigma: \widetilde{\mathcal{G}}\to\mathcal{G}
    \quad\text{with}\quad
    (M,\psi)\mapsto M\,,\label{eq:sigma}
\end{align}
whose kernel is given by $\mathbb{Z}_2=\bigl\{(\id,\pm 1)\bigr\}$, which shows that $\widetilde{\mathcal{G}}$ is a double cover and $\mathcal{G}=\widetilde{\mathcal{G}}/\mathbb{Z}_2$.

Let us recall that our construction of $\widetilde{\mathcal{G}}$ and the definitions of the circle function $\varphi(M)$ and the cocycle function $\eta(M_1,M_2)$ are all constructed with respect to a reference complex structure $J$. The following proposition now explains what happens when we move to a different reference complex structure $\tilde{J}$.

\begin{proposition}\label{prop:switch-J}
Given a group element $(M,\psi)\in\widetilde{\mathcal{G}}$ where $\psi$ is computed with respect to $J$, we can compute
\begin{align}
    \tilde{\psi}=\psi e^{\ii(\eta(T^{-1},M)+\eta(T^{-1}M,T))/2}\,,
\end{align}
where $(M,\tilde{\psi})$ represents the same group element with respect to $\tilde{J}=TJT^{-1}$, while $\eta(M_1,M_2)$ is still the cocycle function defined with respect to $J$.
\end{proposition}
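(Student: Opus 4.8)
The plan is to track how the two $J$-dependent ingredients of the realization \eqref{eq:double-cover-def}—the circle function $\varphi$ and the cocycle $\eta$—transform under $J\mapsto\tilde{J}=TJT^{-1}$, and then to remove the residual sign ambiguity by a covering-space argument. The first step is to express the $\tilde{J}$-circle function $\tilde{\varphi}$ through the original $\varphi$. Writing $\tilde{C}_M=\tfrac{1}{2}(M-\tilde{J}M\tilde{J})$ for the $\tilde{J}$-complex-linear part in the sense of \eqref{eq:CMDM}, and using $\tilde{J}T=TJ$, a one-line computation gives $T^{-1}\tilde{C}_MT=C_{T^{-1}MT}$. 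Moreover, if $e$ is a basis in which $J$ takes the standard form \eqref{eq:J-standard-form}, then $Te$ is a basis in which $\tilde{J}$ is standard, so the complex determinant \eqref{eq:detbar} computed with respect to $\tilde{J}$ in the basis $Te$ equals the one computed with respect to $J$ in the basis $e$ after conjugation by $T$. Combining the two observations yields $\tilde{\varphi}(M)=\varphi(T^{-1}MT)$.

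Next I would unfold $\varphi(T^{-1}MT)$ with the cocycle relation \eqref{eq:cocycle-condition}. Grouping the product as $(T^{-1}M)\,T$ and then $T^{-1}\cdot M$, and using $\varphi(T^{-1})\varphi(T)=1$ (which follows from \eqref{eq:phi-inverse} together with $\eta(T^{-1},T)=0$ from \eqref{eq:eta3}), one obtains
\begin{align*}
    \tilde{\varphi}(M)=\varphi(M)\,e^{\ii\bigl(\eta(T^{-1},M)+\eta(T^{-1}M,T)\bigr)}\,.
\end{align*}
Hence any $\tilde{\psi}$ satisfying the $\tilde{J}$-constraint $\tilde{\psi}^2=\tilde{\varphi}(M)$ must equal $\pm\psi\,e^{\ii(\eta(T^{-1},M)+\eta(T^{-1}M,T))/2}$, so the claimed value is automatically a legitimate element of the $\tilde{J}$-realization; only its global sign remains to be justified.

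It remains to fix this sign, which is the only genuinely non-routine step. I would define the map $\Phi(M,\psi)=(M,\tilde{\psi})$ with $\tilde{\psi}$ as above, where $\eta$ is throughout the $J$-cocycle. By Proposition~\ref{prop:eta-continuity} the factor $e^{\ii\eta(\cdot,\cdot)/2}$ is continuous wherever defined, so $\Phi$ is continuous; by construction it covers the identity map of $\mathcal{G}$, i.e.\ $\sigma_{\tilde{J}}\circ\Phi=\sigma_J$ for the two projections of the form \eqref{eq:sigma}; and it fixes the base point, $\Phi(\id,1)=(\id,1)$, since $\eta(T^{-1},\id)=0$ and $\eta(T^{-1},T)=0$ by \eqref{eq:eta2} and \eqref{eq:eta3}. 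The canonical identification sending a group element to its representatives with respect to $J$ and $\tilde{J}$ is a continuous group isomorphism covering $\mathrm{id}_{\mathcal{G}}$, hence also a continuous lift of $\sigma_J$ through the covering $\sigma_{\tilde{J}}$ that sends $(\id,1)$ to $(\id,1)$. Because $\widetilde{\mathcal{G}}$ is connected, the unique-lifting property of covering spaces forces these two lifts to coincide, which selects the $+$ branch and proves the formula. This route sidesteps any direct verification that $\Phi$ respects the multiplication \eqref{eq:group-multiplication-double-cover}; the $2$-cocycle (associativity) identity following from \eqref{eq:cocycle-condition}, applied to the triple $(T^{-1},M,T)$, would supply that as an independent cross-check and also reconfirms the well-definedness noted above.

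The main obstacle is exactly this sign determination: the algebra only pins $\tilde{\psi}$ down to $\pm$, and fixing the branch globally is what the connectedness-plus-continuity argument accomplishes. A secondary technical point, relevant only for fermions, is that $\varphi$ and $\eta$ are defined merely on an open dense subset—where $\det(\id+\Delta_M)\neq 0$ and the matrices entering $\eta$ via \eqref{eq:group-multiplication-double-cover} are invertible. I would therefore confirm that this subset is connected, so that the lifting argument still applies, and that $\tilde{\psi}$ either extends continuously across the remaining locus or fails only on the measure-zero set where $\braket{J|\mathcal{U}(M)|J}=0$, on which the overall phase carries no information in any case.
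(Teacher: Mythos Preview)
Your argument is correct, but it takes a somewhat different route from the paper's. The paper works entirely inside the $J$-realization of the double cover: it lifts $T$ to $(T,\psi_T)\in\widetilde{\mathcal{G}}_J$ and simply \emph{conjugates}, computing $(T^{-1},\psi_T^*)\cdot(M,\psi)\cdot(T,\psi_T)$ with the multiplication rule~\eqref{eq:group-multiplication-double-cover}. Because this is a bona fide group operation in $\widetilde{\mathcal{G}}_J$, the second component is determined with no residual sign; the ``active equals passive'' principle (basis invariance of $\varphi$) is then invoked to identify that component with $\tilde{\psi}$. Your approach instead first drops to the level of the circle function, derives $\tilde{\varphi}(M)=\varphi(M)\,e^{\ii(\eta(T^{-1},M)+\eta(T^{-1}M,T))}$ by a cocycle expansion, and then climbs back up by the unique-lifting property of covering spaces to kill the $\pm$. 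The algebraic core---the conjugation identity $T^{-1}\tilde{C}_MT=C_{T^{-1}MT}$ and the cocycle relation---is the same in both; what differs is the mechanism that pins down the sign. The paper's route is shorter and purely algebraic, since the double-cover multiplication is already in hand; your route is more explicitly topological and makes transparent \emph{why} the formula is the unique continuous choice compatible with the identity, which is arguably the cleaner justification of the active/passive identification that the paper states without further comment. Your caveat about the fermionic measure-zero locus is appropriate and matches the paper's own treatment of that issue.
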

\begin{proof}
Our definition~\eqref{eq:circle-function} of $\varphi$ is invariant under basis changes if we re-express both $M$ and $J$ into a new basis. A change of basis $T$, which we can extend to a double cover element $(T,\psi_T)$ with inverse $(T^{-1},\psi_T^*)$, can be seen both as a passive transformation of the basis and as an active transformation of the group elements. That is, the $\tilde{\psi}$ of an element $(M,\psi)$ with respect to a transformed complex structure $\tilde{J}=TJT^{-1}$ must be the same as the $\psi$ of the group element $(M,\psi)$ transformed by $(T^{-1},\psi_T^*)$, that is
\begin{align}
    &(T^{-1},\psi_{T}^*)\cdot (M,\psi)\cdot(T,\psi_T)\nonumber\\
    &\hspace{10mm}=(T^{-1}MT,\psi e^{\ii(\eta(T^{-1},M)+\eta(T^{-1}M,T))/2}) \,.
\end{align}
This gives the result by direct inspection of the expression above, in which we used the multiplication rule from~\eqref{eq:group-multiplication-double-cover}. We could have also multiplied in reverse order yielding $\tilde{\psi}=\psi e^{\ii(\eta(M,T)+\eta(T^{-1},MT))/2}$, which is also valid and yields the same result.
\end{proof}

\textbf{Fermions.} The fact that $\varphi(M)$ and thus also $\psi$ for $(M,\psi)\in\widetilde{\mathcal{G}}$ is not defined for those group elements $M\in\mathcal{B}_{\mathcal{G}}$ may appear to be a severe problem for general calculations in the double cover, even if it is a set of measure zero. However, the following proposition shows that this something that we can easily deal with in practice.

\begin{proposition}
Given two group elements $M_1,M_2\in\mathcal{G}$, we can always choose a reference complex structure $\tilde{J}$ (rather than $J$), such that $\tilde{\varphi}(M_1)$, $\tilde{\varphi}(M_2)$, $\tilde{\varphi}(M_1M_2)$ and $\tilde{\eta}(M_1,M_2)$ are defined, where $\tilde{\varphi}$ and $\tilde{\eta}$ are computed with respect to $\tilde{J}$. \label{prop:choose-J}
\end{proposition}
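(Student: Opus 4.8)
The plan is to reduce the four defining conditions to a finite list of invertibility requirements on matrices built from $\tilde{J}$, and then to argue that the ``bad'' locus of $\tilde{J}\in\mathcal{M}$ violating any of them is a measure-zero subset of the connected manifold $\mathcal{M}=\mathcal{G}/\mathrm{U}(N)$ from~\eqref{eq:def-M-manifold}, so that a good $\tilde{J}$ necessarily exists. First I would collect the conditions. Writing $\tilde{C}_M=\tfrac{1}{2}(M-\tilde{J}M\tilde{J})$, the quantities $\tilde{\varphi}(M_1)$, $\tilde{\varphi}(M_2)$, $\tilde{\varphi}(M_1M_2)$ are defined exactly when $\tilde{C}_{M_1}$, $\tilde{C}_{M_2}$, $\tilde{C}_{M_1M_2}$ are invertible (Lemma~\ref{lem:invertible-CM}). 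For $\tilde{\eta}(M_1,M_2)$, Proposition~\ref{prop:eta-continuity} requires $\tilde{C}_{M_1}$, $\tilde{C}_{M_2^{-1}}$ and $\id-\tilde{Z}_{M_1}\tilde{Z}_{M_2^{-1}}$ to be invertible. Two observations collapse these extra requirements onto the first three: by~\eqref{eq:reduction} one has $\id-\tilde{Z}_{M_1}\tilde{Z}_{M_2^{-1}}=\tilde{C}_{M_1}^{-1}\tilde{C}_{M_1M_2}\tilde{C}_{M_2}^{-1}$, automatically invertible once $\tilde{C}_{M_1},\tilde{C}_{M_2},\tilde{C}_{M_1M_2}$ are; and $\tilde{C}_{M_2^{-1}}$ is invertible iff $\tilde{C}_{M_2}$ is, since a direct computation gives $\tilde{J}\,\Delta_{M^{-1}}\tilde{J}^{-1}=M^{-1}\Delta_{M}M$ (with $\Delta$ taken relative to $\tilde{J}$), so $\Delta_{M_2}$ and $\Delta_{M_2^{-1}}$ share a spectrum and Lemma~\ref{lem:invertible-CM} phrases invertibility purely through that spectrum. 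Hence it suffices to find $\tilde{J}$ with $\tilde{C}_{M_1}$, $\tilde{C}_{M_2}$, $\tilde{C}_{M_1M_2}$ all invertible.

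For each fixed $M\in\mathcal{G}$, consider the bad set $\mathcal{B}(M)=\{\tilde{J}\in\mathcal{M}\,|\,\det\tilde{C}_M=0\}$. The map $\tilde{J}\mapsto\det\tilde{C}_M$ is a polynomial in the entries of $\tilde{J}$, hence real-analytic on the connected real-analytic manifold $\mathcal{M}$; therefore $\mathcal{B}(M)$ is either all of $\mathcal{M}$ or a proper analytic subvariety of measure zero. To exclude the first alternative I would exhibit a single $\tilde{J}$ with $\tilde{C}_M$ invertible. The cleanest choice is a compatible complex structure commuting with $M$: bringing $M\in\mathrm{SO}(2N,\mathbb{R})$ to its orthogonal block-diagonal normal form of $2\times 2$ rotation blocks $R(\theta_k)$, the block-diagonal $\tilde{J}$ built from $\pm\left(\begin{smallmatrix}0&1\\-1&0\end{smallmatrix}\right)$ in each block satisfies $\tilde{J}^2=-\id$, $\tilde{J}G\tilde{J}^{\intercal}=G$ and $[\tilde{J},M]=0$. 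For such a $\tilde{J}$ one has $\tilde{J}M\tilde{J}=M\tilde{J}^2=-M$, so $\tilde{C}_M=M$ is invertible (equivalently $\Delta_M=-M\tilde{J}M^{-1}\tilde{J}=-\tilde{J}^2=\id$ has no eigenvalue $-1$). The $2^N$ sign choices split between the two orientation classes, so at least one lands in the single $\mathrm{SO}(2N,\mathbb{R})$-orbit that constitutes $\mathcal{M}$; thus $\mathcal{B}(M)\subsetneq\mathcal{M}$ is measure zero.

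Finally, the union $\mathcal{B}(M_1)\cup\mathcal{B}(M_2)\cup\mathcal{B}(M_1M_2)$ is a finite union of measure-zero sets, hence itself measure zero and in particular a proper subset of $\mathcal{M}$. Any $\tilde{J}$ in its (dense) complement makes $\tilde{C}_{M_1}$, $\tilde{C}_{M_2}$, $\tilde{C}_{M_1M_2}$ simultaneously invertible, and by the reduction above all of $\tilde{\varphi}(M_1)$, $\tilde{\varphi}(M_2)$, $\tilde{\varphi}(M_1M_2)$ and $\tilde{\eta}(M_1,M_2)$ are then defined, which proves the claim.

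I expect the main obstacle to be the properness step, namely certifying that each $\mathcal{B}(M)$ is not all of $\mathcal{M}$. The reduction of conditions and the measure-zero/union bookkeeping are routine, but one must genuinely produce a good reference complex structure for each of $M_1$, $M_2$, $M_1M_2$ individually; the commuting-$\tilde{J}$ construction does this cleanly, with the only care needed being that the chosen sign pattern places $\tilde{J}$ in the correct orientation component so that it is a bona fide point of $\mathcal{M}$. Alternatively, one could identify $\mathcal{B}(M)$ with the codimension-two quasi-boundary locus $\mathcal{B}_{\mathcal{M}}$ (with $\dim\mathcal{B}_{\mathcal{M}}=\dim\mathcal{I}_{\mathcal{M}}-2$) and invoke that directly, but the analytic argument is more self-contained.
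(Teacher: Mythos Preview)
Your proof is correct and takes a genuinely different route from the paper. The paper argues constructively: given a bad $J$, it computes an explicit perturbation direction $\delta J$ (via a $4\times 4$ block calculation) that moves the eigenvalue $-1$ of $\Delta_M$ off the critical value, and then applies this procedure sequentially to $M_1$, $M_2$, $M_1M_2$, using continuity to ensure that each sufficiently small perturbation preserves the previously secured non-vanishing determinants. You instead use a genericity argument: for each fixed $M$, the bad locus $\mathcal{B}(M)\subset\mathcal{M}$ is the zero set of a real-analytic function on a connected manifold and is shown to be proper by exhibiting a commuting $\tilde{J}$ (with the orientation bookkeeping ensuring it lies in $\mathcal{M}$), hence has measure zero; a finite union of such sets remains proper. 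Your reduction of the four conditions to invertibility of $\tilde{C}_{M_1}$, $\tilde{C}_{M_2}$, $\tilde{C}_{M_1M_2}$ via~\eqref{eq:reduction} and the spectral equivalence $\Delta_{M_2}\sim\Delta_{M_2^{-1}}$ is clean and correct. Your approach is more conceptual and extends immediately to any finite family of group elements, while the paper's explicit perturbation is more hands-on and directly useful for the numerical applications they have in mind (it tells you concretely how to nudge $J$ in practice).
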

\begin{proof}
Every choice of a reference complex $J$ equips the group $\mathcal{G}$ with the fiber bundle structure discussed in section~\ref{sec:principal-fiber-bundles}, which then also has a quasi-boundary $\mathcal{B}_{\mathcal{G}}=\{M\in\mathcal{G}\,|\,\det(\id+\Delta_M)=0\}$ on which $\varphi$ is not defined. Changing $J$ continuously will change the quasi-boundary $\mathcal{B}_{\mathcal{G}}$ within $\mathcal{G}$ continuously. Furthermore, changing the matrix entries of $J$ will change the spectrum $\Delta_M=-MJM^{-1}J$ and its determinant continuously. This implies in particular that if for a given $M$ and $J$, we have that $\det(\id+\Delta_M)\neq 0$, we can always perturb $J$ in a neighborhood, where $\det(\id+\Delta_M)\neq 0$ throughout.\\
We now consider the case where $\det(\id+\Delta_M)=0$, which implies that $\Delta_M$ must have eigenvalue $-1$. Due to the spectral properties of $\Delta_M$ discussed in section~\ref{sec:cartan}, this implies that we can simultaneously block-diagonalize $\Delta_M$ and $J$ to have one or more blocks of the form
\begin{align}
    [\Delta_M]=\begin{pmatrix}
        -1 & &&\\
        &-1 & &\\
        &&-1&\\
        &&&-1
    \end{pmatrix},\, [J]=\begin{pmatrix}
        &&1&\\
        &&&1\\
        -1&&&\\
        &-1&&
    \end{pmatrix}\,.
\end{align}
Within this block and in this basis, we can then decompose $M=Tu$ with
\begin{align}
    [T]=\begin{pmatrix}
        & 1&&\\
        -1&&&\\
        &&&-1\\
        &&1
    \end{pmatrix}\,,
\end{align}
which allows us to write $\Delta_M=-TJT^{-1}J$. We will now show that we can continuously deform $J$ with a linear change $\delta J$, such that the eigenvalues of the block $[\Delta_M]$ moves away from $-1$. We consider the general linear change
\begin{align}
    [\delta J]=\begin{pmatrix}
    & a & & b\\
    -a &&-b&\\
    &b&&-a\\
    -b&&a&
    \end{pmatrix}\,,
\end{align}
which gives $[\delta\Delta_M]=[TJT^{-1}(\delta J)+T(\delta J)T^{-1}J]$ yielding
\begin{align}
    [\delta \Delta_M]=\begin{pmatrix}
    &  & & 2a\\
     &&-2a&\\
    &2a&&\\
    -2a&&&
    \end{pmatrix}\,.
\end{align}
The linear change of the eigenvalues of $\Delta_M$ is then given by $\lambda=-1\pm 2\ii a$, which allows us to move away from the eigenvalues $-1$. This means we can apply a finite change $J\to \tilde{J}$ in each block in the direction of $[\delta J]$ with $a\neq 0$, such that $\det(\id+\tilde{\Delta}_M)\neq 0$, where $\tilde{\Delta}_M$ was computed with respect $\tilde{J}$. We can apply this procedure first for $M=M_1$, then for $M=M_2$ and finally for $M=M_1M_2$. Each time, we perform a finite change of the reference complex structure, until $\det(\id+\tilde{\Delta}_{M})\neq 0$ for all $M\in\{M_1,M_2,M_1M_2\}$. Note further that there is no danger that we accidentally make the determinant for another $M$ vanish again, as long as our subsequent finite change of $J$ is sufficiently small, because we know that once the determinant is non-zero, there exists a local neighborhood of $J$ where the determinant is non-zero everywhere. For this choice of reference complex structure $\tilde{J}$, we can then be sure that $\tilde{\varphi}(M_1)$, $\tilde{\varphi}(M_2)$, $\tilde{\varphi}(M_1M_2)$ and $\tilde{\eta}(M_1,M_2)$ are all well-defined.
\end{proof}

What we have thus shown is that for any group elements $(M_1,\psi_1)$ and $(M_2,\psi_2)$, we can choose $\tilde{J}$, such that the product rule~\eqref{eq:group-multiplication-double-cover} can be executed without problems. To implement this change, we need to move from $\psi_i\to\tilde{\psi}_i$ according to proposition~\ref{prop:switch-J}, before performing the multiplication. In practice for numerical simulations, a Haar-randomly chosen complex structure $J$ will lead, with high probability, to a valid representation of the required unitary.

More generally, a more rigorous construction of $\widetilde{\mathcal{G}}$ for fermions can be achieved by representing group elements by equivalence classes\footnote{The same spirit can be used to define pure Gaussian states with $\ket{J}$ if we want to keep track of the phase, namely by equivalence classes $[(J,\arg\braket{J|\tilde{J}},\ket{\tilde{J}}]$, where we fix a reference Gaussian state vector $\ket{\tilde{J}}$ and then represent $\ket{J}$ by $J$ and the complex phase $\arg\braket{J|\tilde{J}}$ with respect to $\tilde{J}$, provided $\braket{J|\tilde{J}}\neq 0$. This approach was discussed in~\cite{Dias:2023arXiv230712912D}.} $[(M,\psi,J)]$, consisting of all possible different choices of the reference complex structure $J$ and the associated $M$ and $\psi$ satisfying $\psi^2=\varphi(M)$. Different representatives of the same equivalence class are related via proposition~\ref{prop:switch-J}. In this way, each group element has a valid representation. The group product can be defined according to~\eqref{eq:group-multiplication-double-cover} by choosing a reference $J$ such that the corresponding representative exists in the equivalence classes of all elements involved in the product. This is always possible as guaranteed by proposition~\ref{prop:choose-J}.

\section{Unitary representation and closed formulas}\label{sec:unitary-rep-results}
The present section aims to build on the previous two sections to understand the structure of Gaussian unitary transformations. While section~\ref{sec:Setup} introduced Gaussian unitaries $\mathcal{U}(M)$ defined up to an overall sign, section~\ref{sec:DoubleCover} put the quantum theory aside and constructed the double cover $\widetilde{\mathcal{G}}$ based on the circle function $\varphi(M)$, which was purely defined in terms of objects of the classical phase space. Now we will see how circle function $\varphi(M)$ and the object $\psi_M$ for double cover group elements $(M,\psi_M)\in\widetilde{\mathcal{G}}$ are naturally related to expectation values of Gaussian unitaries and allow us to construct a natural parametrization of the Gaussian unitary group.

\subsection{\label{sec:squared-expectation}Squared expectation value}
The circle function $\varphi(M)$ was introduced in~\eqref{eq:circle-function} with respect to a complex structure $J$ without any reference to quantum theory. The following result shows that the circle function and more generally also its unnormalized version $\overline{\det}(\frac{M-JMJ}{2})$ is directly relates to the squared expectation value $\braket{J|\mathcal{U}(M)|J}^2$. This provides an explicit link between our construction of the double cover and the projective unitary representation $\mathcal{U}(M)$.

\begin{result1}
\label{res:squared-expectation}
Given a Gaussian state $\ket{J}$ in a bosonic or fermionic system and a Gaussian unitary $\pm\mathcal{U}(M)$, only defined up to an overall sign, the squared expectation value is
\begin{align}
    \langle J|\mathcal{U}(M)|J\rangle^2
    =\begin{cases}
        \displaystyle\frac{1}{\overline{\det}\left(\frac{M-JMJ}{2}\right)} & \normalfont{\textbf{(bosons)}}\\[5mm]
        \,\overline{\det}\left(\frac{M-JMJ}{2}\right) & \normalfont{\textbf{(fermions)}}
        \end{cases}\,,\label{eq:exp-squared}
\end{align}
where $\overline{\det}$ was defined in~\eqref{eq:detbar}.
\end{result1}
\begin{proof}
We first Cartan decompose our group element $M$ as $M=Tu$ with respect to $J$, as explained in section~\ref{sec:cartan}. The properties~\eqref{eq:U(M)-product} of $\mathcal{U}(M)$ then imply
\begin{align}
    \mathcal{U}(M)=\pm\mathcal{U}(T)\mathcal{U}(u)\,.
\end{align}
With this, we can compute
\begin{align}
    \braket{J|\mathcal{U}(M)|J}^2&=\braket{J|\mathcal{U}(T)\mathcal{U}(u)|J}^2\\
    &=\braket{J|\mathcal{U}(T)|J}^2\braket{J|\mathcal{U}(u)|J}^2\\
    &=\braket{J|e^{\widehat{K}_+}|J}^2\braket{J|e^{\widehat{K}_-}|J}^2\,,\label{eq:prodeKpeKm}
\end{align}
where we showed in~\eqref{eq:U(u)-relation} that $\ket{J}$ is an eigenvector of $\mathcal{U}(u)$. There exist $K_+\in\mathfrak{u}_\perp(N)$ and $K_-\in\mathfrak{u}(N)$, such that $T=e^{K_+}$ and $u=e^{K_-}$. This is because the exponential map is surjective on $\mathfrak{u}_\perp(N)\to\exp(\mathfrak{u}_\perp(N))$ by definition and on $\mathfrak{u}(N)\to\mathrm{U}(N)$ due to maximal torus theorem for compact connected Lie groups~\cite{hall2013lie}. In practice, we can compute $K_+=\log(T)$ and $K_-=\log(u)$, where we only need to pay special attention at the branch cut, when $u$ has eigenvalue pairs $-1$, as $K_-$ then needs to have eigenvalues $\pm \ii \tfrac{\pi}{2}$ to be a real linear map.\\
We now evaluate the two factors from~\eqref{eq:prodeKpeKm} individually:
\begin{itemize}
    \item \textbf{First factor.} We have $\mathcal{U}(T)=\pm e^{\widehat{K}_+}$. It was shown in equation~(174) of~\cite{hackl2021bosonic} that
    \begin{align}
        \hspace{5mm}\braket{J|e^{\widehat{K}_+}|J}=\begin{cases}
        \det^{\frac{1}{8}}(\id-L^2) & \textbf{(bosons)}\\
        \det^{-\frac{1}{8}}(\id-L^2) & \textbf{(fermions)}
        \end{cases},
    \end{align}
    where $L=\tanh{K_+}$. Using $1-\tanh^2{x}=1/\cosh^2{x}$, lemma~\eqref{lem:coshKplus} then implies $\det^{-\frac{1}{4}}(\id-L^2)=\overline{\det}(\cosh{K_+})$, which finally yields
    \begin{align}
        \hspace{5mm}\braket{J|e^{\widehat{K}_+}|J}^2=\begin{cases}
        \displaystyle\frac{1}{\overline{\det}(\cosh{K_+})} & \textbf{(bosons)}\\[5mm]
        \,\overline{\det}(\cosh{K_+}) & \textbf{(fermions)}
        \end{cases}.\label{eq:JTJ2}
    \end{align}
    \item \textbf{Second factor.} We now consider the expectation value $\braket{J|e^{\widehat{K}_-}|J}$. Due to $[K_-,J]=0$, we can quasi-diagonalize $K_-$ while retaining $J$'s standard form~\eqref{eq:J-standard-form} as
    \begin{align}
        K_-\equiv\left(\begin{array}{c|c}
             0 & D \\
             \hline
             -D & 0
        \end{array}\right)\,,
    \end{align}
    where $D=\mathrm{diag}(\omega_1,\dots,\omega_N)$ with some $\omega_i$ being potentially negative. We thus have
    \begin{align}
        e^{\widehat{K}_-}=\begin{cases}
        e^{-\ii \sum_i(\hat{n}_i+\frac{1}{2})\omega_i} & \textbf{(bosons)}\\[1mm]
        e^{-\ii \sum_i(\hat{n}_i-\frac{1}{2})\omega_i} & \textbf{(fermions)}
        \end{cases}\,.
    \end{align}
    Using $\hat{n}_i\ket{J}=0$ in this basis, leads to
    \begin{align}
        \hspace{5mm}\braket{J|e^{\widehat{K}_-}|J}^2=\begin{cases}
        e^{-\ii \sum_{i}\omega_i}=e^{-\overline{\mathrm{tr}}K_-} & \textbf{(bosons)}\\[2mm]
        e^{+\ii \sum_{i}\omega_i}=e^{+\overline{\mathrm{tr}}K_-} & \textbf{(fermions)}
        \end{cases}\,.
    \end{align}
    Using $\overline{\det}(u)=e^{\overline{\mathrm{tr}}(K_-)}$ then yields
    \begin{align}
        \hspace{3mm}\braket{J|\mathcal{U}(u)|J}^2=\begin{cases}
        \displaystyle\frac{1}{\overline{\det}(u)} & \normalfont{\textbf{(bosons)}}\\[5mm]
        \,\overline{\det}(u) & \normalfont{\textbf{(fermions)}}
        \end{cases}\,.\label{eq:JUuJ2}
    \end{align}
\end{itemize}
We use the determinant property to find
\begin{align}
    \overline{\det}(\cosh{K_+})\overline{\det}(u)=\overline{\det}(\cosh(K_+)u)\,.
\end{align}
We can write out the argument explicitly to find
\begin{align}
\begin{split}\label{eq:combine-pieces}
    \cosh(K_+)u&=\tfrac{1}{2}(e^{K_+}u+e^{-K_+}u)\\
    &=\tfrac{1}{2}(e^{K_+}u+e^{-K_+}(-J^2)u)\\
    &=\tfrac{1}{2}(e^{K_+}u-Je^{K_+}uJ)\\
    &=\tfrac{1}{2}(M-JMJ)\,,
\end{split}
\end{align}
where we used $J^2=-\id$, $[J,u]=0$, $\{J,K_+\}=0$ and $M=e^{K_+}u$. Plugging~\eqref{eq:JTJ2} and~\eqref{eq:JUuJ2} into~\eqref{eq:prodeKpeKm} and using~\eqref{eq:combine-pieces} then yields the desired result~\eqref{eq:exp-squared}.
\end{proof}

If we are only interested in the absolute value, we can ignore the sign ambiguity and use Corollary~\ref{lem:determinant} to find
\begin{align}
    \hspace{-3mm}\bigl|\braket{J|\mathcal{U}(M)|J}\bigr|=\begin{cases}
    \det^{-\frac{1}{4}}(\frac{M-JMJ}{2}) & \textbf{(bosons)}\\
    \det^{+\frac{1}{4}}(\frac{M-JMJ}{2}) & \textbf{(fermions)}
    \end{cases}
    .\label{eq:abs-val}
\end{align}
This is equivalent to the respective expression in~\cite{hackl2021bosonic}.

Vice versa, if one is only interested in the complex phase, formula~\eqref{eq:exp-squared} can be used to assign physical meaning to the circle function introduced in~\eqref{eq:circle-function}, namely
\begin{align}
    \label{eq:phase-exp-squared}
    \frac{\braket{J|\mathcal{U}(M)|J}^2}{|\braket{J|\mathcal{U}(M)|J}|^2}=\begin{cases}
    \varphi^*(M) & \textbf{(bosons)}\\
    \varphi(M) & \textbf{(fermions)}
    \end{cases}\,.
\end{align}
Rawsley used the circle function in~\cite{rawnsley2012universal} to construct the universal cover of the symplectic group, but we can use the newly gained physical intuition to understand the key trouble when trying to repeat this construction for fermionic systems. The action of $\mathcal{U}(M)$ on $\ket{J}$ always yields another Gaussian state $\ket{J'}=\mathcal{U}(M)\ket{J}$. For bosons, $\braket{J|J'}\neq 0$ as there are no two orthogonal bosonic Gaussian states, while for fermions we have 
states with $\braket{J|J'}=0$. For such $\mathcal{U}(M)$, the circle function $\varphi$ is not defined, as zero does not have a uniquely defined complex phase. We already discussed in section~\ref{sec:construction-double-cover} how we can deal with this by moving to a different reference $\tilde{J}$, such that $\braket{\tilde{J}|J'}\neq0$. However, for computing $\braket{J|\mathcal{U}(M)|J}$ or its square, this is not a problem, as we do not need to know a complex phase whenever $\braket{J|\mathcal{U}(M)|J}=0$ and, as discussed previously, such $M$ form a set of measure zero and we can always more to a different complex structure $\tilde{J}$ with $\braket{\tilde{J}|\mathcal{U}(M)|\tilde{J}}\neq 0$.

\subsection{Unitary representation of the double cover}\label{sec:rep-double-cover}

With the results established in the previous section, we are now able to address one of the main aspects of our analysis, that is how our description of group elements in the double cover, introduced in section~\ref{sec:construction-double-cover}, can be used to parametrize a proper representation in terms of Gaussian unitaries.

More specifically, the relation~\eqref{eq:abs-val} implies that for any Gaussian unitary $\mathcal{U}(M)$ and Gaussian state $\ket{J}$, we have that $\braket{J|\mathcal{U}(M)|J}=\psi D(M)$. Here $D(M)$ is a positive number that depends solely on the group element $M$ as
\begin{equation}
    D(M)={\det}^{\mp \frac{1}{4}} \left(\frac{M-JMJ}{2}\right) \,, \label{eq:def-D}
\end{equation}
where the $-/+$ signs refer to bosons and fermions respectively.
The phase $\psi$, on the other hand, is not fully determined by $M$. The only constraint that $\psi$ has to respect is that $\psi^2=\varphi(M)$ (or $\varphi^*(M)$ for bosons) according to~\eqref{eq:phase-exp-squared}, meaning that there are always two possible values of $\psi$ for each $M$ corresponding to two inequivalent unitaries. This is a direct consequence of the fact that, as already discussed, the unitaries $\mathcal{U}$ cannot be fully parametrized just by elements of the group $\mathcal{G}$, of which they do not form a proper representation.

To resolve this, we extend the parametrisation of a unitary $\mathcal{U}(M)$ to also include the parameter $\psi$. For every choice of $M\in\mathcal{G}$ and compatible $\psi$ there now exists a single, fully-defined Gaussian unitary $\mathcal{U}(M,\psi)$. As $\psi$ has to satisfy relation~\eqref{eq:phase-exp-squared}, we see that $(M,\psi)$ is an element of the double cover group $\widetilde{\mathcal{G}}$ as defined in~\eqref{eq:double-cover-def}. So we have effectively established an isomorphism between $\widetilde{\mathcal{G}}$ and the set of Gaussian unitaries. As we will now show, this actually gives a proper group representation, meaning that the unitaries $\mathcal{U}(M,\psi)$ follow the same product rules~\eqref{eq:group-multiplication-double-cover} as the double cover elements. This construction is expressed more rigorously by the following result.

\begin{result2}
The map $\mathcal{U}: \widetilde{\mathcal{G}}\to \mathrm{Lin}(\mathcal{H})$ characterized by the conditions
\begin{align}
    \mathcal{U}^\dagger(M,\psi)\hat{\xi}^a\mathcal{U}(M,\psi)&=M^a{}_b\hat{\xi}^b\,,\label{eq:cond1}\\
    \frac{\braket{J|\mathcal{U}(M,\psi)|J}}{|\braket{J|\mathcal{U}(M,\psi)|J}|}&=\begin{cases}
    \psi^* & \normalfont{\textbf{(bosons)}}\\
    \psi & \normalfont{\textbf{(fermions)}}
    \end{cases}\,,\label{eq:cond2}
\end{align}
forms a unitary representation satisfying
\begin{align}
    \hspace{-3mm}\mathcal{U}(M_1,\psi_1)\mathcal{U}(M_2,\psi_2)=\mathcal{U}(M_1M_2,\psi_1\psi_2e^{\ii\eta(M_1,M_2)/2})\,.\label{eq:group-multiplication-representation}
\end{align}
This in particular implies $\mathcal{U}^\dagger(M,\psi)=\mathcal{U}(M^{-1},\psi^*)$.
\end{result2}
\begin{proof}
As explained in proposition 6 of~\cite{hackl2021bosonic}, condition~\eqref{eq:cond1} characterizes $\mathcal{U}(M,\psi)$ uniquely up to a complex phase. This complex phase is then fixed by condition~\eqref{eq:cond2}, which is compatible with~\eqref{eq:phase-exp-squared} for any $(M,\psi)$ in the double cover group $\widetilde{\mathcal{G}}$ as defined in~\eqref{eq:double-cover-def}. So the unitary $\mathcal{U}(M,\psi)$ is unique and well-defined\footnote{Strictly speaking, this is the case for all group elements for bosons, while for fermions there is the measure-zero quasi-boundary $\mathcal{B}_{\mathcal{G}}$ from~\eqref{eq:B_G}, where $\braket{J|\mathcal{U}(M,\psi)|J}=0$ and relation~\eqref{eq:cond2} is ill-defined.} and $\mathcal{U}$ is indeed an injective map on $\widetilde{\mathcal{G}}$.\\
To prove the product rule~\eqref{eq:group-multiplication-representation}, consider two fixed elements $(M_1,\psi_1)$ and $(M_2,\psi_2)$. For any such choice it is always possible to find a continuous deformation $(M_2(\tau),\psi_2(\tau))$ which connects the second double cover element to plus or minus identity. More precisely, we can find the functions $M_2(\tau)$ and $\psi_2(\tau)$ continuous in $\tau$ such that $(M_2(1),\psi_2(1))=(M_2,\psi_2)$ and $(M_2(0),\psi_2(0))=(\id,\pm 1)$. This can be done, for instance, by considering the Cartan decomposition $M_2=e^{K_+}u$ and defining $M_2(\tau)=e^{\tau K_+} e^{\tau \log u}$. We then define $\psi_2(\tau)=\sigma_2(\tau)\sqrt{\varphi(M_2(\tau))}$, where $\sigma_2(\tau)$ is just a sign that flips whenever $\varphi(M_2(\tau))$ crosses the branch cut of the square root in order to keep $\psi_2(\tau)$ continuous.\\
Consider now the product $\mathcal{U}(M_1,\psi_1)\mathcal{U}(M_2(\tau),\psi_2(\tau))$. This is for sure a Gaussian unitary that satisfies condition~\eqref{eq:cond1} with respect to $M=M_1 M_2(\tau)$. Regarding condition~\eqref{eq:cond2}, let us define $\psi_{12}(\tau)$ to be the phase of $\braket{J|\mathcal{U}(M_1,\psi_1)\mathcal{U}(M_2(\tau),\psi_2(\tau))|J}^*$ for bosons and of $\braket{J|\mathcal{U}(M_1,\psi_1)\mathcal{U}(M_2(\tau),\psi_2(\tau))|J}$ for fermions. Result~\eqref{eq:phase-exp-squared} implies that it must satisfy 
\begin{align}
    \psi_{12}(\tau)^2&=\varphi(M_1 M_2(\tau)) \\
    &=\varphi(M_1)\,\varphi(M_2(\tau))\,e^{\ii \eta(M_1,M_2(\tau))} \\
    &=\psi_1^2\, \psi_2(\tau)^2 \, e^{\ii \eta(M_1,M_2(\tau))}\,,
\end{align}
where we have used~\eqref{eq:cocycle-condition}. This means that we must have
\begin{equation}
    \psi_{12}(\tau)=\sigma_{12}(\tau)\,\psi_1\, \psi_2(\tau) \, e^{\ii \eta(M_1,M_2(\tau))/2}\,, \label{eq:psi-multiplication-representation-tau}
\end{equation}
where $\sigma_{12}(\tau)$ is a sign. Notice now that $\psi_{12}(\tau)$ must be continuous in $\tau$, because we have explicitly defined $\mathcal{U}(M_2(\tau),\psi_2(\tau))$ to be a continuous function of $\tau$. Given that $M_2(\tau)$ and $\psi_2(\tau)$ are continuous by construction and that $e^{\ii \eta(M_1,M_2)/2}$ is continuous by proposition~\ref{prop:eta-continuity}, this can only be possible if the sign $\sigma_{12}$ does not change with $\tau$. The constant value $\sigma_{12}(\tau)=+1$ can be fixed by evaluating~\eqref{eq:psi-multiplication-representation-tau} for $\tau=0$ and observing that $e^{\ii \eta(M_1,M_2(0))/2}=e^{\ii \eta(M_1,\id)/2}=+1$ according to~\eqref{eq:eta2} and that
\begin{align}
    \psi_{12}(0)&=\frac{\braket{J|\mathcal{U}(M_1,\psi_1)\,\mathcal{U}(\id,\psi_2(0))|J}}{|\braket{J|\,\mathcal{U}(M_1,\psi_1)\mathcal{U}(\id,\psi_2(0))|J}|} \\
    &=\frac{\braket{J|\mathcal{U}(M_1,\psi_1)\,(\psi_2(0)\cdot\id)|J}}{|\braket{J|\mathcal{U}(M_1,\psi_1)\,(\psi_2(0)\cdot\id)|J}|} \\
    &=\frac{\braket{J|\mathcal{U}(M_1,\psi_1)|J}}{|\braket{J|\mathcal{U}(M_1,\psi_1)|J}|} \psi_2(0)\\
    &=\psi_1\,\psi_2(0) 
\end{align}
for fermions (and the same result can be derived for bosons by including the appropriate complex conjugations).
Evaluating~\eqref{eq:psi-multiplication-representation-tau} for $\tau=1$ shows that $\mathcal{U}(M_1,\psi_1)\mathcal{U}(M_2,\psi_2)$ satisfies condition~\eqref{eq:cond2} with $\psi=\psi_1\, \psi_2\, e^{\ii \eta(M_1,M_2)/2}$, implying the final result~\eqref{eq:group-multiplication-representation}.
Relation $\mathcal{U}^\dagger(M,\psi)=\mathcal{U}(M^{-1},\psi^*)$ then follow from~\eqref{eq:group-multiplication-representation} and~\eqref{eq:eta3} by evaluating $\mathcal{U}(M^{-1},\psi^*)\mathcal{U}(M,\psi)=\mathcal{U}(\id,+1)=\id$.
\end{proof}

The main insight of the proof above is that the rule~\eqref{eq:group-multiplication-double-cover}, that we have introduced on purely abstract grounds, is actually the only possible choice to define a consistent and \emph{continuous} multiplication for objects parametrised by $(M,\psi)$ which must satisfy $\psi^2=\varphi(M)$. This means in particular that the function $e^{\ii \eta(M_1,M_2)/2}$ also has a concrete definition in terms of quantum mechanical operators. 
More precisely, using the shorthand $\mathcal{U}_i=\mathcal{U}(M_i,\psi_i)$ we have
\begin{align}
    &\braket{J|\mathcal{U}_1|J}\braket{J|U_1^\dag\,\mathcal{U}_2|J}\braket{J|\mathcal{U}_2^\dag|J}=\nonumber\\
    &\hspace{10mm} = D(M_1) D(M_1^{-1}M_2) D(M_2^{-1}) \, e^{\ii\eta(M_1^{-1},M_2)/2} \,. \label{eq:A-as-eta}
\end{align}
That is, $e^{\ii\eta(M_1^{-1},M_2)/2}$ is the phase of the product $\braket{J|\mathcal{U}_1|J}\braket{J|U_1^\dag\,\mathcal{U}_2|J}\braket{J|\mathcal{U}_2^\dag|J}$, which turns out to depend only on the group elements $M_1$ and $M_2$.

\subsection{Full expectation value}\label{sec:exact-expectation}
Given a unitary operator $e^{\widehat{K}}$, where $\widehat{K}$ is a quadratic operator, it is clear that this corresponds to a Gaussian unitary $\mathcal{U}(M,\psi)$ with $M=e^K$. In order to also determine $\psi$, however, it is necessary to evaluate $\braket{J|e^{\widehat{K}}|J}$, including its correct phase.

One possible strategy to do this is to observe that, for fixed $J$, the trajectory $\widetilde{\gamma}(t)=e^{t\widehat{K}}$ is a continuous path on the double cover $\widetilde{\mathcal{G}}$, such that its projection under the map $\sigma$ from~\eqref{eq:sigma} is given by $\gamma(t)=e^{tK}$. To determine the correct sign in the square root of~\eqref{eq:exp-squared}, we would need to follow the path $\langle J|e^{t\widehat{K}}|J\rangle^2$ for $t\in[0,1]$ written as a determinant and count how often we transverse through the branch cut and take the correct sign to keep the path continuous (plus sign for an even number, minus sign for an odd number of crossings). This is cumbersome or requires numerical integration.

Instead, we will take a different route leading to a closed formula for the complex phase $\arg\braket{J|e^{\widehat{K}}|J}$. 
While the result will not be as compact as~\eqref{eq:exp-squared}, it will still be easy to evaluate numerically and in principle also allows further analytical studies. We would like to stress though that depending on the desired application the previous compact result~\eqref{eq:exp-squared} may already be everything that is needed.

The basic idea is that everything would be easy if $\widehat{K}=-\ii\sum^{N}_{i=1}\omega_i(\hat{n}_i\pm\frac{1}{2})$ for bosons ($+$) and fermions ($-$), respectively, such that $\ket{J}$ is the ground state of all these number operators with $\hat{n}_i\ket{J}=0$. In this case, we would immediately find\footnote{Throughout, we mean by $\arg$ the multi-valued function whose values are only defined modulo $2\pi$ rather than the principal value defined on a specific range.}
\begin{align}
    \arg\braket{J|e^{\widehat{K}}|J}=\mp\sum^N_{i=1}\frac{\omega_i}{2}=\pm\frac{1}{4}\Tr(JK)\,,\label{eq:arg-eigenstate}
\end{align}
where we used that $\braket{J|\widehat{K}|J}=\pm\frac{\ii}{4}\Tr(JK)$. Unfortunately, we cannot expect in general that $\ket{J}$ is the ground state or more generally an eigenstate of $\widehat{K}$. However, we will argue that we can always find a basis transformation $M$ that turns $\widehat{K}$ into a different operator $\widehat{K_J}=\widehat{MKM^{-1}}$,
such that $\ket{J}$ is an eigenstate, or in the case of bosons, such that $\ket{J}$ is an eigenstate of the piece of $\widehat{K_J}$ related to the imaginary eigenvalues of $K$. We can then use the cocycle group multiplication~\eqref{eq:group-multiplication-double-cover} to compute $\braket{J|e^{\widehat{K}}|J}$ from $\braket{J|e^{\widehat{K_J}}|J}$.

It is interesting that both, bosons and fermions, come with their own advantages and drawbacks for this calculation, as indicated in table~\ref{tab:ad-draws}. We demonstrate the utility of our respective formulas for randomly generated quadratic Hamiltonians in figure~\ref{fig:example-trajectories}.

\renewcommand{\arraystretch}{1.2}
\begin{table}[t]
    \centering
    \caption{\emph{Advantages and drawbacks for the bosonic and fermionic calculation, respectively.}}
    \label{tab:ad-draws}
    \begin{tabular}{l p{.2cm} p{3cm} p{.2cm}p{3cm}}
    \hline
    \hline
    && \textbf{Advantage}     && \textbf{Drawback} \\
    \hline
    \textbf{Bosons} &&  $\varphi$ is defined everywhere.    && $K$ may have a non-diagonalizable part.\\
    \textbf{Fermions} && $K$ can always be diagonalized. && $\varphi$ is not defined everywhere.\\
    \hline
    \hline
    \end{tabular}
\end{table}

\subsubsection{Bosonic systems}
The challenge of the bosonic case is that the generator $K\in\mathfrak{sp}(2N,\mathbb{R})$ may not be diagonalizable. Moreover, this means that $\widehat{K}$ may also not be fully diagonalizable with a Gaussian eigenstate. To give some intuition, we can consider the following three example Hamiltonians
\begin{align}
    \hat{H}_1=\frac{1}{2}(\hat{p}^2+\hat{q}^2)\,,\quad\hat{H}_2=\frac{1}{2}(\hat{p}^2-\hat{q}^2)\,,\quad\hat{H}_3=\frac{1}{2}\hat{p}^2\,,
\end{align}
associated to the symplectic generators
\begin{align}
    K_1=\begin{pmatrix}
        0 & 1\\
        -1 & 0
    \end{pmatrix}\,,\,K_2=\begin{pmatrix}
        0 & 1\\
        1 & 0
    \end{pmatrix}\,,\,K_3=\begin{pmatrix}
        0 & 1\\
        0 & 0
    \end{pmatrix}\,.
\end{align}
Only $K_1$ has imaginary eigenvalues implying that $\hat{H}_1$ has a Gaussian eigenstate, which is also its ground state. In contrast, $K_2$ has real eigenvalues $\pm 1$ and $K_3$ is nilpotent, which implies that neither $\hat{H}_2$ nor $\hat{H}_3$ has Gaussian eigenstates. It was shown in~\cite{hackl2018entanglement} that evolving a Gaussian state with each of these three types of Hamiltonians yields different characteristic behavior of the entanglement entropy, namely bounded oscillations, linear growth and logarithmic growth.

The basic idea is that we can always decompose $K=K_I+K'$, where $K_I$ is of type $K_1$, \ie has purely imaginary eigenvalues and there exists a Gaussian eigenstate---not necessarily ground state, as the eigenmodes of the Hamiltonian $\omega (\hat{n}_i+\tfrac{1}{2})$ could also have negative $\omega$. The remaining piece $K'$ can be decomposed as $K'=K_R+K_N$, where $K_R$ is of type of $K_2$ with purely real eigenvalues and $K_N$ is nilpotent, so of type $K_3$. Bringing quadratic Hamiltonians and thus, equivalently, symplectic Lie generators $K$ into a normal form is a well-known problem in classical mechanics, as already discussed by Arnolds in~\cite{arnol2013mathematical}. More recently, the authors of~\cite{Kustura:2019PhRvA..99b2130K} presented a modern review of this problem, from which we will draw some key ingredients as discussed in appendix~\ref{app:normal-form}. Let us emphasize that only pieces of type $K_3$ (nilpotent part) make this careful consideration necessary, while for diagonalizeble generators containing only parts of type $K_1$ and $K_2$ (no nilpotent parts), the resulting calculation simplifies dramatically and the analysis of appendix~\ref{app:normal-form} based on~\cite{Kustura:2019PhRvA..99b2130K} is not needed.

\begin{result3a}[Bosons]
Given $K\in\mathfrak{sp}(2N,\mathbb{R})$ and a pure Gaussian state $\ket{J}$, we first decompose $K=K_I+K'$, such that $[K_I,K']=0$ and $K_I$ has purely imaginary eigenvalues. Furthermore, there exists a complex structure $\tilde{J}$ that takes the standard form~\eqref{eq:J-standard-form} in the basis, where $K$ is quasi-diagonal according to proposition~\ref{prop:K-normal-form} and appropriately rescaled according to lemma~\ref{lem:rescaling}, such that $[K_I,\tilde{J}]=0$, from which we can compute
\begin{align}
\begin{split}\label{eq:main-bosons}
    \arg\braket{J|e^{\widehat{K}}|J}&=\tfrac{1}{4}\Tr(K_I\tilde{J})+\tfrac{1}{2}\eta(T^{-1},e^{K})\\
    &\quad -\arg\overline{\det}\sqrt{T^{-1}\tfrac{e^{K'}-\tilde{J}e^{K'}\tilde{J}}{2}T}\\
    &\quad +\frac{1}{2}\eta(T^{-1}e^{K},T)\,,
\end{split}
\end{align}
where $T=\sqrt{-\tilde{J}J}$ and the square root in the second line must be evaluated, such that the square roots of negative real eigenvalue pairs are conjugate to each other (and thus do not contribute towards the argument). 
\end{result3a}
\begin{proof}
The decomposition $K=K_I+K'$ follows from the Jordan–Chevalley decomposition of a linear map on a finite-dimensional vector space, as explained in appendix~\ref{app:normal-form}. We then use proposition~\ref{prop:K-normal-form} and lemma~\ref{lem:rescaling} to choose a symplectic basis, where $K$ takes an appropriate standard form, and choose the complex structure $\tilde{J}$ to have the standard form~\eqref{eq:J-standard-form} in this basis, such that $[K_I,\tilde{J}]=0$.\\
If we consider $\mathcal{U}(e^{K},\psi_{e^{K}})=e^{\widehat{K}}$, relation~\eqref{eq:cond2} states that $\arg{\braket{J|e^{\widehat{K}}|J}}=-\arg{\psi_{e^K}}$. We can then use proposition~\ref{prop:switch-J} to relate $\psi_{e^K}$, computed with respect to $J$, with $\tilde{\psi}_{e^{K}}$, computed with respect to $\tilde{J}$, as
\begin{align}
\tilde{\psi}_{e^{K}}=\psi_{e^K}e^{\ii[\eta(T^{-1},M)+\eta(T^{-1}M,T)]/2}\,.\label{eq:tildepsi-psi0}
\end{align}
We can then solve for $\psi_{e^{K}}$ after we have computed $\tilde{\psi}_{e^K}$. For this, we compute
\begin{align}
\begin{split}\label{eq:simplify-exp}
    -\arg\tilde{\psi}_{e^K}&=\arg\braket{\tilde{J}|e^{\widehat{K}}|\tilde{J}}\\
    &=\arg\braket{\tilde{J}|e^{\widehat{K}_I}e^{\widehat{K}'}|\tilde{J}}\\
    &=\tfrac{1}{4}\Tr(K_I\tilde{J})+\arg\braket{\tilde{J}|e^{\widehat{K}'}|\tilde{J}}\,,
\end{split}
\end{align}
where we used that $\ket{\tilde{J}}$ is an eigenstate of $\widehat{K}_I$ (due to $[K_I,\tilde{J}]=0$) with eigenvalues $\braket{\tilde{J}|\widehat{K}|\tilde{J}}=\frac{\ii}{4}\Tr(K_I\tilde{J})$. From~\eqref{eq:exp-squared}, we know that
\begin{align}
    \arg\braket{\tilde{J}|e^{\widehat{K}'}|\tilde{J}}^2=-\arg\widetilde{\overline{\det}}(\tfrac{e^{K'}-\tilde{J}e^{K'}\tilde{J}}{2})\,,
\end{align}
where the $\widetilde{\overline{\det}}$ indicate that we decompose according to appendix~\ref{app:normal-form} with respect to $\tilde{J}$. By inserting $T^{-1}$ and $T$, we can change this to the original decomposition with respect to $J$, \ie
\begin{align}\label{eq:det-expressions}
\widetilde{\overline{\det}}(\tilde{C}_{e^{K'}})=\overline{\det}(T^{-1}\tilde{C}_{e^{K'}}T)\,,
\end{align}
where $\tilde{C}_M=\frac{1}{2}(M-\tilde{J}M\tilde{J})$. In order to evaluate~\eqref{eq:simplify-exp}, we would need to compute the square root of~\eqref{eq:det-expressions}, but doing this naively would only yield $\arg$ only in the interval $(-\tfrac{\pi}{2},\tfrac{\pi}{2}]$. However, proposition~\ref{prop:eigenvalues-C} states that instead the function
\begin{align}
    \widetilde{\overline{\det}}\sqrt{\tilde{C}_{e^{tK'}}}=\overline{\det}\sqrt{T^{-1}\tfrac{e^ {K'}-Je^{K'}J}{2}T}\label{eq:dettildeCeK}
\end{align}
is continuous for $t\in[0,1]$ for an appropriately chosen $\tilde{J}$. By first applying the square root to find $\sqrt{\tilde{C}_{e^{tK'}}}$, we effectively apply the square root of each individual eigenvalue and then take the argument of their products. Therefore, the resulting argument can lie in the full interval $(-\pi,\pi]$ and proposition~\ref{prop:eigenvalues-C} ensures that either none or an even number of eigenvalues will cross the branch cut of the square root along the negative real axis. The only subtlety is that we need to ignore any negative eigenvalues which will appear with even multiplicity, as the correct treatment would be to assign $+\pi$ and $-\pi$ as their complex arguments, which will exactly cancel, so we can also just remove these eigenvalues from our calculation.\\
Finally, we can plug~\eqref{eq:dettildeCeK} and~\eqref{eq:simplify-exp} into~\eqref{eq:tildepsi-psi0}, which we solve for $\psi_{e^K}$ and use $\arg\braket{J|e^ {\widehat{K}}|J}=-\arg(\psi_{e^{K}})$ to arrive at~\eqref{eq:main-bosons}.
\end{proof}

Note that it is still crucial that in formula~\eqref{eq:main-bosons}, we compute the last term as $\overline{\det}\,{\sqrt{C_{e^{K'}}}}$ rather than $\sqrt{\overline{\det}\,{C_{e^{K'}}}}$. The reason is that while the individual eigenvalues of $e^{K'}$ do not cross the negative axis in the complex plane, the determinant as the product of all these eigenvalues may very well cross this axis, so that we must first take the square root of each eigenvalue.

In practice, the decomposition into $K_I$ and $K'$ can be done by first finding a complete set of (potentially generalized) eigenvectors $e^a_i$ with $i=1,\dots, 2N$ and eigenvalues $\lambda_i$. We refer to the associated dual eigenvectors as $\tilde{e}_a^i$ satisfying $e^a_i\tilde{e}_a^j=\delta_i^j$. With this, we have 
\begin{align}
    (K_I)^a{}_b=\sum^{2N}_{i=1}\ii\operatorname{Im}(\lambda_i)\,e^a_i\tilde{e}_b^i
\end{align}
and can then compute $K'=K-K_I$.

\begin{figure*}[t]
    \centering
    \begin{tikzpicture}
    \begin{scope}[xshift=3mm]
    \draw (-4.65,0) node[inner sep=0pt]{\includegraphics[width=.95\columnwidth]{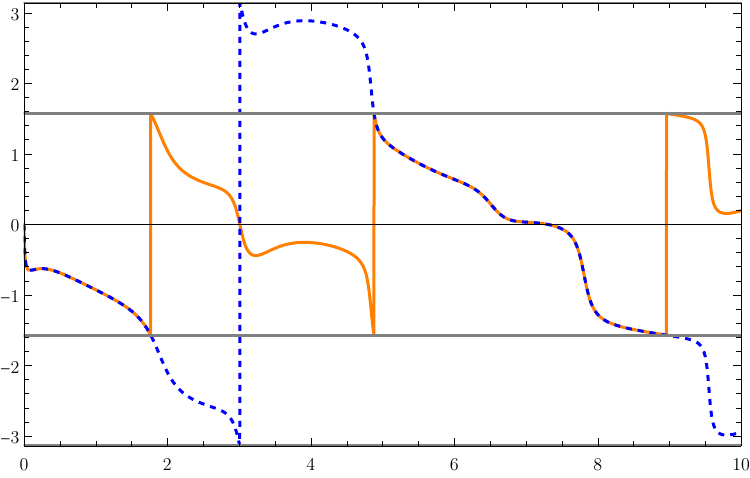}};
    \draw (-4.65,3) node{(a) bosons, $4$ degrees of freedom};
    \draw (-4.58,-2.6) node[font=\footnotesize]{$t$};
    \draw (-9.1,0.2) node[font=\footnotesize,rotate=90]{$\arg{\braket{J|e^{\widehat{K}}|J}}$};
    \end{scope}
    
    \draw (4.65,0) node[inner sep=0pt]{\includegraphics[width=.95\columnwidth]{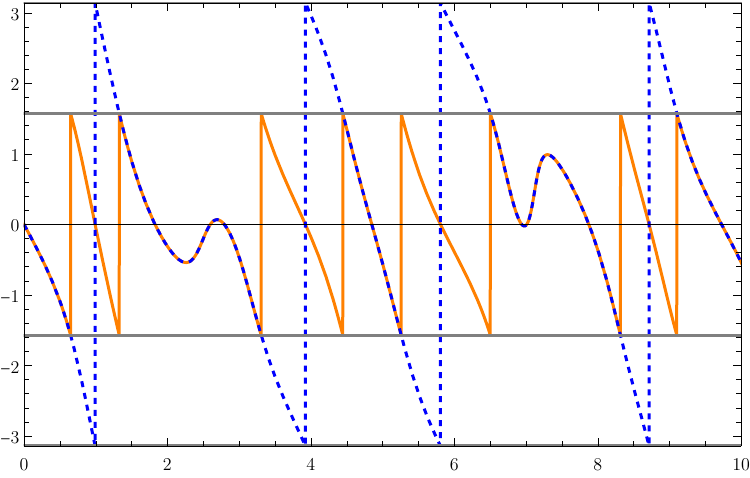}};
    \draw (4.65,3) node{(b) fermions, $4$ degrees of freedom};
    \draw (4.72,-2.6) node[font=\footnotesize]{$t$};
    \draw (0.2,0.2) node[font=\footnotesize,rotate=90]{$\arg{\braket{J|e^{\widehat{K}}|J}}$};
    \end{tikzpicture}
    \ccaption{Time evolution of complex phase}{We plot the complex phase $\arg{\braket{J|e^{\widehat{K}}|J}}\in[-\pi,\pi]$ as a function of $t$ for randomly generated quadratic Hamiltonians $\hat{H}=\ii \widehat{K}$ in (a) bosonic and (b) fermionic systems. We compare our closed formulas for bosons~\eqref{eq:main-bosons} and fermions~\eqref{eq:main-fermions} as blue, dashed line with the naive formula based on the square root of~\eqref{eq:exp-squared} as orange line. The naive formula misses the branch cuts and only gives values in the interval $[-\tfrac{\pi}{2},\tfrac{\pi}{2}]$ based on our square root definition.}
    \label{fig:example-trajectories}
\end{figure*}

\subsubsection{Fermionic systems}
We now consider the fermionic case. Here, we do not need to worry about decomposing $K$ into different commuting pieces, but there exist choices of $K$, such that $\braket{J|e^{\widehat{K}}|J}=0$, in which case its complex phase would not be well-defined. As discussed in section~\ref{sec:construction-double-cover}, in this case we could just move to a different reference complex structure.

\begin{result3b}[Fermions]\label{prop:expectation}
Given $K\in\mathfrak{so}(2N,\mathbb{R})$ and a pure Gaussian state $\ket{J}$, we can always find another complex structure $\tilde{J}$, such that $[K,\tilde{J}]=0$ and $\det(\id+\tilde{J}J)\neq 0$, from which we can compute
\begin{align}\label{eq:main-fermions}
    \arg\braket{J|e^{\widehat{K}}|J}=-\tfrac{1}{4}\Tr(K\tilde{J})-\tfrac{1}{2}\eta(T^{-1},e^{K})\,,
\end{align}
where $T=\sqrt{-\tilde{J}J}$.
\end{result3b}
\begin{proof}
As $K\in\mathfrak{so}(2N,\mathbb{R})$ implies that $K$ is an anti-symmetric matrix with respect to $G$, we can always use a special orthogonal transformation to bring $K$ into a quasi-diagonal real form and define a complex structure $\tilde{J}$ in the same orthonormal basis
\begin{align}
    K\equiv\bigoplus^{N}_{i=1}\begin{pmatrix}
        0 & \omega_i\\
        -\omega_i
    \end{pmatrix}\,,\quad \tilde{J}\equiv\bigoplus^{N}_{i=1}\begin{pmatrix}
        0 & \sigma_i\\
        -\sigma_i
    \end{pmatrix}\,,
\end{align}
such that $\omega_i\geq 0$ and we have the freedom of choosing each $\sigma_i=\pm 1$ individually. Clearly, we have $[K,\tilde{J}]=0$. Recall from our discussion around~\eqref{eq:cartan-fermion-defined} that $\Delta=-\tilde{J}J$ can have $-1$ eigenvalue pairs or quadruples, which we will be able to avoid by appropriately choosing $\sigma_i$. If there are eigenvalues $-1$, $J$ will have $2$-by-$2$ and $4$-by-$4$ blocks that only differ from the respective blocks in $\tilde{J}$ by an overall minus sign. By flipping the sign of these blocks in $\tilde{J}$ using our freedom to choose $\sigma_i$, we can ensure that $-\tilde{J}J$ does not have eigenvalues $-1$, such that $\det(\id+\tilde{J}J)\neq 0$. With this choice, there exists a unique $T=\sqrt{-\tilde{J}J}$, such that $\tilde{J}=TJT^{-1}$.\\
If we consider $\mathcal{U}(e^{K},\psi_{e^{K}})=e^{\widehat{K}}$, relation~\eqref{eq:cond2} states that $\arg{\braket{J|e^{\widehat{K}}|J}}=\arg{\psi_{e^K}}$. We can then use proposition~\ref{prop:switch-J} to relate $\psi_{e^K}$, computed with respect to $J$, with $\tilde{\psi}_{e^{K}}$, computed with respect to $\tilde{J}$, as
\begin{align}
\tilde{\psi}_{e^{K}}&=\psi_{e^K}e^{\ii[\eta(T^{-1},e^K)+\eta(T^{-1}e^K,T)]/2}\,.\label{eq:tildepsi-psi}
\end{align}
Lemma~\ref{lem:cocycle-properties} and setting $u=T^{-1}e^{-K}T$ allows us to find
\begin{align}
\begin{split}\label{eq:remove-one-eta}
    \eta(T^{-1}e^K,T)&=\eta(T^{-1}e^K,Tu)\\
    &=\eta(T^{-1}e^K,e^{-K}T)\\
    &=\eta(M,M^{-1})=0\,,
\end{split}
\end{align}
where we used~\eqref{eq:eta1} with $u\in\mathrm{U}(N)$ due to $[u,J]=T^{-1}[e^{-K},\tilde{J}]T=0$ in the first line and~\eqref{eq:eta3} in the last.\\
Relation $[K,\tilde{J}]=0$ implies that $\ket{J}$ is an eigenvector of $\widehat{K}$ and we can thus compute $\tilde{\psi}_{e^K}$ directly as
\begin{align}
    \tilde{\psi}_{e^K}=\braket{\tilde{J}|e^{\widehat{K}}|\tilde{J}}=e^{-\frac{\ii}{4}\Tr(K\tilde{J})}\,.
\end{align}
Plugging this and~\eqref{eq:remove-one-eta} into~\eqref{eq:tildepsi-psi}, solving it for $\psi_{e^K}$ and taking its argument then yields~\eqref{eq:main-fermions}.
\end{proof}

\subsection{Generalized Wick theorem}
\label{sec:generalized-wick}
So far we have discussed in depth how to evaluate expressions of the form $\braket{J|e^{\widehat{K}}|J}$ or $\braket{J|\mathcal{U}(M,\psi)|J}$. Here we will now consider an extension of these expressions, namely
\begin{equation}
    \braket{J|\, \hat{\xi}^{a_1}\cdots\hat{\xi}^{a_d} \, \mathcal{U}(M,\psi) |J}\,. \label{eq:generalized-wick}
\end{equation}
This can be seen as a generalization of the well-known Wick theorem~\cite{wick_evaluation_1950}, which deals with the expectation of monomials like $\hat{\xi}^{a_1}\cdots\hat{\xi}^{a_d}$ on a single Gaussian state, expressing them just as functions of the covariance matrix. Here we deal with the overlap of such monomials between two different Gaussian states related by the transformation $\mathcal{U}(M,\psi)$.

This generalized Wick expectation can be evaluated as follows.
\begin{result4}
    The expression $\braket{J|\, \hat{\xi}^{a_1}\cdots\hat{\xi}^{a_d} \, \mathcal{U}(M,\psi) |J}$ vanishes for odd $d$, while if $d$ is even it evaluates to 
    \begin{equation}
        \braket{J| \mathcal{U}(M,\psi) |J}\sum_\pi  \frac{|\pi|}{d!}\widetilde{C}^{\pi(a_1) \pi(a_2)} \cdots \widetilde{C}^{\pi(a_{d-1}) \pi(a_{d})} \,, \label{eq:generalized-wick-result}
    \end{equation}
    where the sum runs over all permutations with $\pi(2i-1)<\pi(2i)$ and $|\pi|$ is $1$ for bosons and equal to the sign of the permutation for fermions. The effective covariance matrix $\widetilde{C}$ is given by
    \begin{align}
        \widetilde{C}^{ab}&= R^{a}{}_{c} R^{b}{}_{d} \braket{J|\hat{\xi}^{c} \hat{\xi}^{d}|J} \\
        &=\frac{1}{2} R^{a}{}_{c} (G^{cd}+\ii \Omega^{cd})  R^{b}{}_{d} \,,
    \end{align}
    with $R=1+\frac{1}{4}(1-\ii J)\, \mathrm{tanh}K_+ \,(1+\ii J)$, given that $M=e^{K_+}u$ is a valid Cartan decomposition of $M$, as discussed in section~\ref{sec:cartan}.
\end{result4}
\begin{proof}
To compute~\eqref{eq:generalized-wick} let us suppose that the group element $M$ can be decomposed according to the Cartan decomposition $M=Tu$, with $T=e^{K_+}$. 
This means that we can also decompose $\mathcal{U}(M,\psi)=\sigma \,e^{\widehat{K}_+} \,\mathcal{U}(u)$, where the sign factor $\sigma\in\{+1,-1\}$ arises as a consequence of dealing with a projective representation. It turns out that we do not need to evaluate $\sigma$ explicitly, as we will see that we can reabsorbe it later. Similarly it does not matter what phase is chosen for the unitary $\mathcal{U}(u)$, as long as it is consistent. Observing, as in section~\ref{sec:squared-expectation}, that $\ket{J}$ is an eigenstate of $\mathcal{U}(u)$, we have
\begin{equation}
    \mathcal{U}(M,\psi)\ket{J}=\sigma \,e^{\widehat{K}_+}\ket{J}\braket{J|\mathcal{U}(u)|J}\,. \label{eq:cartan-for-wick}
\end{equation}
 
Using the results of reference~\cite{hackl2021bosonic} (Section 3.1.3) one can further decompose $e^{\widehat{K}_+}$ in terms of the operators $\hat{\xi}_\pm^a$ and in particular show that
\begin{equation}
    e^{\widehat{K}_+}\ket{J}= \mathcal{U}_+ \ket{J}\braket{J|e^{\widehat{K}_+}|J}\,,
\end{equation}
where we have defined
\begin{align}
\mathcal{U}_+=\begin{cases}
    e^{-\frac{\ii}{2}\omega_{ac}L^c{}_b \hat{\xi}_+^a\hat{\xi}_+^b}   &\textbf{(bosons)}\\
    e^{\frac{1}{2}g_{ac}L^c{}_b \hat{\xi}_+^a\hat{\xi}_+^b}   &\textbf{(fermions)}
\end{cases}\,.
\end{align}
Here, we have that $L=\tanh K_+$ and $\hat{\xi}_+^a=\tfrac{1}{2}(\delta^a{}_b+\ii J^a{}_b)\hat{\xi}^b$, such that $\bra{J} \hat{\xi}^a_+=0$ according to~\eqref{eq:annihilation-op}.
Combining these results we have
\begin{align}
    &\braket{J| \,\hat{\xi}^{a_1}\cdots\hat{\xi}^{a_d} \, \mathcal{U}(M,\psi) |J}  \nonumber \\
    &\hspace{15mm}=\braket{J| \,\hat{\xi}^{a_1}\cdots\hat{\xi}^{a_d} \, \mathcal{U}_+ |J} \times \nonumber \\
    &\hspace{30mm} \times \sigma \braket{J|e^{\widehat{K}_+}|J}\braket{J|U(u)|J}  \\
    &\hspace{15mm}=\braket{J| \,\hat{\xi}^{a_1}\cdots\hat{\xi}^{a_d} \, \mathcal{U}_+ |J} \braket{J|\mathcal{U}(M,\psi)|J}\,,
\end{align}
where in the last step we have used~\eqref{eq:cartan-for-wick} again.

We see therefore that in order to evaluate generalized Wick expectation values it is necessary to compute $\braket{J|\mathcal{U}(M,\psi)|J}$, including its complex phase, to which the previous results of this paper are dedicated. It is also necessary to compute $\braket{J|\hat{\xi}^{a_1}\cdots\hat{\xi}^{a_d} \mathcal{U}_+ |J}$, but this can be achieved easily.

Notice, indeed, that, although $\mathcal{U}_+$ is not unitary, it is still the exponential of a quadratic combination of linear operators $\hat{\xi}$. Therefore its action can be evaluated in closed form using the canonical (anti)commutation relations:
\begin{equation}
    U^{-1}_+ \, \hat{\xi}^a \, \mathcal{U}_+ = R^a{}_b \, \hat{\xi}^b\,,
\end{equation}
where $R=\id+P_- L P_+$ with $P_\pm=\tfrac{1}{2}(\id\pm\ii J)$. Using this to commute $\mathcal{U}_+$ to the left and observing that $\bra{J}\mathcal{U}_+=\bra{J}$, we have 
\begin{equation}
    \braket{J|\hat{\xi}^{a_1}\cdots\hat{\xi}^{a_d} \mathcal{U}_+ |J} = R^{a_1}{}_{b_1} \cdots R^{a_d}{}_{b_d} \braket{J|\hat{\xi}^{b_1}\cdots\hat{\xi}^{b_d}|J}
\end{equation}
which can be evaluated using the standard Wick theorem, leading to the result~\eqref{eq:generalized-wick-result}.
\end{proof}

This result can also be understood in the following way. The generalized Wick expectation values~\eqref{eq:generalized-wick} can be also rewritten as
\begin{equation}
    \braket{J|\mathcal{U}(M,\psi) |J} \tr \left[\frac{ \mathcal{U}(M,\psi)\ket{J}\bra{J}}{\braket{J|\mathcal{U}(M,\psi) |J}}\,\hat{\xi}^{a_1}\cdots\hat{\xi}^{a_{d}} \right]\,.
    \label{eq:phase-space-rep}
\end{equation}
From the perspective of the phase space representation of Quantum Mechanics this trace corresponds to computing the $d$-th moment of the characteristic function of 
\begin{equation}
    \hat{O}=\frac{ \mathcal{U}(M,\psi)\ket{J}\bra{J}}{\braket{J|\mathcal{U}(M,\psi) |J}}\,.
\end{equation}
Although $\hat{O}$ is not the density operator of a well-defined Gaussian state (it is not Hermitian), its characteristic function is still a normalized Gaussian distribution. Indeed, it is the product of two objects that have Gaussian characteristic functions: $\ket{J}\bra{J}$, because it is a Gaussian state, and $\mathcal{U}(M,\psi)$, because it is a Gaussian operator. The characteristic function of a product is the convolution of the characteristic function of the factors, so it is still Gaussian. The denominator of the fraction ensures the normalization. 

The moments of a normalized Gaussian distribution follow Wick's theorem (or Isserlis' theorem, as it is normally called in the context of probability distributions), so~\eqref{eq:phase-space-rep} must have the form~\eqref{eq:generalized-wick-result}. To compute the value of two-point functions $\widetilde{C}^{ab}$ one could also compute the phase space Gaussian integrals to find the characteristic function of $\hat{O}$. In this paper we avoid this step by exploiting group theory arguments instead.

\section{Case studies: Bosonic and fermionic systems}\label{sec:CaseStudies}
We will illustrate our results for the simplest non-trivial systems, namely one bosonic mode and two fermionic modes. The system of a single fermionic mode is trivial, as the manifold of Gaussian states only consists of two points, often denoted by $\ket{0}$ and $\ket{1}$ with $\braket{0|1}=0$.

\subsection{One bosonic mode}
We consider a single bosonic degree of freedom with Hermitian operators $\hat{\xi}^a\equiv(\hat{q},\hat{p})$, such that $[\hat{\xi}^a,\hat{\xi}^b]=\ii\Omega^{ab}$ with $\Omega$ taking the standard form~\eqref{eq:Omega-G-standard-form}, that act on the infinite dimensional Hilbert space $\mathcal{H}$, a bosonic Fock space.

\textbf{Lie group.} We consider the symplectic group $\mathrm{Sp}(2,\mathbb{R})$, which is identical to $\mathrm{SL}(2,\mathbb{R})$. An elegant parametrization of this group is given by
\begin{align}
    M\!=\!\left(
    \begin{array}{cc}
    \cos{\tau}\,\ch{\rho}+\sin{\theta}\,\sh{\rho} & \sin{\tau}\,\ch{\rho}-\cos{\theta}\,\sh{\rho}\\
    -\sin{\tau}\,\ch{\rho}-\cos{\theta}\,\sh{\rho} & \cos{\tau}\,\ch{\rho}-\sin{\theta}\,\sh{\rho}
    \end{array}\right),
\end{align}
where $\sh=\sinh$ and $\ch=\cosh$. It is easy to verify that $\det M=1$, but it takes some more effort to show that every such matrix can be written in this form. Furthermore, we can verify that $(\rho,\theta)$ are polar coordinates describing a plane, while $\tau$ is a periodic coordinate describing a circle. Topologically, we therefore have $\mathrm{Sp}(2,\mathbb{R})\cong\mathbb{R}^2\times S^1$, which we visualize in figure~\ref{fig:case-study-bosonic}. Given a reference complex structure
\begin{align}
    J\equiv\begin{pmatrix}
    0 & \id\\
    -\id & 0
    \end{pmatrix}\,,
\end{align}
we have the stabilizer of $J$ given by
\begin{align}
    \mathrm{U}(1)=\{u\in\mathrm{Sp}(2,\mathbb{R})\,|\,uJu^{-1}=J\}\,.
\end{align}
This subgroup is parametrized by $M$ for $\rho=0$, such that $\theta$ becomes irrelevant, and $\tau\in[0,2\pi]$. We can use the identification from~\eqref{eq:ComplexDecomposition} to parametrize $u$ as the $1$-by-$1$ matrix $\overline{u}=e^{\ii \tau}$ equal to its determinant.

\begin{figure*}[t]
    \centering
    \begin{tikzpicture}[scale=1.1]
    \tikzset{->-/.style={decoration={
      markings,
      mark=at position #1 with {\arrow{>}}},postaction={decorate}}}
    \coordinate (A) at (0,0,-1);
    \coordinate (B) at (0,0,-6);
    \coordinate (C) at (6,0,-6);
    \coordinate (D) at (6,0,-1);
    \coordinate (S) at (0,-2.8,0);
    \coordinate (E) at ($(A)+(S)$);
    \coordinate (F) at ($(B)+(S)$);
    \coordinate (G) at ($(C)+(S)$);
    \coordinate (H) at ($(D)+(S)$);
    \coordinate (S2) at (0,-1.4,0);
    \coordinate (E2) at ($(A)+(S2)$);
    \coordinate (F2) at ($(B)+(S2)$);
    \coordinate (G2) at ($(C)+(S2)$);
    \coordinate (H2) at ($(D)+(S2)$);
    \coordinate (S3) at (0,-5,0);
    \coordinate (E3) at ($(A)+(S3)$);
    \coordinate (F3) at ($(B)+(S3)$);
    \coordinate (G3) at ($(C)+(S3)$);
    \coordinate (H3) at ($(D)+(S3)$);
    \coordinate (dd) at (.05,0,0);
    \coordinate (IB) at (3,-1.4,0);
    \coordinate (S4) at (0,-1.8,0);
     \draw[densely dashed] (C) -- (G);
    \draw[densely dashed,blue,opacity=0.2] (E) -- (F) -- (G);
    \draw[opacity=0.2] (B) -- (F);
    \draw[densely dashed,blue] (E) -- (H) -- (G);
    \draw[densely dashed] (C) node[right]{$\mathcal{G}$} -- (G);

    \shadedraw[densely dashed] (E3) -- (F3) -- (G3) node[right]{$\mathcal{M}$} -- (H3) -- cycle;
    
    \draw[thick] ($(A)-(dd)$) node[left]{$\pi$} -- ($(A)+(dd)$);
    \draw[thick] ($(A)-(0,1.4,0)-(dd)$) node[left]{$0$} -- ($(A)-(0,1.4,0)+(dd)$);
    \draw[thick] ($(E)-(dd)$) node[left]{$-\pi$} -- ($(E)+(dd)$);
    \draw[thick] (3,-2.8,-3) -- (3,-2.17,-3);
    \draw[thick,opacity=.4] (3,-1.4,-3) -- (3,-2.17,-3);

    \begin{scope}[yshift=-1.4cm]
    \fill[red,opacity=.2] plot[domain=0:3,samples=91,smooth] ({3+\x},{-1.4+.01*(-atan(sinh(\x))+180)*1.4/1.8},{-3}) -- (6,0,-3);
    \fill[blue,opacity=.2] plot[domain=3:0,samples=91,smooth] ({3+\x},{-1.4+.01*atan(sinh(\x))*1.4/1.8},{-3}) -- (6,-1.4,-3);
    \fill[green,opacity=.2] plot[domain=0:3,samples=91,smooth] ({3+\x},{-1.4+.01*(-atan(sinh(\x))+180)*1.4/1.8},{-3}) -- plot[domain=3:0,samples=91,smooth] ({3+\x},{-1.4+.01*atan(sinh(\x))*1.4/1.8},{-3});
    \end{scope}
    
    \fill[gray,opacity=.5] (E2) -- (F2) -- (G2) -- (H2) -- cycle;

    \fill[white] (3,-1.4,-3) -- (6,-1.4,-3) -- (6,0,-3) -- (3,0,-3) -- cycle;
    \fill[blue,opacity=.2] plot[domain=0:3,samples=91,smooth] ({3+\x},{-1.4+.01*(-atan(sinh(\x))+180)*1.4/1.8},{-3}) -- (6,0,-3);
    \fill[red,opacity=.2] plot[domain=3:0,samples=91,smooth] ({3+\x},{-1.4+.01*atan(sinh(\x))*1.4/1.8},{-3}) -- (6,-1.4,-3);
    \fill[green,opacity=.2] plot[domain=0:3,samples=91,smooth] ({3+\x},{-1.4+.01*(-atan(sinh(\x))+180)*1.4/1.8},{-3}) -- plot[domain=3:0,samples=91,smooth] ({3+\x},{-1.4+.01*atan(sinh(\x))*1.4/1.8},{-3});

    \draw[thick] (3,-1.4,-3) -- (3,0,-3) node[above]{$\mathrm{U}(1)$};
    \draw[densely dashed] (A) -- (E) (D) -- (H);
    
    \draw[red,->] (3,-1.4,-3) -- (3.5,-1.4,-3) node[right]{$X$};
    \draw[red,->] (3,-1.4,-3) -- (3,-.9,-3) node[right]{$Z$};
    \draw[red,->] (3,-1.4,-3) -- (3,-1.4,-2.3) node[below]{$Y$};

    \fill (3,-1.4,-3) node[left]{$\id$} circle (1pt);
    \draw (-.1,-.3,0) node[right]{$\tau$};
    \draw plot[domain=0:135,samples=91,smooth]
    ({3+.4*cos(\x)},{-4.6},{-3+.4*sin(\x)});
    \fill ($(3,-2.8,-3)+(S4)$) node[above]{$J$} circle (1pt);
   \draw[->] ($(3,-2.8,-3)+(S4)$) -- node[below,xshift=-2mm]{$\theta$} ($(4,-2.8,-3)+(S4)$) node[above,xshift=2mm]{$x=\rho \cos\theta$};
   \draw[->] ($(3,-2.8,-3)+(S4)$) -- ($(3,-2.8,-1.7)+(S4)$) node[right,xshift=1mm]{$y=\rho \sin\theta$};
   \draw ($(3,-2.8,-3)+(S4)$) -- node[above]{$\rho$} ($(2.2,-2.8,-2.2)+(S4)$);
   \fill ($(2.2,-2.8,-2.2)+(S4)$) node[below]{$J_{M(\rho,\theta,\tau)}$} circle (1pt);
   \draw (2.2,-2.8,-2.2) -- (2.2,-1.85,-2.2) (2.2,-1.4,-2.2) -- (2.2,0,-2.2);
   \draw[opacity=.4]  (2.2,-1.85,-2.2) -- (2.2,-1.4,-2.2);
   \draw[blue,densely dashed] (A) -- (B) -- (C) -- (D) -- cycle;
   \fill[blue] (3,0,-3) circle (1pt) (3,-2.8,-3) circle (1pt);
   \fill (2.2,-1.2,-2.2) node[left]{$M(\rho,\theta,\tau)$} circle (1pt);

    \draw[dotted] (3,0,-3) -- (9.6,1.65)  (3,-2.8,-3) -- (9.6,-1.95) ;

    \begin{scope}[xshift=9.6cm,yshift=-.15cm]
        \draw[very thick,blue] plot[domain=0:6,samples=91,smooth] ({\x},{.01*(-atan(sinh(\x))+180)});
        \draw[very thick,blue] plot[domain=0:6,samples=91,smooth] ({\x},{.01*(atan(sinh(\x))-180)});
        \draw[very thick,red] plot[domain=0:6,samples=91,smooth] ({\x},{-.01*atan(sinh(\x))});
        \draw[very thick,red] plot[domain=0:6,samples=91,smooth] ({\x},{.01*atan(sinh(\x))});

        \fill[red,opacity=.2] plot[domain=0:6,samples=91,smooth] ({\x},{-.01*atan(sinh(\x))}) -- plot[domain=6:0,samples=91,smooth] ({\x},{.01*atan(sinh(\x))});

        \fill[blue,opacity=.2] plot[domain=0:6,samples=91,smooth] ({\x},{.01*(-atan(sinh(\x))+180)}) -- (6,1.8);
        \fill[blue,opacity=.2] plot[domain=0:6,samples=91,smooth] ({\x},{.01*(atan(sinh(\x))-180)}) -- (6,-1.8);

        \fill[green,opacity=.2] plot[domain=0:6,samples=91,smooth] ({\x},{.01*(-atan(sinh(\x))+180)}) -- plot[domain=6:0,samples=91,smooth] ({\x},{.01*atan(sinh(\x))});
        \fill[green,opacity=.2] plot[domain=0:6,samples=91,smooth] ({\x},{.01*(atan(sinh(\x))-180)}) -- plot[domain=6:0,samples=91,smooth] ({\x},{-.01*atan(sinh(\x))});

        \draw (.4,.9) node{(I)}  (3,.45) node{(II)} (3,1.35) node{(III)};
        \draw (5,-.25) node{$\rho$};
        
        \draw[thick,->] (0,-1.8) node[left]{$-\pi$} -- (0,2);
        \draw (0,1.8) node[left]{$\pi$};
        \draw[thick,->] (0,0) node[left]{$0$} -- (6.2,0);
        \draw (0,-1.8) rectangle (6,1.8);

    \end{scope}
    
    \end{tikzpicture}
    \ccaption{Bosonic fiber bundle for $\mathrm{Sp}(2,\mathbb{R})$}{We visualize the symplectic group $\mathrm{Sp}(2,\mathbb{R})$, which is topologically $\mathbb{R}^2\times S^1$, so that we need to identify the upper and lower plane. A general group element $M$ is parametrized by $(\rho,\theta,\tau)$, where $\rho$ and $\theta$ are polar coordinates of a horizontal plane and $\tau$ is a $2\pi$-periodic vertical coordinate. We also recognize the Lie algebra $\mathfrak{sp}(2,\mathbb{R})=\mathrm{span}(X,Y,Z)$ as tangent space at the identity. We further illustrate the different regions (I), (II) and (III) characterized by the relations in~\eqref{eq:region-(I)} to~\eqref{eq:region-(III)}. We show a planar section of these regions inside the group and an enlarged version.}
    \label{fig:case-study-bosonic}
\end{figure*}

\textbf{Lie algebra.} The Lie algebra $\mathrm{sl}(2,\mathbb{R})$ is generated by the three matrices
\begin{align}
    X=\begin{pmatrix}
    0 & 1\\
    1 & 0
    \end{pmatrix}\,,\,\,Y=\begin{pmatrix}
    1 & 0\\
    0 & -1
    \end{pmatrix}\,,\,\,Z=\begin{pmatrix}
    0 & 1\\
    -1 & 0
    \end{pmatrix}\,.
\end{align}
While $X$ and $Y$ are tangential to the horizontal plane at $M=\id$, we note that $Z$ points upwards along the $\mathrm{U}(1)$ circle, such that we have
\begin{align}
    \mathfrak{u}(1)=\mathrm{span}(Z)\quad\text{and}\quad\mathfrak{u}_\perp(1)=\mathrm{span}(X,Y)\,.
\end{align}
The Killing form\footnote{The Killing form is constructed from the adjoint representation of a Lie algebra, but can often also be expressed in the fundamental representation. In the case of $A,B\in\mathfrak{sp}(2,\mathbb{R})$, the Killing form is proportional to $\mathrm{Tr}(AB)$, where $A,B$ are matrices in the fundamental representation.} induces a natural invariant semi-definite metric given by $\langle A,B\rangle=\tfrac{1}{2}\mathrm{Tr}(AB)$ with norm
\begin{align}
    \lVert aX+bY+cZ\rVert^2=a^2 + b^2 - c^2\,.
\end{align}

\textbf{Exponential map.} We now discuss which elements $M$ of the group can be reached by exponentiation of a Lie algebra element $K$, such that $M=e^K$. We find the following three sectors, where we use a language inspired from special relativity (light cone, past, future), while it does not have an intrinsic physical meaning here. The group is rotational symmetric with respect to the angle $\theta$ (corresponding to rotations in the $X$-$Y$-plane in the Lie algebra) implies that we can describe the different regions purely in terms of $\rho$ and $\tau$, as illustrated in figure~\ref{fig:case-study-bosonic}:
\begin{itemize}
    \item[(I)] \textbf{Future/Past.} All points that are reachable by $e^K$ with $\lVert K\rVert<0$. This is equivalent to
    \begin{align}
    \begin{split}\label{eq:region-(I)}
        \tan^{-1}{\sinh{\rho}}&<\tau<-\tan^{-1}{\sinh{\rho}}+\pi\,,\\
        -\pi+\tan^{-1}{\sinh{\rho}}&<\tau<-\tan^{-1}{\sinh{\rho}}\,.
    \end{split}
    \end{align}
    Note that there are closed timelike curves and all of these points can be reached with both, future and past pointing curves. We have $\mathrm{spec}(M)=(\lambda,\overline{\lambda})$ with complex $\lambda$ satisfying $|\lambda|=1$.
    \item[(II)] \textbf{Reachable space.} All points that are reachable by $e^K$ with $\lVert K\rVert>0$. This is equivalent to
    \begin{align}
        -\tan^{-1}{\sinh{\rho}}<\tau<\tan^{-1}{\sinh{\rho}}\,.\label{eq:region-(II)}
    \end{align}
    We have $\mathrm{spec}(M)=(\lambda,1/\lambda)$ with real $\lambda>1$. Its boundary consists of group elements $M=e^{K}=(\id+K)$ with $\lVert K\rVert=0$, which are evidently all reachable by exponentiating a single Lie algebra element.
    \item[(III)] \textbf{Unreachable space.} This is equivalent to
    \begin{align}
        \tan^{-1}{\sinh{\rho}}<\tau<\pi-\tan^{-1}{\sinh{\rho}}\,.\label{eq:region-(III)}
    \end{align}
    These points cannot be reached by a single exponential $e^K$. We have $\mathrm{spec}(M)=(-\lambda,-1/\lambda)$ with real $\lambda>1$. Its boundary consists of group elements $M=-e^{K}=-(\id+K)$ with $\lVert K\rVert=0$, which cannot be reached except for the single point $M=-\id$.
\end{itemize}

\textbf{Bosonic fiber bundle.} Given an arbitrary group element $M$, we can compute its Cartan decomposition $M=Tu$, where $T=\sqrt{-MJM^{-1}J}$ with $T\in\exp(\mathfrak{u}_\perp(1))$. This set corresponds exactly to the group elements $M(\rho,\theta,\tau=0)$. We see explicitly from figure~\ref{fig:case-study-bosonic} that $\mathrm{Sp}(2,\mathbb{R})\simeq \mathbb{R}^2\times \mathrm{U}(1)$, where $(\rho,\theta)$ parametrize the plane $\mathbb{R}^2$, while $\tau$ is the circular coordinate for $\mathrm{U}(1)$. In particular, we see that the fiber bundle is trivial, as the base space $\mathbb{R}^2$ is contractible. The fundamental group $\pi_1(\mathrm{Sp}(2,\mathbb{R}))=\mathbb{Z}$ is purely due to the compact direction along $\mathrm{U}(1)$.

\textbf{Representation theory.} The symplectic Lie algebra $\mathfrak{sp}(2,\mathbb{R})$ is naturally represented by linear combinations of the following three operators:
\begin{align}
\begin{split}\label{eq:case-study-XYZ-operators}
    \widehat{X}&=-\ii\,\frac{\hat{p}^2-\hat{q}^2}{2}=\frac{\ii}{2}(\hat{a}^{\dagger 2}-\hat{a}^2)\,,\\
    \widehat{Y}&=-\ii\,\frac{\hat{p}\hat{q}+\hat{q}\hat{p}}{2}=-\frac{\hat{a}^{\dagger 2}+\hat{a}^2}{2}\,,\\
    \widehat{Z}&=-\ii\,\frac{\hat{p}^2+\hat{q}^2}{2}=-\ii(\hat{a}^\dagger\hat{a}+\tfrac{1}{2})\,.
\end{split}
\end{align}
A general Lie algebra element $K=a X+bY+cZ$ is represented by the quadratic operator $\widehat{K}=a\widehat{X}+b\widehat{Y}+c\widehat{Z}$.

\textbf{Double cover.} When exponentiating $\widehat{Z}$, we observe the behavior discussed in section~\ref{sec:GaussianUnitaries}. While we have $e^{2\pi Z}=\id$, we have $e^{2\pi \widehat{Z}}\neq \id$. This follows from $\widehat{Z}=-\ii(\hat{n}+\frac{1}{2})$ using the standard number operator $\hat{n}=\hat{a}^\dagger\hat{a}$ with non-negative integer spectrum. We therefore see that $\widehat{Z}$ generates $\widetilde{\mathrm{U}}(1)$ which is the double cover $\mathrm{U}(1)$, as $e^{t\widehat{Z}}$ is $4\pi$-periodic, while $e^Z$ is only $2\pi$-periodic. While $\widetilde{\mathrm{U}}(1)$ is itself isomorphic to $\mathrm{U}(1)$, the resulting full group $\widetilde{\mathcal{G}}$ will be double cover $\mathrm{Mp}(2,\mathbb{R})$.

For every symplectic group element $M$, there exist two distinct unitary transformations $\pm \mathcal{U}$ satisfying
\begin{align}
    \mathcal{U}^\dagger\hat{\xi}^a\mathcal{U}=M^a{}_b\hat{\xi}^b\,.
\end{align}
In particular, we find the following representations:
\begin{align}
    \mathcal{U}(\id)=\pm\id\,,\quad \mathcal{U}(-\id)=\pm\ii e^{-\ii\pi \hat{n}}\,.
\end{align}

\textbf{Cocycle function.} Given two symplectic transformations $M_i(\rho_i,\theta_i,\tau_i)$ with relative complex structures
\begin{align}
    \Delta_{M_i^{-1}}=\begin{pmatrix}
    \cosh{\rho_i}+\sin{\theta_i}\sinh{\rho_i} & \cos{\theta_i}\sinh{\rho_i}\\
    \cos{\theta_i}\sinh{\rho_i} & \cosh{\rho_i}-\sin{\theta_i}\sinh{\rho_i}
    \end{pmatrix}\,,
\end{align}
we find the single complex eigenvalue
\begin{align}
    \overline{\id-Z_{M_1}Z_{M_2^{-1}}}=1+e^{\ii(\theta_1-\theta_2)}\tanh{\tfrac{\rho_1}{2}}\tanh{\tfrac{\rho_2}{2}}\,,
\end{align}
which has a positive real part and yields
\begin{align}
    \eta=\arg(1+e^{\ii(\theta_1-\theta_2)}\tanh{\tfrac{\rho_1}{2}}\tanh{\tfrac{\rho_2}{2}})\,.
\end{align}

\textbf{Complex phase calculation.} Without loss of generality, we can consider the reference complex structure $J$ in its standard form from~\eqref{eq:J-standard-form} and a general Lie algebra element $K$ to compute $\braket{J|e^{\widehat{K}}|J}$. The action of our $\mathrm{U}(1)$ subgroup onto group and Lie algebra corresponds to a rotation around the $\tau$- and $Z$-axis (on group vs. algebra level, respectively), which is a symmetry. Therefore, we can consider without loss of generality a Lie algebra element $K=aX+cZ$. We can use formula~\eqref{eq:exp-squared} to find
\begin{align}
    \braket{J|e^{\widehat{K}}|J}^2=\cosh{\sqrt{a^2-c^2}}+\frac{\ii c}{\sqrt{a^2-c^2}}\sinh{\sqrt{a^2-c^2}}\,,
\end{align}
from which we immediately find the absolute value
\begin{align}
    |\braket{J|e^{\widehat{K}}|J}|^2=\frac{2(a^2-c^2)}{a^2-2c^2+a^2\cosh(2\sqrt{a^2-c^2})}\,.\label{eq:boson-modulus}
\end{align}
For the complex phase, we need to distinguish the cases $a^2\leq c^2$, $a^2=c^2$ and $a^2\geq c^2$ to find
\begin{align}
    \arg\braket{J|e^{\widehat{K}}|J}\!=\!\begin{cases}
    -\frac{1}{2}\arctan\!\frac{(c-R)\cos(R)\sin(R)}{c+(R-c)\cos(R)}\!-\!\frac{R}{2}& a^2\leq c^2\\
    -\frac{1}{2}\arctan(c) & a^2=c^2\\
    -\frac{1}{2}\arctan(\tfrac{c \tanh(R)}{R}) & a^2\geq c^2    
    \end{cases}\label{eq:boson-phase}
\end{align}
with $R=\sqrt{|a^2-c^2|}$. We show both, complex phase (color) and modulus (brightness) in figure~\ref{fig:bosonic-<J|U|J>}. In particular, we see the light cone structure explicitly, where evolution in the direction of $K$ with $\lVert K\rVert>0$ (``spacelike'') yields exponential suppression of $|\braket{J|e^{\widehat{K}}|J}|$, while for $\lVert K\rVert<0$ (``timelike''), we find oscillations with relatively little suppression.

\begin{figure}[t]
    \centering
    \begin{tikzpicture}
    \draw (0,0) node[inner sep=0pt]{\includegraphics[width=\linewidth]{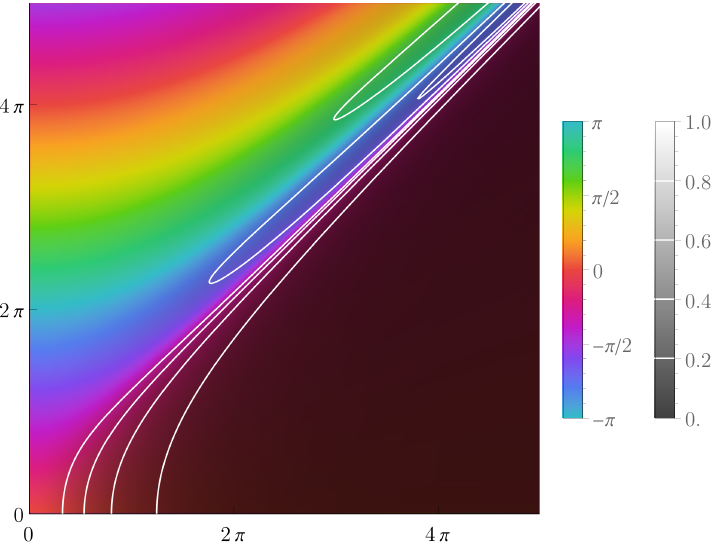}};
    \draw (-4.2,.7) node[font=\footnotesize]{$c$};
    \draw (-.3,-3.1) node[font=\footnotesize]{$a$};
    \draw (2.55,2.05) -- (3.25,2.6) node[font=\footnotesize,above]{$\arg(\braket{J|e^{\widehat{K}}|J})$};
    \draw (3.65,-1.9) -- (3.25,-2.3) node[font=\footnotesize,below]{$|\braket{J|e^{\widehat{K}}|J}|$};
    \end{tikzpicture}
    \ccaption{Bosonic $\braket{J|e^{\widehat{K}}|J}$ for $\widehat{K}=a\widehat{X}+c\widehat{Z}$ from~\eqref{eq:case-study-XYZ-operators}}{We show $\braket{J|e^{\widehat{K}}|J}$, where we indicate its complex phase from~\eqref{eq:boson-phase} by color and its modulus from~\eqref{eq:boson-modulus} by brightness. White height lines represent the respective modulus as indicated in the brightness legend.}
    \label{fig:bosonic-<J|U|J>}
\end{figure}

\subsection{Two fermionic modes}\label{sec:case-two-fermionic-modes}
For fermions, a system consisting of a single mode is trivial as the manifold of Gaussian states consists only of two states, typically denoted by $\ket{0}$ and $\ket{1}$ in the fermionic Fock space. The space of quadratic Hamiltonians is (apart from an energy offeset) one-dimensional and given by $\hat{H}=\omega(\hat{n}-\frac{1}{2})$, which implies $\braket{0|e^{-\ii\hat{H}}|0}=e^{\ii\omega/2}$ and $\braket{1|e^{-\ii\hat{H}}|1}=e^{-\ii\omega/2}$. We will therefore consider the simplest non-trivial case of two fermionic modes. As we will see, the manifold of Gaussian states then consists of two disconnected components which are both topologically equivalent to $2$-spheres. The relevant Lie group will be $\mathrm{SO}(4,\mathbb{R})$. We are able to see the fermionic principal fiber bundle structure from section~\ref{sec:principal-fiber-bundles} explicitly.

\textbf{Lie group.} We represent an element of $\mathrm{SO}(4,\mathbb{R})$ as real $4$-by-$4$ matrix with $MM^\intercal=\id$, where we are in a basis with $G\equiv \id$. Given a reference complex structure
\begin{align}
    J\equiv\begin{pmatrix}
    0 & \id\\
    -\id & 0
    \end{pmatrix}\,,
\end{align}
we have the stabilizer of $J$ given by
\begin{align}
    \mathrm{U}(2)=\{u\in\mathrm{SO}(4,\mathbb{R})| uJu^{-1}=J\}\,.
\end{align}
This implies $u J=J u$, \ie $u$ must be complex linear with respect to $J$. This implies that $u$ must have the block structure
\begin{align}
    u=\begin{pmatrix}
    u_1 & u_2\\
    -u_2 & u_1
    \end{pmatrix}
\end{align}
and via the relation~\eqref{eq:ComplexDecomposition}, we have $\overline{u}=u_1+\ii u_2$. The orthogonality $u\in\mathrm{SO}(4,\mathbb{R})$ implies $u_1u_1^\intercal-u_2u_2^\intercal=\id$ and $u_1u_2^\intercal+u_2u_1^\intercal=0$, which together is equivalent to $\overline{u}\overline{u}^\dagger=\id$, thus demonstrating explicitly how this group is $\mathrm{U}(2)$.

\textbf{Lie algebra.} The Lie algebra $\mathfrak{so}(4,\mathbb{R})$ is $6$-dimensional and we decompose it into $\mathfrak{so}(4,\mathbb{R})=\mathfrak{u}(2)\oplus \mathfrak{u}_\perp(2)$. While $\mathfrak{u}(2)$ is relative straight-forward, as $\overline{K}_-\in \mathfrak{u}(2)$ can be easily described by anti-Hermitian $2$-by-$2$ matrices, it will be essential to characterize $\mathfrak{u}_\perp(2)$. Its elements $K_+$ are Lie algebra elements with $\{K_+,J\}=0$, which spans a $2$-dimensional space
\begin{align}
    K_+=\frac{\theta}{2}\left(
\begin{array}{cccc}
 0 & \cos (\phi ) & 0 & \sin (\phi ) \\
 -\cos (\phi ) & 0 & -\sin (\phi ) & 0 \\
 0 & \sin (\phi ) & 0 & -\cos (\phi ) \\
 -\sin (\phi ) & 0 & \cos (\phi ) & 0
\end{array}\right)\,,
\end{align}
where we should think of $(\theta,\phi)$ as polar coordinates with radius $\theta$ and angle $\phi$. The eigenvalues of $K_+$ are given by $\pm\ii\frac{\theta}{2}$ with multiplicity $2$ each. We can also compute the norm $\lVert K_+\rVert_{\infty} =\frac{\theta}{2}$ for $\theta\geq 0$. Therefore, we have
\begin{align}
    \mathcal{I}_{\mathfrak{u}_\perp(2)}=\{K_+(\theta,\phi)\,|\,0\leq\theta<\pi,0\leq \phi<2\pi\}\,,
\end{align}
meaning the inner part of $\mathfrak{u}_\perp(2)$ and thus also $\mathcal{I}_{\mathcal{M}}$ via the map $K_+\mapsto e^{2K_+}J$ are given by disks parametrized by $(\theta,\phi)$ where $\theta$ describes the radial coordinate.

\begin{figure*}[t]
    \centering
    \begin{tikzpicture}

    \draw[domain=-180:-90, smooth, variable=\x, purple,thick,opacity=.3]  plot ({3+2*sin(\x)}, {2.75-0.7*cos(\x)+1.25*\x/180});
    \draw[domain=-135:-90, smooth, variable=\x, blue,thick,opacity=.3]  plot ({3+2*sin(\x)}, {2.4375-0.7*cos(\x)+1.25*\x/180});
    
    \draw[domain=90:225, smooth, variable=\x, blue,thick,opacity=.3]  plot ({3+2*sin(\x)}, {2.4375-0.7*cos(\x)+1.25*\x/180});
    \draw[domain=90:180, smooth, variable=\x, purple,thick,opacity=.3]  plot ({3+2*sin(\x)}, {2.75-0.7*cos(\x)+1.25*\x/180});
    
    \draw[thick,orange] (3,4) ellipse (2cm and .7cm);
    \draw[thick,orange](3,1.5) ellipse (2cm and .7cm);
    \draw[thick,orange] (1,1.5) -- (1,4) node[black,font=\footnotesize,left]{$\mathcal{G}$} (5,1.5) -- (5,4) node[right,xshift=0mm,yshift=4mm]{$\mathcal{B}_{\mathcal{G}}\simeq\mathrm{U}(2)$};

    \draw[domain=-90:90, smooth, variable=\x, dgreen,thick,opacity=.3]  plot ({3+2*sin(\x)}, {2.75+0.7*cos(\x)});
    

    \draw[thick] (3,1.5) -- (3,4) node[font=\footnotesize,left]{$\mathrm{U}(2)$};
    \fill (3,2.75) node[font=\footnotesize,right]{$\id$} circle (1pt);
    
    \draw[domain=90:270, smooth, variable=\x, dgreen,thick]  plot ({3+2*sin(\x)}, {2.75+0.7*cos(\x)});
    
    \draw[domain=-90:90, smooth, variable=\x, purple,thick]  plot ({3+2*sin(\x)}, {2.75-0.7*cos(\x)+1.25*\x/180});

    \draw[domain=-90:90, smooth, variable=\x, blue,thick]  plot ({3+2*sin(\x)}, {2.4375-0.7*cos(\x)+1.25*\x/180});
    

    \fill[purple] (3.9,2.3) node[above,purple]{$T_0$}; circle (1pt);
    \fill[blue] (3.9,1.9) node[xshift=2mm,below,blue]{$T_0u_\phi$}; circle (1pt);
    

    \begin{scope}[xscale=.97]
    \draw[thick,orange] (6,1.5) rectangle (14,4);
    \draw[orange] (10,1.5) -- (10,4);
    \draw[thick,dgreen] (6,2.75) -- (14,2.75);
    \draw[thick,blue] (6,1.8125) -- (13,4) node[right,font=\footnotesize,yshift=3mm,xshift=-4mm]{
    } (13,1.5) -- (14,1.8125);
    \draw[thick,purple] (6,2.125) -- (12,4) node[left,font=\footnotesize,yshift=3mm,xshift=4mm]{
    } (12,1.5) -- (14,2.125);
    
    \draw[dgreen] (12,2.75) node[xshift=1mm,above,font=\footnotesize]{$\exp(\mathcal{B}_{\mathfrak{u}_\perp(2)})$};

    \draw[thick,red] (6,-.3) -- (14,-.3);
    \fill (6,-.3) node[below,font=\footnotesize]{$-\frac{\pi}{2}$} circle (1pt) (14,-.3) node[below,font=\footnotesize]{$\frac{3\pi}{2}$} circle (1pt);
    \fill (10,-.3) node[below,black,font=\footnotesize]{$\frac{\pi}{2}$} circle (1pt);
    \fill (8,-.3) node[below,black,font=\footnotesize]{$0$} circle (1pt) (12,-.3) node[below,black]{$\pi$} circle (1pt);
    
    \draw[thick,red] (6,-.3) -- (14,-.3);
    \fill (6,-.3) node[below,font=\footnotesize]{$-\frac{\pi}{2}$} circle (1pt) (14,-.3) node[below,font=\footnotesize]{$\frac{3\pi}{2}$} circle (1pt);
    \fill (10,-.3) node[below,black,font=\footnotesize]{$\frac{\pi}{2}$} circle (1pt);
    \fill (8,-.3) node[below,black,font=\footnotesize]{$0$} circle (1pt) (12,-.3) node[below,black]{$\pi$} circle (1pt);

    \fill[purple] (8,2.75) node[above,font=\footnotesize,xshift=0mm]{$T_0$}; circle (1pt);
    \fill[blue] (9,2.75) node[below,font=\footnotesize,xshift=1mm]{$T_0u_\phi$}; circle (1pt);
    \end{scope}


    \draw[dgreen] (2,2.8) node[font=\footnotesize]{$\exp(\mathcal{B}_{\mathfrak{u}_\perp(2)})$};
    \draw[thick,orange] (3,4) ellipse (2cm and .7cm);
    \shadedraw[thick,red] (3,-.3) ellipse (2cm and .7cm);
    \draw[dotted] (3,-.3) ellipse (1cm and .35cm);
    \draw[domain=0:45, smooth, variable=\x]  plot ({3+1*sin(\x)}, {-.3-0.7/2*cos(\x)});
    \draw (3.4,-.8) node{$\phi$};
    \draw (3,-.3) -- (3,-1);
    \draw[thick,->] (3,-.3) node[font=\footnotesize,left]{$J$} -- (3.9,-.6) node[above,xshift=-1.5mm]{$\theta$};
    \draw[dashed] (3,-.3) -- + ({2*sin(45)},{-.7*cos(45)});
    \fill (1,-.3) node[left,xshift=-.5mm]{$=$} circle (1pt) (5,-.3) circle (1pt) (3,-1) circle (1pt) (3,.4) circle (1pt) (3,-.3) circle (1pt);
    \draw (.7,.3) node{$\mathcal{M}$} (.8,-1) node[red]{$\mathcal{B}_{\mathcal{M}}$};
    
    \draw (9,-.7) node{$\phi$};

    \begin{scope}
    \clip (6,1.5) rectangle (16,4);
    \end{scope}


    \draw[domain=-180:180, smooth, variable=\x, black,thick]  plot ({15+2*sin(\x)/4}, {1.5+0.7*cos(\x)/4});

    \draw[domain=270:450, smooth, variable=\x, dgreen,thick,opacity=.5]  plot ({15+2*sin(\x)/4}, {2.7+0.7*cos(\x)/4+.3*cos(\x/2)});
    \draw[domain=-270:270, smooth, variable=\x, dgreen,thick]  plot ({15+2*sin(\x)/4}, {2.7+0.7*cos(\x)/4+.3*cos(\x/2)});
    
    \draw[dgreen,thick] (15,3.5) node[font=\footnotesize]{$u_\phi\in\mathrm{U}(2)$};

    \draw[thick,->] (15,2.4) -- node[right,font=\footnotesize]{$\overline{\det}$} (15,1.8);
    \draw (15,1) node[font=\footnotesize]{$e^{\ii 2\phi}\in\mathrm{U}(1)$};

    \fill[red] (15,-.3) circle (1pt);
    \fill[dgreen] (14,2.75) node[font=\footnotesize]{$\simeq$};
    \fill[orange] (5.4,2.75) node[font=\footnotesize]{$=$};
    \fill[red] (5.4,-.3) node[font=\footnotesize]{$=$};
    \fill[red] (14.3,-.3) node[font=\footnotesize]{$=$};

    \pgfmathsetmacro{\rvec}{.8}
    \pgfmathsetmacro{\thetavec}{30}
    \pgfmathsetmacro{\phivec}{50}
    \tdplotsetmaincoords{100}{-50}
    \begin{scope}[scale=1.5,xshift=-5mm,yshift=-2mm]
        \shadedraw[tdplot_screen_coords,ball color = white] (0,0) circle (\rvec);
        \draw (.2,-.3) node{$\phi$};
        \draw (.6,-.05) node{$\theta$};
        \coordinate (O) at (0,0,0);
        \tdplotsetthetaplanecoords{\phivec}
        
        \draw[dotted,tdplot_main_coords] (\rvec,0,0) arc (0:360:\rvec);
        \draw[tdplot_main_coords] (\rvec,0,0) arc (0:30:\rvec);
        
        \fill[black,tdplot_main_coords] (0,0,\rvec) node[above,font=\footnotesize]{$J$} circle (.75pt);
    \begin{scope}[scale=.8,tdplot_main_coords]
	\coordinate (O) at (0,0,0);
 
	\foreach \angle in {0}
	{
		\tdplotsinandcos{\sintheta}{\costheta}{\angle}%

		\coordinate (P) at (0,0,\sintheta);

		\tdplotdrawarc{(P)}{\costheta}{0}{45}{}{}
		
		\tdplotsetthetaplanecoords{\angle}
		
		\tdplotdrawarc[tdplot_rotated_coords,style={thick,->}]{(O)}{1}{0}{100}{}{}
  \tdplotdrawarc[tdplot_rotated_coords,style={dashed}]{(O)}{1}{0}{180}{}{}
	}
        \foreach \angle in {45}
	{
		\tdplotsinandcos{\sintheta}{\costheta}{\angle}%

		\coordinate (P) at (0,0,\sintheta);

		
		\tdplotsetthetaplanecoords{\angle}
		
		\tdplotdrawarc[tdplot_rotated_coords]{(O)}{1}{0}{180}{}{}
	}
    \end{scope}
    \fill[red,tdplot_main_coords] (0,0,-\rvec) 
    circle (.75pt);
    \end{scope}
    
    \end{tikzpicture}
    \ccaption{Fermionic fiber bundle for $\mathrm{SO}(4,\mathbb{R})$}{We visualize the group $\mathrm{SO}(4,\mathbb{R})$ as principal fiber bundle over $\mathrm{SO}(4,\mathbb{R})/\mathrm{U}(2)$. The base space is a disk parametrized by $(\theta,\phi)$, where we identify quasi-boundary (at $\theta=\pi$) with a single point (turning the disk into a sphere $S^2$ with this point being the south pole). Considering that the quasi-boundary of the base manifold is identified with a single point, the quasi-boundary of the bundle must be identified as a single $\mathrm{U}(2)$ fiber. Choosing the reference point $T_0\in\mathcal{B}_\mathcal{G}$ from~\eqref{eq:Tphi} fixes the isomorphism $\mathcal{B}_\mathcal{G}\simeq \mathrm{U}(2)$, \ie we can identify any other point $M\in\mathcal{B}_\mathcal{G}$ by the appropriate $u\in\mathrm{U}(2)$ with $M=T_0u$. We only illustrate one dimension of this fiber, as the different spirals represent single elements $T_0$ (purple spiral) and $T_0u_\phi$ (blue spiral), so our $2$-dimensional illustration of $\mathcal{B}_{\mathcal{G}}$ as a torus is actually a $1$-dimensional circle, once different spirals are identified as single points. We see that $\exp(\mathcal{B}_{\mathfrak{u}_\perp(2)})=\{T_0u_\phi\,|\,\phi\in[0,2\pi)\,\}$. We see that $\exp(\mathcal{B}_{\mathfrak{u}_\perp(2)})$ represents a circle within the boundary $\mathcal{B}_{\mathcal{G}}\simeq\mathrm{U}(2)$. Recall that we are only interested in the fundamental group of $\mathrm{U}(2)$, so we can use the map $\overline{\det}$ to map this circle onto $\mathrm{U}(1)$, while preserving the winding number. We see that $\overline{\det}(u_\phi)$ wraps around $\mathrm{U}(1)$ twice. This circle is contractible in $\mathcal{G}$, as we can just shrink it to a point, while in the fiber it has winding number $2$. 
    }
    \label{fig:case-study-fermionic}
\end{figure*}

\textbf{Exponential map.} It is well-known that the exponential map from the Lie algebra to the Lie group is surjective, when the Lie group is semi-simple, compact and connected. This is the case for $\mathrm{SO}(4,\mathbb{R})$, so for every $M\in\mathrm{SO}(4,\mathbb{R})$, there exist a $K\in\mathfrak{so}(4,\mathbb{R})$ with $M=e^K$. Moreover, $K$ is not unique, as we can shift its eigenvalues quadruples, which are all imaginary, by a multiple of $\pm 2\pi\ii$ without changing $M$.

As discussed in section~\ref{sec:principal-fiber-bundles}, it suffices to exponentiate elements of $\mathcal{I}_{\mathfrak{u}_\perp(N)}$ to intersect every fiber within $\mathcal{I}_{\mathcal{G}}$. We compute $T=e^{K_+(\theta,\phi)}$ and $\pi(T)=J_T=TJT^{-1}=T^2J$ yielding
\begin{align}
T=\left(\begin{smallmatrix}
 \cos \frac{\theta }{2} & \sin \frac{\theta }{2} \cos \phi  & 0 & \sin \frac{\theta }{2} \sin \phi  \\
- \sin \frac{\theta }{2} \cos \phi  & \cos \frac{\theta }{2} & -\sin \frac{\theta }{2} \sin \phi  & 0 \\
 0 & \sin \frac{\theta }{2} \sin \phi  & \cos \frac{\theta }{2} & -\sin \frac{\theta }{2} \cos \phi  \\
 -\sin \frac{\theta }{2} \sin \phi  & 0 & \sin \frac{\theta }{2} \cos \phi  & \cos \frac{\theta }{2}
\end{smallmatrix}\right),\\
J_T=\left(
\begin{smallmatrix}
 \cos\theta  & \sin\theta  \cos\phi  & 0 & \sin\theta  \sin\phi  \\
 -\sin \theta  \cos \phi  & \cos \theta  & -\sin \theta  \sin \phi  & 0 \\
 0 & \sin \theta  \sin \phi  & \cos \theta  & -\sin \theta  \cos \phi  \\
 -\sin \theta  \sin \phi  & 0 & \sin \theta  \cos \phi  & \cos \theta 
\end{smallmatrix}\right).\label{eq:JT-fermions}
\end{align}
Taking the limit $\theta\to\pi$ yields $J_T=-J$ regardless of $\phi$, which means the quasi-boundary $\mathcal{B}_{\mathcal{M}}$ consists of the single point $-J$. As we will later see, this corresponds to the fermionic Gaussian state $\ket{-J}=\ket{11}$ that is orthogonal to $\ket{J}=\ket{00}$ when written in the Fock basis. We can also evaluate $T_\phi=\lim_{\theta\to\pi}T$ to find
\begin{align}
    T_\phi=\left(\begin{array}{cccc}
 0 & \cos \phi  & 0 & \sin \phi  \\
-  \cos \phi  & 0 & -\sin \phi  & 0 \\
 0 & \sin \phi  & 0 & - \cos \phi  \\
 - \sin \phi  & 0 & \cos \phi  & 0
\end{array}\right)\,.\label{eq:Tphi}
\end{align}

\textbf{Fermionic fiber bundle.} Applying our understanding from section~\ref{sec:principal-fiber-bundles}, we know that $\mathrm{SO}(4,\mathbb{R})$ is a $\mathrm{U}(2)$ principal fiber bundle with base manifold $\mathcal{M}=\mathrm{SO}(4,\mathbb{R})/\mathrm{U}(2)$, which is topologically equal to $S^2$. We have the fundamental groups
\begin{align}
    \pi_1(\mathrm{SO}(4,\mathbb{R}))=\mathbb{Z}_2\quad\text{and}\quad\pi_1(\mathrm{U}(2))=\mathbb{Z}\,,
\end{align}
while $\pi_1(S^2)$ is trivial. This immediately implies that the fermionic fiber bundle is non-trivial, as the $\mathrm{U}(2)$ fibers must be glued together in such a way that a loop with even winding number within a single fiber can be contracted when deforming it continuously across several fibers. In order to see this explicitly, we will use the reduction $\overline{\det}: \mathrm{U}(2)\to\mathrm{U}(1)$. This reduction induces an isomorphism between fundamental groups, \ie a loop in $\mathrm{U}(2)$ with given winding number is mapped to a loop in $\mathrm{U}(1)$ with the same winding number. It relies on the fact that we have topologically $\mathrm{U}(2)=\mathrm{SU}(2)\times\mathrm{U}(1)$, where the fundamental group $\pi_1(\mathrm{SU}(2))$ is trivial. This means the fundamental group of $\mathrm{U}(2)$ is fully due to the $\mathrm{U}(1)$ subgroup.\footnote{We can embed $e^{\ii\theta}\in\mathrm{U}(1)$ back into $\mathrm{U}(2)$ as $\overline{u}=\mathrm{diag}(e^{\ii\theta},1)$.}

Recall from section~\ref{sec:principal-fiber-bundles} that we have $\mathcal{I}_{\mathcal{G}}=\mathcal{I}_{\mathcal{M}}\times\mathrm{U}(N)$ and we already saw that $\mathcal{I}_{\mathcal{M}}$ is given by a disk, which we parametrized by $(\theta,\phi)$. We already saw in~\eqref{eq:JT-fermions} that the apparent perimeter of this disk is mapped to a single point $J_T=TJT^{-1}=-J$. We therefore need to identify the perimeter $\mathcal{B}_{\mathcal{M}}$ of our disk $\mathcal{I}_{\mathcal{M}}$ with a single point, thereby turning our base manifold $\mathcal{M}$ into a $2$-sphere, as sketched in figure~\ref{fig:case-study-fermionic}. Consequently, the quasi-boundary $\mathcal{B}_{\mathrm{SO}(4,\mathbb{R})}$ is given by a single fiber $\mathrm{U}(2)$.

To understand the global topological structure of $\mathrm{SO}(4,\mathbb{R})$, we thus need to analyze how the fibers of $\mathcal{I}_{\mathrm{SO}(4,\mathbb{R})}=\mathcal{I}_{\mathcal{M}}\times\mathrm{U}(2)$ are glued together with the single $\mathrm{U}(2)$ fiber over the boundary point $\mathcal{B}_{\mathcal{M}}$. We saw that if we approach the quasi-boundary with $T=e^{K_+(\theta,\phi)}$ in the limit $\theta\to\pi$ we arrive at the point $T_\phi$ from~\eqref{eq:Tphi}. Considering that for all $\phi$, we have $J_{T_\phi}=T_\phi JT_\phi^{-1}=-J$ implies that they are all in the same $\mathrm{U}(2)$ fiber and thus related via $T_\phi=T_{\phi'}u$ with $u\in\mathrm{U}(2)$. If we choose the reference $T_0$, we find the group element
\begin{align}
    \overline{u}_\phi=\begin{pmatrix}
    0 & \cos\phi+\ii \sin\phi\\
    -\cos\phi-\ii \sin\phi & 0
    \end{pmatrix}
\end{align}
explicitly, such that $T_\phi=T_0 u_\phi$, where we used the notation from~\eqref{eq:ComplexDecomposition}. The circle
\begin{align}
    \exp(\mathcal{B}_{\mathfrak{u}_\perp(2)})=\{T_\phi\,|\,0\leq \phi<2\pi\}
\end{align}
describes a circle that is fully contained in the fiber above $\mathcal{B}_{\mathcal{M}}=\{-J\}$ and the relation $T_\phi=T_0u_\phi$ tells us how the nearby fibers are glued together at this point.

As $\mathrm{U}(2)$ is $4$-dimensional, we cannot draw it explicitly. In order to still visualize this gluing, we can remember that we are mostly interested in the homotopy of our fibers and the whole fiber bundle, as this what determines the fundamental group relevant for constructing the double cover. Therefore, we can collapse $\mathrm{U}(2)$ to $\mathrm{U}(1)$ using the determinant function $\overline{\det}$ introduced above. Note that $\mathrm{U}(2)$ has fundamental group $\mathbb{Z}$, while the higher homotopy groups can be found from the fact that $\mathrm{U}(2)$ contains $\mathrm{SU}(2)$ with the topology of the $3$-sphere, so all higher homotopy groups of $\mathrm{U}(2)$ apart from the fundamental group coincide with the ones of $\mathrm{SU}(2)\simeq S^3$. Collapsing $\mathrm{U}(2)$ to $\mathrm{U}(1)$ via $\overline{\det}$ will not preserve these groups, so we should remember that there is still a $3$-sphere hiding, but for the consideration of the fundamental group it suffices to consider how the respective $\mathrm{U}(1)$ fibers are glued together.

While the Cartan decomposition of group elements in $\mathcal{B}_{\mathcal{G}}$ is not unique, we can just choose $T_0$ as reference and then Cartan decompose all other $M=T_0u$ with $u\in\mathrm{U}(2)$ and evaluate $\overline{\det}(u)$ to find a group element in $\mathrm{U}(1)=\{c\in\mathbb{C}\,|\,|c|=1\}$. Doing this for $M=T_\phi=T_0u_\phi$ yields
\begin{align}
    \overline{\det}(u_\phi)=e^{2\ii \phi}\,,
\end{align}
so varying $\phi$ from $0$ to $2\pi$ will wrap around $\mathrm{U}(1)$ twice. This means that the circle parameterized by $T_\phi$ contained in $\mathrm{U}(2)$ has winding number $2$, but recall that this circle can be continuously contracted to the identity if we move $e^{K_+(\theta,\phi)}$ to $\theta=0$. This suggests that any loop with even winding number within a $\mathrm{U}(2)$ fiber becomes contractible if we allow deformations within the full bundle manifold $\mathrm{SO}(4,\mathbb{R})$. The fundamental group of $\mathrm{SO}(4,\mathbb{R})$ is therefore $\mathrm{Z}_2=\mathbb{Z}/2\mathbb{Z}$, as expected. This topological structure is visualized in figure~\ref{fig:case-study-fermionic}.

\textbf{Double cover.} Moving from the special orthogonal group $\mathrm{SO}(4,\mathbb{R})$ to its double cover $\mathrm{Spin}(4,\mathbb{R})$ amounts to gluing two copies of our reduced picture on top of each other. In this case, any loop becomes contractible, which is exactly what we expect for the spin group that is known to be simply connected. In our reduced picture, we find the structure of a so-called Hopf fibration~\cite{hopf1931abbildungen}, so our reduced manifold is nothing else than a $3$-sphere written as a $\mathrm{U}(1)$ fiber bundle over the $2$-sphere $S^2$. Note, however, that we ignored here the other $3$-sphere contained in $\mathrm{U}(2)$, so that it should not come as a surprise that globally, we have
\begin{align}
    \mathrm{SO}(4,\mathbb{R})&=\frac{\mathrm{SU}(2)\times \mathrm{SU}(2)}{\mathbb{Z}_2}\,,\\
    \mathrm{Spin}(4,\mathbb{R})&=\mathrm{SU}(2)\times \mathrm{SU}(2)\,.
\end{align}
We can construct this relation explicitly in analogy to the well-known spin-$1$ example, where $\mathrm{Spin}(3)=\mathrm{SU}(2)$ is the double cover of $\mathrm{SO}(3,\mathbb{R})$. For this, it is useful to introduce the Pauli matrix $4$-vector
\begin{align}
    \sigma_\mu=(\id,\ii\sigma_x,\ii\sigma_y,\ii\sigma_z)\,,
\end{align}
which allows us to map any real $4$-vector $x^\mu=(x_0,x_1,x_2,x_3)\in\mathbb{R}^4$ to the matrix
\begin{align}\label{eq:special}
   x\cdot \sigma= x^\mu\sigma_\mu=\begin{pmatrix}
    x_0+\ii x_3 & \ii x_1+x_2\\
    \ii x_2-x_2 & x_0-\ii x_3
    \end{pmatrix}\,.
\end{align}
We can compute the norm $\lVert x\rVert^2=\det(x^\mu\sigma_\mu)=\sum_ix_i^2$. Given a pair $(u,v)\in\mathrm{SU}(2)\times \mathrm{SU}(2)$, we can define the action
\begin{align}
    u (x\cdot \sigma)v^\dagger\,,
\end{align}
which can itself be again written as a linear combination $y\cdot \sigma$. This can be seen explicitly by checking that the action of $u$ and $v$ preserves the special structure of~\eqref{eq:special}. 
This implies that for every $(u,v)$, there exists a matrix $M_{u,v}$, such that
\begin{align}
    y\cdot\sigma=u (x\cdot \sigma)v^\dagger
\end{align}
with $y^\mu=(M_{u,v})^\mu{}_\nu x^\nu$. We can calculate
\begin{align}
    \det(y\cdot \sigma)=\det(u (x\cdot \sigma)v^\dagger)=\det(x\cdot \sigma)\,,
\end{align}
which implies that $M$ preserves the norm. One can further show that it also preserves the orientation, so $M\in\mathrm{SO}(4,\mathbb{R})$. Finally, one can show that the map is surjective and each element $M_{u,v}\in\mathrm{SO}(4,\mathbb{R})$ has two pre-images given by $(u,v)$ and $(-u,-v)$. With this construction, we recognize $\mathrm{SU}(2)\times\mathrm{SU}(2)$ as double cover of $\mathrm{SO}(4,\mathbb{R})$.

\textbf{Cocycle function.} Given two orthogonal transformations $M_1$ and $M_2$, such that $\Delta_{M_i^{-1}}=-M^{-1}_i JM_iJ$ parametrized by $(\phi_i,\theta_i)$ describing spheres, we find the $2$-by-$2$ matrix
\begin{align}
    \hspace{-2mm}\overline{\id-Z_{M_1}Z_{M_2^{-1}}}=\id(1-4e^{\ii (\phi_2-\phi_1)}\tanh{\tfrac{\theta_1}{2}}\tanh{\tfrac{\theta_2}{2}})\,.
\end{align}
which yields the cocycle function
\begin{align}
    \eta=2\arg(1-4e^{\ii (\phi_2-\phi_1)}\tanh{\tfrac{\theta_1}{2}}\tanh{\tfrac{\theta_2}{2}})\,.
\end{align}

\textbf{Complex phase calculation of $e^{\widehat{K}}$.} We can use result~\eqref{eq:exp-squared} directly or use the small size of the problem to evaluate $\braket{0|e^{-\ii\hat{H}}|0}$ explicitly, as follows. The most general quadratic Hamiltonian $\hat{H}=\frac{\ii}{2}h_{ab}\hat{\xi}^a\hat{\xi}^b$ with $\widehat{K}=-\ii \hat{H}$ can be characterized by
\begin{align}
    h_{ab}\equiv\begin{pmatrix}
        0 & x_1 & x_2 & x_3\\
        -x_1 & 0 & x_4 & x_5\\
        -x_2 & -x_4 & 0 & x_6\\
        -x_3 & -x_5 & -x_6 & 0
    \end{pmatrix}\,,
\end{align}
where we have six real parameters $x_i$. We can evaluate $\hat{H}=\frac{\ii}{2}h_{ab}\hat{\xi}^a\hat{\xi}^b$ explicitly, which turns out to be a block diagonal matrix over the even sector spanned by $(\ket{00},\ket{11})$ and the odd sector spanned by $(\ket{01},\ket{10})$. In these bases, we find
\begin{align}\label{eq:fermionic-H}
    \hat{H}=(\vec{n}\cdot \vec{\sigma})\oplus(\vec{r}\cdot \vec{\sigma})\,\,\text{with}\,\,\begin{array}{l}
    \vec{n}=(\tfrac{x_4-x_3}{2},\tfrac{x_1-x_6}{6},-\tfrac{x_2+x_5}{2})\\
    \vec{r}\hspace{.45mm}=(\tfrac{x_4+x_3}{2},\tfrac{x_1+x_6}{2},-\tfrac{x_2-x_5}{2})
    \end{array}\,,
\end{align}
where $\vec{\sigma}=(\sigma_x,\sigma_y,\sigma_z)$ refers to the Pauli matrices. This means the most general quadratic Hamiltonian for two fermions corresponds to direct sum of Pauli operators acting on the even and odd sectors separately. Moreover, we have $e^{\widehat{K}}=(e^{-\ii \vec{n}\cdot \vec{\sigma}})\oplus (e^{-\ii \vec{r}\cdot \vec{\sigma}})$, which shows explicitly the relation $\mathrm{Spin}(4,\mathbb{R})=\mathrm{SU}(2)\times\mathrm{SU}(2)$. To evaluate $\braket{J|e^{\widehat{K}}|J}$, we can use the well-known formula
\begin{align}
    e^{-\ii \vec{n}\cdot \vec{\sigma}}=\cos(\tfrac{\theta}{2})\id-\ii\sin(\tfrac{\theta}{2})\frac{\vec{n}\cdot\vec{\sigma}}{\theta}\,,
\end{align}
where $\theta=\sqrt{n_1^2+n_2^2+n_3^2}$. This and $\ket{J}=\ket{00}$ yields
\begin{align}\label{eq:<J|U|J>}
    \braket{J|e^{-\ii \hat{H}}|J}=\cos(\tfrac{\theta}{2})-\ii\sin(\tfrac{\theta}{2})\frac{n_3}{\theta}\,,
\end{align}
from which we can compute its modulus and complex phase (which is defined whenever $\braket{J|e^{-\ii \hat{H}}|J}\neq 0$). We show both, complex phase (color) and modulus (brightness) in figure~\ref{fig:fermionic-<J|U|J>}.

\begin{figure}[t]
    \centering
    \begin{tikzpicture}
    \draw (0,0) node[inner sep=0pt]{\includegraphics[width=\linewidth]{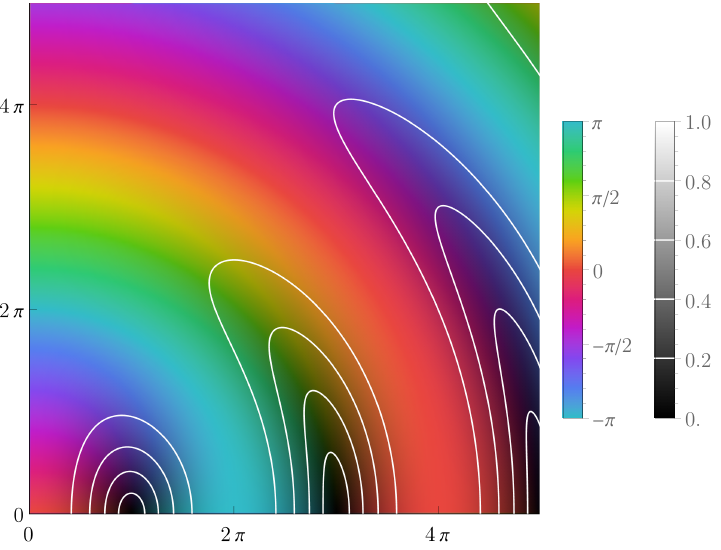}};
    \draw (-4.2,.7) node[font=\footnotesize]{$n_3$};
    \draw (-.3,-3.1) node[font=\footnotesize]{$n_1$};
    \draw (2.55,2.05) -- (3.25,2.6) node[font=\footnotesize,above]{$\arg(\braket{J|e^{\widehat{K}}|J})$};
    \draw (3.65,-1.9) -- (3.25,-2.3) node[font=\footnotesize,below]{$|\braket{J|e^{\widehat{K}}|J}|$};
    \end{tikzpicture}
    \ccaption{Fermionic $\braket{J|e^{\widehat{K}}|J}$ for $\widehat{K}=-\ii \hat{H}$ from~\eqref{eq:fermionic-H} with $\vec{n}=(n_1,0,n_3)$}{We show $\braket{J|e^{\widehat{K}}|J}$ from~\eqref{eq:<J|U|J>}, where we indicate its complex phase by color and its modulus by brightness. White height lines represent the respective modulus as indicated in the brightness legend.}
    \label{fig:fermionic-<J|U|J>}
\end{figure}

\section{\label{sec:Applications}Applications}
In this section we will discuss some common situations in quantum computation and many-body physics where quantities of the form~$\braket{J|e^{\widehat{K}}|J}$ arise naturally. We will show how the methods described in this paper enable techniques for the numerical simulation of these systems.

\subsection{Variational ansätze}
One setting where bosonic and fermionic Gaussian states are frequently applied is as ansatz states for many-body quantum systems. In these systems the dimension of the Hilbert space grows exponentially with the number of elementary degrees of freedom that constitute them, making it impossible to describe a generic quantum state in a numerically exact way even for moderately large systems. Gaussian states however represent a subset of relevant states that can be described efficiently in terms of a small number of parameters. 

Indeed, they can be parametrized by a matrix $J$ whose size scales polynomially with the system size. Furthermore, they are particularly well suited for describing ground states or perturbed states of systems with weak interactions and many frequently used methods rely on this type of Gaussian approximation (\eg the Hartree-Fock method for ground states of fermionic systems, or the Gross-Pitaevskii equations for describing the dynamics of Bose-Einstein condensates).  

Not all systems, however, are well-suited to be described by Gaussian states. Especially in systems where interactions between modes become important, the ground states or evolved states cannot be captured by states with only Gaussian correlations. For these systems it is therefore useful to introduce other ansätze that better approximate the states of interest. Some of these can be seen as extensions of the Gaussian ansatz and their usage relies on the ability to compute quantities such as~$\braket{J|e^{\widehat{K}}|J}$ or~\eqref{eq:generalized-wick}. Here we will briefly present a couple of them.

\subsubsection{Superpositions of Gaussian states}\label{sec:gaussian-superpositions}
The simplest way to go beyond Gaussian states is to consider superpositions of multiple Gaussian states. This class of states has been extensively studied as variational ansatz for ground states~\cite{bravyi_complexity_2017}, for the simulation of gate-based~\cite{Dias:2023arXiv230712912D,Dias:2024arXiv240319059D,reardonsmith2024improved} or analogue evolution and for the tomography of quantum states~\cite{mele2024efficient}.

The basic idea is to consider states of the form 
\begin{equation}
    \ket{\Psi}=\sum_{k=1}^m c_k \,\mathcal{U}(M_k,\psi_k)\ket{J} \,,
    \label{eq:gaussian-superposition}
\end{equation}
where the complex numbers $c_k$ and the double cover elements $(M_k,\psi_k)\in\widetilde{\mathcal{G}}$ represent the variational parameters of the state. If the total number of elements $m$ of the superposition scales polynomially in the system size, this is an efficient parametrisation. It is clear that here it is important to parametrise unambiguously the phase of the unitaries $\mathcal{U}(M_k,\psi_k)$, as these contribute to relative phases in the state which are physically relevant.

This is very closely related to the problem of parametrising Gaussian states in way that also defines their global phase. This has been thoroughly studied in references~\cite{Dias:2023arXiv230712912D,Dias:2024arXiv240319059D} where such a parametrisation is given, together with algorithms to evolve the states under generators of the Gaussian unitary group and to compute overlaps of different states in a phase-sensitive way. The overlap computation in particular is crucial even for simple tasks like evaluating the norm of a state of the form~\eqref{eq:gaussian-superposition}. In terms of the representation theory of Gaussian unitaries that we have introduced, these overlaps can be easily expressed as
\begin{align}
    &\braket{J|\,\mathcal{U}^\dag(M_k,\psi_k)\,\mathcal{U}(M_{k'},\psi_{k'})\,|J}= \nonumber\\
    &\hspace{30mm}=\braket{J|\,\mathcal{U}(M_k^{-1}M_{k'},\psi_{kk'})|J} \,,
\end{align}
where the right hand side can be directly evaluated with the techniques discussed in Sections~\ref{sec:squared-expectation} and~\ref{sec:exact-expectation}. Here $\psi_{kk^\prime}$ is defined as $\psi^*_k\psi_{k^\prime} e^{\ii\eta(M_k^{-1},M_{k^\prime})/2}$ according to the group multiplication rule~\eqref{eq:group-multiplication-double-cover} in the double cover $\widetilde{\mathcal{G}}$.

In conclusion, the results of~\cite{Dias:2023arXiv230712912D,Dias:2024arXiv240319059D} can be recovered as a direct application of the results that we have presented. Let us point out, however, that Gaussian superpositions could be used also in methods that go beyond simply evaluating evolutions and overlaps as in~\cite{Dias:2023arXiv230712912D,Dias:2024arXiv240319059D}. For instance, one may try approximating a system's ground state by optimising the energy function $E=\braket{\psi|\hat{H}|\psi}$ with respect to the variational parameters, for some Hamiltonian $\hat{H}$. In this case we have
\begin{align}
    E&=\sum_{kk^\prime} c_k^* c_{k^\prime} \braket{J|\,\mathcal{U}^\dag(M_k,\psi_k)\hat{H}\,\mathcal{U}(M_{k'},\psi_{k'})|J}\\
    &=\sum_{kk^\prime} c_k^* c_{k^\prime}\braket{J| \,\hat{\widetilde{H}}_k \, \mathcal{U}(M_k^{-1} M_{k^\prime},\psi_{kk^\prime}) |J}\,,
\end{align}
where $\hat{\widetilde{H}}_k\equiv \mathcal{U}^\dag(M_k,\psi_k)\hat{H}\,\mathcal{U}^\dag(M_k,\psi_k)$.
If $\hat{H}$ can be written as a polynomial of linear observables $\hat{\xi}^a$, then also $\hat{\widetilde{H}}_k$ can be written as such a polynomial just by appropriately rotating the coefficients of $\hat{H}$ with the matrix $M_k$.

We thus observe that to evaluate the energy function we need to be able to compute terms of the form
\begin{equation}
    \braket{J| \,\hat{\xi}^{a_1}\cdots\hat{\xi}^{a_d} \, \mathcal{U}(M,\psi) |J} \,, \label{eq:generalized-wick-applications}
\end{equation}
for arbitrary monomials $\hat{\xi}^{a_1}\cdots\hat{\xi}^{a_d}$ of linear operators and double cover elements $(M,\psi)$. We have named these objects \textit{generalized Wick expectation values} and we have described in detail in section~\ref{sec:generalized-wick} how to compute them explicitly.

Finally, let us note that more advanced variational methods~\cite{hackl_geometry_2020} might require to compute, on top of the energy function, also the objects
\begin{equation}
    \braket{V_\mu|\hat{H}|\psi}\,, \quad \braket{V_\mu|V_\nu}\,,
    \label{eq:other-variational-quantities}
\end{equation}
related respectively to the derivative of the energy function and to the geometric structure of the variational manifold. Here $\ket{V_\mu}=\partial_\mu\ket{\Psi}$ represent a tangential vectors of such variational manifold, where by $\partial_\mu$ we indicate the derivative with respect to the $\mu$-th variational parameter. In the case of Gaussian states $\ket{\Psi}$ one can see that the tangent vectors are linear combinations of $\hat{\xi}^{a}\hat{\xi}^{b}\ket{\Psi}$, so it is easy to convince oneself that also the quantities~\eqref{eq:other-variational-quantities} can be expressed in terms of generalized Wick expectations.

\subsubsection{Generalized Gaussian states}
Another ansatz that extends the structure of Gaussian states are generalized Gaussian states. They were introduced in~\cite{shi2018variational} and can be understood in terms of the paradigm of generalized group-theoretic coherent states~\cite{guaita_generalization_2021}.

They can be represented in the form
\begin{equation}
    \ket{\Psi}=\mathcal{U}(M)\mathcal{V}(W)\ket{J}\,,
\end{equation}
where $\ket{J}$ is a Gaussian state, $\mathcal{U}(M)$ is a Gaussian unitary\footnote{Here it is not important to fix the phase of this unitary, as this only contributes an unobservable global phase to the state.} and the special unitary $\mathcal{V}(W)$ is defined as
\begin{equation}
    \mathcal{V}(W)=\exp\left[\ii \sum_{ij} W_{ij} (\hat{a}^\dag_i\hat{a}_i \pm \frac{1}{2})(\hat{a}^\dag_j\hat{a}_j \pm \frac{1}{2}) \right]\,,
\end{equation}
where the $+$ applies to bosons and the $-$ applies to fermions. The group element $M$, the real symmetric matrix $W$ and the choice of the state $\ket{J}$ are all free variational parameters.

The unitary $\mathcal{V}(W)$ endows the state with interesting non-Gaussian correlations. At the same time, its specific form is designed in such a way that relevant variational quantities, such as the Hamiltonian expectation value, can all be computed through closed and numerically efficient algorithms that just exploit properties of Gaussian states and unitaries.  

Indeed, using the methods of reference~\cite{guaita_generalization_2021} one can rewrite the expectation values of any polynomial of linear observables $\hat{\xi}^a$ as a linear combination of generalized Wick expectation values~\eqref{eq:generalized-wick-applications}. Similarly, the quantities~\eqref{eq:other-variational-quantities} can also be written in the same way.
In conclusion, also for applications of Generalized Gaussian States it is important to be able to evaluate expressions of the form~\eqref{eq:generalized-wick-applications}, including information about their phase. This can be achieved using the results we derive in this paper.

\subsection{Numerical evaluation}
Although we have so far dedicated a detailed analysis to the issue of computing expectation values like $\braket{J|\mathcal{U}(M,\psi)|J}$ including the correct sign, let us now point out that in some practical applications it is not always necessary to compute this sign every time explicitly. In other words, it is often sufficient to use the result derived in section~\ref{sec:squared-expectation} and not the more cumbersome calculation described in section~\ref{sec:exact-expectation}. It is also only necessary to keep track of the quantity $M$ that parametrizes the unitary and not of the full double cover element.

Indeed, in some applications, such as most variational methods described above, the key task will be to perform some step-wise evolution of the parametrized state $\ket{\Psi}$. This could be, for instance, the implementation of a gradient descent algorithm that ideally converges to an approximate ground state for the problem. Or it could be an algorithm that approximates within the variational manifold the time evolution dynamics of a state under a given many-body Hamiltonian. 

In these cases, the discretized evolution is an approximation of a theoretically continuous evolution. For optimal results it will be necessary to keep the discretization time step sufficiently small. As we update the unitary $\mathcal{U}(M,\psi)$ from time step $t_i$ to time step $t_{i+1}=t_i+dt$ we expect 
\begin{equation}
    \braket{J|\mathcal{U}(M_{t_i},\psi_{t_i})|J}\longrightarrow \braket{J|\mathcal{U}(M_{t_{i+1}},\psi_{t_{i+1}})|J}
\end{equation}
to also evolve in an approximately continuous manner. We can therefore compute $\braket{J|\mathcal{U}(M_{t_{i+1}},\psi_{t_{i+1}})|J}$ up to a sign, using the results of section~\ref{sec:squared-expectation}, and then pick the sign that makes this result closest to $\braket{J|\mathcal{U}(M_{t_i},\psi_{t_i})|J}$.  A similar observation will apply also to overlaps $\braket{J|\mathcal{U}^\dagger(M_k,\psi_k)\mathcal{U}(M_{k'},\psi_{k'})|J}$ or even generalized Wick expectation values~\eqref{eq:generalized-wick}.

Using this method we are guaranteed to obtain the right result in many applications of practical interest, evaluating at each time step only the simple matrix expression~\eqref{eq:exp-squared}. The exact value of $\braket{J|\mathcal{U}(M,\psi)|J}$ needs to be computed fully only at the initial time step, for which the more involved algorithm described in section~\ref{sec:exact-expectation} may be used. For this it is important to correctly parametrize the initial unitary within the double cover $\widetilde{\mathcal{G}}$, but for subsequent steps it is sufficient to keep track only of the evolution of the matrix $M$.

To give a more concrete example of this, suppose we want to parametrize a state as a superposition of Gaussian states as in~\eqref{eq:gaussian-superposition}. As discussed above, a relevant quantity in this case are the overlaps
\begin{align}
    X_{kk'}=\braket{J|\mathcal{U}^\dagger(M_k,\psi_k)\mathcal{U}(M_{k'},\psi_{k'})|J} \,,
\end{align}
which we can initially evaluate as
\begin{align}
    X_{kk'}=\left(\det\frac{M_{kk'}-JM_{kk'}J}{2}\right)^{\pm1/4} \psi_k^*\psi_{k'}e^{\eta(M_k^ {-1},M_{k'})}\,, \label{eq:overlaps-explicit}
\end{align}
where $M_{kk'}=M_k^{-1}M_{k'}$ and the plus or minus signs in the exponent refer to fermions or bosons respectively.

Assume now that we want to evolve such a state by an infinitesimal step such that each $M_k$ is shifted in the direction $K_k$ in the Lie algebra, that is 
\begin{align}
    M_k\to M_k e^{\epsilon K_k} \,,
\end{align}
for a small $\epsilon$. We want to do this in such a way that the evolution of the overall state is continuous. To do this we can simply update the group elements $M_{kk'}\to e^{\epsilon K_k} M_{kk'}e^{-\epsilon K_{k'}}$ and evaluate the new overlaps $X_{kk'}$ according to~\eqref{eq:exp-squared} as
\begin{align}
    X_{kk'}=\begin{cases}
        \pm\sqrt{\overline{\det}\left(\frac{M_{kk'}-JM_{kk'}J}{2}\right)^{-1}} & \normalfont{\textbf{(bosons)}}\\[5mm]
        \pm\sqrt{\overline{\det}\left(\frac{M_{kk'}-JM_{kk'}J}{2}\right)} & \normalfont{\textbf{(fermions)}}
        \end{cases}
\end{align}
where we now choose the sign $\pm$ based on which value is closer the previous $X_{kk'}$. In particular we no longer need the more elaborate formula~\eqref{eq:overlaps-explicit}.

\section{Discussion}\label{sec:Discussion}
In this paper, we have shown how to explicitly construct the double cover groups of $\mathrm{Sp}(2N,\mathbb{R})$ and $\mathrm{SO}(2N,\mathbb{R})$. This can be used in particular to parametrize the elements of the unitary representation of these double cover groups given by bosonic and fermionic Gaussian unitaries. Having access to this parametrization makes it easy to compute correctly many relevant quantities, such as overlaps of Gaussian states or $\braket{J|e^{\widehat{K}}|J}$, where $\widehat{K}$ is a quadratic Hamiltonian, or even $\braket{J|\, \hat{\xi}^{a_1}\cdots\hat{\xi}^{a_d} \, e^{\widehat{K}} |J}$, where $\hat{\xi}$ are linear phase space operators. This in turn enables many applications, ranging from the development of variational ansätze to the simulation of quantum circuits or many-body dynamics.

\subsection*{Connection to previous results}
Many previously studied problems can be tackled with this approach. The most well-known of these is how to consistently parametrize Gaussian states, such that also their global phase is well-defined. A natural way to do this using our results is to parametrize the states by a Gaussian unitary $\mathcal{U}(M,\psi)$ acting on a reference Gaussian state $\ket{J_0}$, as discussed in section~\ref{sec:gaussian-superpositions}. Indeed, the states $\ket{\Psi(M,\psi)}=\mathcal{U}(M,\psi)\ket{J_0}$ are well-defined, including their global phase and can be used as a building blocks for many applications. Evolving these states by Gaussian evolutions is a simple application of the product rule~\eqref{eq:group-multiplication-double-cover} and computing their relative overlaps can be done using Results 3a and 3b discussed in section~\ref{sec:exact-expectation}.

A more conventional approach to this problem, studied in detail by Dias and König~\cite{Dias:2023arXiv230712912D,Dias:2024arXiv240319059D}, would be to observe that, if one neglects global phases, any Gaussian state $\ket{J}$ can be reached from $\ket{J_0}$ through a Gaussian transformation, identified just by a group element $M\in \mathcal{G}$. To fix the global phase, including the ill-defined phase of the Gaussian transformation, one can use $\ket{J_0}$ again and specify the value of the relative phase $\psi=\braket{J_0|J}/|\!\braket{J_0|J}\!|$. With this in place, the main difficulty is to compute overlaps $\braket{J_1|J_2}$ between two states $\ket{J_1}$ and $\ket{J_2}$ parametrized by $M_1, \psi_1$ and $M_2, \psi_2$. Typically this is solved by observing that the quantity $A(M_1,M_2)=\braket{J_0|J_1}\braket{J_1|J_2}\braket{J_2|J_0}$
cannot depend on the global phases of the three states $\ket{J_0}$, $\ket{J_1}$ and $\ket{J_2}$, as each state vector appears both as a bra and as a ket, and indeed only depends on the matrices $M_1$ and $M_2$. Then the relative overlap can be computed as
\begin{equation}
    \braket{J_1|J_2}=\frac{A(M_1,M_2)}{\braket{J_0|J_1} \braket{J_2|J_0} } = \frac{A(M_1,M_2) \, \psi_1^* \psi_2}{D(M_1) D(M_2)} \,, \label{eq:conventional-2-state-overlap}
\end{equation}
where we observed that $\left|\braket{J_0|J_1}\right|$ and $\left|\braket{J_2|J_0}\right|$ must also be expressible as a function $D$ of just $M_1$ and $M_2$, respectively. In conclusion, everything is derived in terms of the quantities $D(M)$ and $A(M_1,M_2)$ for which closed-form expressions are known. The latter have been derived for bosons in~\cite{PhysRevA.61.022306} and for fermions in~\cite{bravyi_complexity_2017}, in both cases using methods involving the phase space formalism. In our notation, the resulting expressions are for bosons 
\begin{align}
   A(M_1,M_2)&=\left[\det(J_0+M_2J_0M_2^{-1})\right. \nonumber\\
   &\hspace{15mm}\left. \times\det(\tilde{J}_0+M_1J_0M_1^{-1})\right]^{-1/2}\label{eq:traditionalA-boson}
\end{align}
where $\tilde{J}_0=J_0-(J_0-\ii)(J_0+M_2J_0M_2^{-1})^{-1}(J_0+\ii)$, and for fermions
\begin{align}
    A(M_1,M_2)&={\left(\frac{\ii}{4}\right)}^{\!N} \!\mathrm{Pf}\begin{pmatrix}
        \ii J_0 & \id & -\id \\
        -\id & \ii M_1J_0M_1^{-1} & \id \\
        \id &  -\id & \ii M_2J_0M_2^{-1}
    \end{pmatrix} \,. \label{eq:traditionalA-fermions}
\end{align}

Notice that this approach directly relates to our formalism. Once we parametrize a state as $\ket{\Psi(M,\psi)}=\mathcal{U}(M,\psi)\ket{J_0}$, it has a phase overlap with $\ket{J_0}$ given by 
\begin{align}
    \frac{\braket{J_0|\Psi(M,\psi)}}{|\braket{J_0|\Psi(M,\psi)}|}&=\frac{\braket{J_0|\mathcal{U}(M,\psi)|J_0}}{|\braket{J_0|\mathcal{U}(M,\psi)|J_0}|} \\
    &= \begin{cases}
    \psi^* & \normalfont{\textbf{(bosons)}}\\
    \psi & \normalfont{\textbf{(fermions)}}
    \end{cases}
\end{align}
so our parametrization exactly coincides with the one discussed above (up to a different complex conjugate convention for bosons). In fact, we can read off the functions $D(M)$ and $A(M_1,M_2)$ as introduced in~\eqref{eq:def-D} and in~\eqref{eq:A-as-eta}, which are related to $\det(C_M)$ and $e^{\ii \eta(M_1,M_2)/2}$. We conclude that our expression derived in~\eqref{eq:etaDefinitionLog} and in~\eqref{eq:A-as-eta} must coincide with~\eqref{eq:traditionalA-boson} and~\eqref{eq:traditionalA-fermions}, so our formalism naturally contains these previous approaches when our parametrization $\mathcal{U}(M,\psi)$ of the unitary representation of $\widetilde{\mathcal{G}}$ is applied to a reference state $\ket{J}$. Interestingly, however, we observe that our findings have very different analytical forms and have been derived using completely independent methods. In particular, they have been derived by referring exclusively to the group theoretic structure of Gaussian unitaries without any reference to the phase space formalism.

Finally, let us highlight that the main goal of our formalism is to conveniently parametrize the double cover of $\mathcal{G}$, so parametrizing the set of Gaussian state vectors $\ket{J}$ and their relative complex phases is thus a byproduct of our formalism.

\subsection*{Extensions and outlook}
Several possible extensions of our construction can be taken into consideration. The first one comes from observing that, for fermions, we have restricted our attention to the group $\mathrm{SO}(2N,\mathbb{R})$, that is the component of the orthogonal group connected to the identity. It is known, however, that actually the whole orthogonal group $\mathrm{O}(2N,\mathbb{R})$ admits a (projective) representation as unitary operators on the fermionic Hilbert space. To extend our construction to this case it is sufficient to add a representation of a single element $M_d$ belonging to the disconnected component of $\mathrm{O}(2N,\mathbb{R})$. Indeed, any other element of this disconnected component can be parametrized as $M=M_d M_c$, for some $M_c$ that lies in the connected component and can be treated with the already developed methods. Computing overlaps involving such Gaussian unitaries does not introduce further complications, as $\braket{J|\mathcal{U}(M)|J}=0$ whenever $M$ is in the disconnected component. Most importantly, when computing the overlap of two states $\mathcal{U}(M_{i})\ket{J_0}$ with $M_i$ in the disconnected component, we can use group multiplication to reduce this calculation to only involve $M_1^{-1}M_2$ which will be back in the connected component.

Another extension, in this case for bosons, is the inclusion of displacement operators $\mathcal{D}(z)=\exp(\ii z_a \hat{\xi}^a)$. We do not expect this extension to introduce particular issues from the point of view of the double cover and indeed it has already been successfully incorporated into the formalism of Dias \textit{et al.}~\cite{Dias:2024arXiv240319059D} for the purpose of parametrizing Gaussian states. From a group theory perspective, the displacement operators form a representation of the $(2N+1)$-dimensional Heisenberg group. This is a proper representation and does not have any issues related to coverings. The full Gaussian group is then built by extending the metaplectic group by this Heisenberg group through a semidirect product. 

\begin{acknowledgments}
We thank Eugenio Bianchi, Pavlo Bulanchuk, Ignacio Cirac, Dougal Davis, Robert Jonsson, John Rawnsley, Norbert Schuch, Robert Schuhmann, Tao Shi and Jan Philip Solovej for helpful discussions. TG was supported by the German Federal Ministry for Education and Research (BMBF) under the project FermiQP. LH acknowledges support by the Alexander von Humboldt Foundation, by the grant $\#$62312 from the John Templeton Foundation, as part of the \href{https://www.templeton.org/grant/the-quantuminformation-structure-ofspacetime-qiss-second-phase}{‘The Quantum Information Structure of Spacetime’ Project (QISS)}, by grant $\#$63132 from the John Templeton Foundation and an Australian Research Council Australian Discovery Early Career Researcher Award (DECRA) DE230100829 funded by the Australian Government. The opinions expressed in this publication are those of the authors and do not necessarily reflect the views of the respective funding organization.
\end{acknowledgments}


\appendix

\section{Normal form of symplectic Lie generators}\label{app:normal-form}
The derivation of Result 3a (Bosons), where we provide in~\eqref{eq:main-bosons} a formula for $\braket{J|e^{\widehat{K}}|J}$, requires a careful analysis of the symplectic generator $K$, which may have both real and imaginary eigenvalues and potentially even a non-diagonalizable nilpotent part. Normal forms of symplectic generators or, equivalently, of quadratic bosonic Hamiltonians have been studied in the literature~\cite{williamson1936algebraic,laub1974canonical,arnol2013mathematical} before, but we follow the conventions of the modern review~\cite{Kustura:2019PhRvA..99b2130K}, which we briefly summarize in the following.

\begin{proposition}[based on~\cite{Kustura:2019PhRvA..99b2130K}]\label{prop:K-normal-form}
Given a symplectic generator $K\in\mathfrak{sp}(2N,\mathbb{R})$, there exists a basis, such that $\Omega$ takes the standard form~\eqref{eq:Omega-G-standard-form} and $K$ is given by
\begin{align}\label{KN}
K \equiv \begin{pmatrix} O_I & O_R \\ O_L & -O_I^\intercal  \end{pmatrix}\,,
\end{align}
where $O_\chi \in \mathbb{R}^{N\times N} $ ($\chi=I,L,R$), $O_R= O_R^\intercal$, and $O_L= O_L^\intercal$. The matrices $O_\chi$ can each be expressed as a direct sum of blocks over different eigenvalue types
\begin{align}
   O_\chi \equiv  O_\chi^{(\mathfrak{\mathcal{R}})} \oplus O_\chi^{(\mathfrak{\mathcal{C}})} \oplus O_\chi^{(0)} \oplus O_\chi^{(\mathfrak{\mathcal{I}})}\,, 
\end{align}
where
\begin{align}
O_\chi^{(\mathfrak{\mathcal{R}})} &\equiv \bigoplus_{\mathcal{R}_i   }  \spare{\Moplus_{j=1}^{m_i} I_{\chi}^{(1)}(\lambda_i,D_{ij})},\\
O_\chi^{(\mathfrak{\mathcal{C}})} &\equiv \bigoplus_{\mathcal{C}_i }  \spare{\Moplus_{j=1}^{m_i} I_{\chi}^{(2)}(\lambda_i,D_{ij})},\\
O_\chi^{(0)} &\equiv  \bigoplus_{j=1}^{l_0} I_{\chi}^{(3)}(0,D_{0j}) \bigoplus_{j=1}^{n_0} I_{\chi}^{(4)}(0,D_{0j}),\\
O_\chi^{(\mathcal{I})} &\equiv \bigoplus_{\mathcal{I}_i  }  \spare{ \Moplus_{j=1}^{l_i} I_{\chi}^{(5)}(\lambda_i,D_{ij}) \Moplus_{j=1}^{n_i} I_{\chi}^{(6)}(\lambda_i,D_{ij})}
\end{align}
with eigenvalues $\lambda_i$ and associated Jordan block dimensions $D_{ij}$ for the Jordan block $j$. The matrices $I^{(\mathfrak{c})}_\chi(\lambda,D)$ can be read from table~\ref{TAB:NormalFormList} where $\mathfrak{c}=1,\dots,6$ labels six distinct cases.
\end{proposition}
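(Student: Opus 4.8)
The plan is to exploit the defining symmetry of a symplectic generator, namely that $K$ is skew-adjoint with respect to the symplectic form $\omega=\Omega^{-1}$, i.e.\ $\omega(Kv,w)=-\omega(v,Kw)$, which is just a restatement of the condition $K\Omega=-\Omega K^{\intercal}$ from~\eqref{eq:algebra-g}. Two consequences drive the whole argument. First, if $\lambda$ is an eigenvalue of $K$ then so is $-\lambda$, and the generalized eigenspaces $V_\lambda$ and $V_{-\lambda}$ are paired nondegenerately by $\omega$, while $V_\lambda$ and $V_\mu$ are $\omega$-orthogonal whenever $\mu\neq-\lambda$. Second, since $K$ is real, eigenvalues also come in complex-conjugate pairs. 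Together these organize the spectrum into orbits $\{\pm\lambda,\pm\bar\lambda\}$ and correspondingly decompose the (complexified) phase space into $\omega$-orthogonal, $K$-invariant symplectic subspaces, one per orbit. This reduces the claim to a block-by-block analysis.

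First I would establish the orthogonality/pairing lemma: from skew-adjointness, $\omega(v,w)=0$ for $v\in V_\lambda$ and $w\in V_\mu$ unless $\mu=-\lambda$, by the standard argument that the generalized kernels of the commuting operators $(K-\lambda)$ and $(K+\mu)$ are $\omega$-orthogonal when $\lambda+\mu\neq 0$. This shows at once that each orbit-subspace is symplectic (the restriction of $\omega$ is nondegenerate) and that distinct orbits are symplectically orthogonal, so each orbit may be treated in isolation and assembled by direct sum into the form~\eqref{KN}.

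Next, on a single orbit I would construct a \emph{symplectic Jordan basis}. For a pair $V_\lambda\oplus V_{-\lambda}$ with $\lambda\neq 0$ one picks a Jordan basis for $K|_{V_\lambda}$ and uses the nondegenerate $\omega$-pairing to produce a dual Jordan basis on $V_{-\lambda}$; the real and complex-quadruple cases (cases $\mathfrak{c}=1,2$) then follow by descending to a real basis. The delicate situations are the zero eigenvalue and the purely imaginary eigenvalues, where $V_\lambda$ is self-paired (respectively conjugate-self-paired) and $\omega$ restricts to a nondegenerate (Krein) form on the eigenspace itself. Here the Jordan chains split into two sub-types according to an additional sign invariant, the signature of this induced form on each chain, which is precisely what distinguishes cases $\mathfrak{c}=3$ vs.\ $4$ (zero eigenvalue) and $\mathfrak{c}=5$ vs.\ $6$ (imaginary eigenvalue). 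Normalizing each chain yields the explicit blocks $I^{(\mathfrak{c})}_\chi(\lambda,D)$ collected in table~\ref{TAB:NormalFormList}.

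I expect the main obstacle to be the careful construction and normalization of the symplectic basis in the degenerate (non-diagonalizable) sectors, where Jordan chains of different lengths interact with the symplectic pairing; in particular, proving that the sign invariant is well-defined and exhausts the possibilities for zero and imaginary eigenvalues is the technical heart. Since these normal forms are classical, going back to Williamson~\cite{williamson1936algebraic} and worked out in the conventions of~\cite{Kustura:2019PhRvA..99b2130K}, I would lean on the existing chain-level derivations rather than redo them, restricting the proof to establishing the spectral decomposition above and citing~\cite{Kustura:2019PhRvA..99b2130K} for the explicit block list.
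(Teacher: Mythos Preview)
Your sketch is correct and in fact more detailed than what the paper does: the paper's proof of this proposition consists of the single sentence ``A constructive proof is given in~\cite{Kustura:2019PhRvA..99b2130K}.'' You have outlined the standard Williamson-type argument (spectral orbits $\{\pm\lambda,\pm\bar\lambda\}$, $\omega$-orthogonality of distinct orbits, symplectic Jordan chains, and the Krein-signature dichotomy for the self-paired zero and imaginary cases) and then, like the paper, defer the explicit block list to~\cite{Kustura:2019PhRvA..99b2130K}. So your proposal and the paper agree in substance, with your version supplying the conceptual scaffolding that the paper leaves entirely to the citation.
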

\begin{proof}
A constructive proof is given in~\cite{Kustura:2019PhRvA..99b2130K}.
\end{proof}
\begin{table*}
  \caption{Blocks for the real Jordan normal form of the symplectic generator $K$. We denote $\lambda = \mu + \ii \nu$ (with  $\mu,\nu \in \mathbb{R}$) and $\sigma=\alpha_{\lambda}(e, \tilde{e})$, where $e$ is the generating generalized eigenvector (gGEV) associated to the block. Note that in $\mathfrak{c}=6$, $\ii\sigma \in \mathbb{R}$.}\label{TAB:NormalFormList}
	\begin{tabular}{|C{0.03\textwidth} | C{0.27\textwidth} | C{0.27\textwidth} |C{0.27\textwidth} | C{0.09\textwidth} |}
		\hline
		$\mathfrak{c}$ & $I_I^{(\mathfrak{c})} (\lambda, D)$ & $I_R^{(\mathfrak{c})}(\lambda,D)$ & $I_L^{(\mathfrak{c})}(\lambda,D)$ & dimension \\
		\hline
		1 
		&
            $\begin{pmatrix}
		\mu &  &  &   &  \\
		1 & \mu &  &  &  \\ 
		& \ddots & \ddots &  &  \\ 
		&  & & 1 & \mu \\
		\end{pmatrix}$
		&
		$\mathbb{0}$
		&
		$\mathbb{0}$
		&
		$D$
		\\
		\hline
		2
		&
		$\begin{pmatrix}
		\mu & \nu &  &  &  &  &  &  \\
		-\nu & \mu &  &  &  &  &  &  \\
		1 & 0 & \mu & \nu &  &  &  &  \\
		0 & 1 & -\nu & \mu &  &  &  &  \\
		&  &  & \ddots & \ddots &   &  &  \\
		&  &  &  & 1 & 0 & \mu & \nu \\
		&  &  &  & 0 & 1 & -\nu & \mu \\
		\end{pmatrix}$
		&
		$\mathbb{0}$
		&
		$\mathbb{0}$
		&
		$2D$
		\\
		\hline
		3
		&
		$\sigma
		\begin{pmatrix}
		0 &  &  &   \\ 
		1 & 0 &  &  \\ 
		& \ddots & \ddots & \\ 
		&   & 1 & 0\\
		\end{pmatrix} $
		&
		$\mathbb{0}$
		&
		$\sigma
		\begin{pmatrix}
		0 &  &  &   \\ 
		& 0 &  &  \\ 
		&  & \ddots & \\ 
		&   &  & (-1)^{D/2}\\
		\end{pmatrix} $
		&
		$D/2$ (integer)
		\\
		\hline
		4
		&
		$\begin{pmatrix}
		0 &  &  &   \\ 
		1 & 0 &  &  \\ 
		& \ddots & \ddots & \\ 
		&   & 1 & 0\\
		\end{pmatrix} $
		&
		$\mathbb{0}$
		&
		$\mathbb{0}$
		&
		$D$ (odd)
		\\
		\hline
		5
		&
		$\mathbb{0}$
		&
		$\sigma
		\begin{pmatrix}
		&  & &  &  & \nu\\
		&  &  & & \nu & 1\\
		&  &  & & -1 & \\
		&  &\iddots  & \iddots &  & \\
		& \nu & -1 &  & & \\
		\nu & 1 &  & &  & \\
		\end{pmatrix}$
		& 
		$\sigma
		\begin{pmatrix}
		&  & &  &  -1 & -\nu\\
		&  &  & 1& -\nu & \\
		&  &\iddots  & \iddots &  & \\
		& 1 &  & & & \\
		-1 & -\nu &  &  & & \\
		-\nu & &  & &  & \\
		\end{pmatrix}$
		&
		$D$ (even)
		\\
		\hline
		6
		&
		$\begin{pmatrix}
		0 &  &  &   \\ 
		1 & 0 &  &  \\
		& \ddots & \ddots & \\ 
		&   & 1 & 0\\
		\end{pmatrix} $
		&
		$ \ii \sigma
		\begin{pmatrix}
		&  &  &  & \nu\\
		&  &  & -\nu & \\
		&  & \iddots & & \\
		& -\nu &  &  & \\
		\nu &  &  &  & \\
		\end{pmatrix} $
		&
		$\ii \sigma
		\begin{pmatrix}
		&  &  &  & -\nu\\
		&  &  & \nu & \\
		&  & \iddots &  & \\
		& \nu &  &  & \\
		-\nu &  &  &  & \\
		\end{pmatrix}$
		&
		$D$ (odd)
		\\
		\hline
	\end{tabular}
\end{table*}

The following lemma will be needed for the subsequent proposition and establishes that we can always perform a change of basis to rescale a certain type of block.
\begin{lemma}\label{lem:rescaling}
Given a symplectic generator $K$ in its normal form of proposition~\ref{prop:K-normal-form}, we can always perform a further symplectic basis transformation, \ie thereby retaining the standard form~\eqref{eq:Omega-G-standard-form} of $\Omega$, such that blocks of type $\mathfrak{c}=3$ are rescaled by $\epsilon>0$.
\end{lemma}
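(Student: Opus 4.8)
The plan is to exhibit an explicit symplectic change of basis that acts nontrivially only on the subspace carried by a single type-$3$ block and leaves everything else fixed. Since the normal form of Proposition~\ref{prop:K-normal-form} realises $\mathbb{R}^{2N}$ as a $K$-invariant symplectic direct sum of the individual blocks, a transformation supported on one block's symplectic subspace $W$ (and acting as the identity on its symplectic complement $W'$, which is again $K$-invariant) is automatically symplectic on the whole space and conjugates only that block. It therefore suffices to treat one type-$3$ block in isolation and then repeat the construction for each such block. Throughout, a basis change $S$ preserves the standard form~\eqref{eq:Omega-G-standard-form} of $\Omega$ precisely when $S\in\mathrm{Sp}(2N,\mathbb{R})$, and it acts on the generator by conjugation $K\mapsto SKS^{-1}$.

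For a single type-$3$ block, which occupies an even-dimensional symplectic subspace of phase-space dimension $D$ (so that $D/2$ is an integer), I would use a transformation of the block form $S=\mathrm{diag}(A,(A^\intercal)^{-1})$, which is symplectic for \emph{any} invertible $A$ and, for diagonal $A$, reduces to $S=\mathrm{diag}(A,A^{-1})$. Reading the block off table~\ref{TAB:NormalFormList} together with~\eqref{KN}, the relevant data are $O_I=\sigma N$ with $N$ the lower shift matrix of size $D/2$, and $O_L=\sigma(-1)^{D/2}e_{D/2}e_{D/2}^\intercal$, while $O_R=0$. Conjugation sends $O_I\mapsto AO_IA^{-1}$ and $O_L\mapsto A^{-1}O_LA^{-1}$. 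The key point is to choose $A=\mathrm{diag}(a_1,\dots,a_{D/2})$ \emph{graded} so that both parts scale by the \emph{same} factor: taking $a_i=\epsilon^{\,i-1-(D-1)/2}$ produces $a_i/a_{i-1}=\epsilon$ for each subdiagonal entry of $O_I$ and $a_{D/2}^{-2}=\epsilon$ for the surviving corner entry of $O_L$. A direct check then yields $SKS^{-1}$ equal to the same type-$3$ block with $\sigma$ replaced by $\epsilon\,\sigma$, i.e. the block is uniformly rescaled by the prescribed factor $\epsilon>0$. Since $D$ is even, the exponents are half-integers, so $\epsilon>0$ renders every $a_i$ real and positive and $S$ is a genuine real symplectic transformation.

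The step I expect to be the main obstacle is exactly this simultaneous matching of the $O_I$ and $O_L$ scalings. A uniform scalar $A=c\,\mathrm{id}$ leaves $O_I$ untouched while rescaling $O_L$ by $c^{-2}$, whereas an unshifted graded $A$ rescaling the shift in $O_I$ by $r$ rescales the corner of $O_L$ by $r^{-(D-2)}$; only the particular grading above reconciles these into a single common factor. I would also stress that the available freedom is limited to the \emph{magnitude}: because $\epsilon>0$ enters through the $a_i$ as real powers and the effective rescaling of $\sigma$ is by the positive number $\epsilon$, the sign of $\sigma$ is preserved and cannot be flipped by this construction. This matches the known fact that, for a nilpotent symplectic generator, the sign of $\sigma$ is a genuine symplectic invariant while its magnitude is a free scaling parameter, so the lemma is precisely the statement that this magnitude can be set to any desired value $\epsilon>0$.
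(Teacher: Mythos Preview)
Your proof is correct and essentially identical to the paper's. Both use the same graded diagonal symplectic transformation $\mathrm{diag}(A,A^{-1})$ acting on a single type-$3$ block, with the geometric sequence of diagonal entries chosen so that the subdiagonal of $O_I$ and the corner entry of $O_L$ pick up the common factor $\epsilon$; your $S$ with $a_i=\epsilon^{\,i-1-(D-1)/2}$ is exactly the paper's $X^{-1}$ with $X=\mathrm{diag}(\epsilon^{(D-1)/2},\dots,\epsilon^{1/2},\epsilon^{-(D-1)/2},\dots,\epsilon^{-1/2})$, and your $SKS^{-1}$ matches their $X^{-1}K_iX=\epsilon K_i$.
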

\begin{proof}
We consider an individual block $K_i$ of type of $\mathfrak{c=3}$ in table~\ref{TAB:NormalFormList} of size $D$-by-$D$. Applying the symplectic transformation
\begin{align}
    X=\mathrm{diag}(\epsilon^{\frac{D-1}{2}},\dots,\epsilon^{\frac{1}{2}},\epsilon^{-\frac{D-1}{2}},\dots,\epsilon^{-\frac{1}{2}})
\end{align}
will preserve the standard form~\eqref{eq:Omega-G-standard-form} of $\Omega_i$ associated to this block, but rescales $K_i$ according to
\begin{align}
    K\to X^{-1}K_iX=\epsilon K_i\,,
\end{align}
which means that we can bring the entries of this block arbitrarily close to zero. We then choose a complex structure $J$, which takes the standard form~\eqref{eq:J-standard-form} in this new rescaled basis. This implies that for an $\epsilon$ chosen sufficiently small, all eigenvalues of $C_{e^{tK_i}}=\frac{1}{2}(e^{tK_i}-Je^{tK_i}J)$ will have a positive real part for all $t\in[0,1]$. This is due to continuity, as in the limit $\epsilon\to 0$, we have $e^{tK_i}\to\id$ and $C_{e^{tK_i}}\to\id$, where all eigenvalues approach $1$.
\end{proof}

According to the Jordan–Chevalley decomposition we can decompose any Lie algebra generator $K\in\mathfrak{sp}(2N,\mathbb{R})$ uniquely into three parts
\begin{align}
    K=K_I+K_R+K_N\,,
\end{align}
where all three parts commute with each other and $K_I$ has purely imaginary eigenvalues, $K_R$ has purely real eigenvalues and $K_N$ is non-diagonalizable and nilpotent. If we have brought $K$ into the normal form according to proposition~\ref{prop:K-normal-form}, $K_I$ corresponds to the matrix parts proportional to $\nu$ (different imaginary eigenvalues), $K_R$ those proportional to $\mu$ (different real eigenvalues) and $K_N$ the remaining pieces.

\begin{proposition}
\label{prop:eigenvalues-C}
Given a symplectic generator $K=K_I+K'\in\mathfrak{sp}(2N,\mathbb{R})$, such that $K_I$ is diagonalizable with purely imaginary eigenvalues in the Jordan–Chevalley decomposition, we consider a basis, such that $K$ takes the block structure from proposition~\ref{prop:K-normal-form}, where we use the rescaling from lemma~\ref{lem:rescaling} for blocks of types $\mathfrak{c}=3$. For the reference complex structure $J$, which takes the standard form~\eqref{eq:J-standard-form} in this basis, the function
\begin{align}
    \arg\overline{\det}\sqrt{C_{e^{tK'}}}:=\tfrac{1}{2}\sum_i\arg'(\lambda_i)\label{eq:cont-def}
\end{align}
is continuous in $t\in[0,1]$, where $\arg'(z)=0$ for $z\in(-\infty,0]$ and $(\lambda_1,\dots,\lambda_{2N})$ are the eigenvalues of $C_{e^{t K'}}$.
\end{proposition}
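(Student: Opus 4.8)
The plan is to read the definition \eqref{eq:cont-def} as tracking the phase of the genuinely complex number $\overline{\det}\sqrt{C_{e^{tK'}}}=\prod_k\sqrt{\mu_k(t)}$, where the $\mu_k(t)$ are the eigenvalues of the reduction $\overline{C_{e^{tK'}}}$ and the square root uses the principal branch with cut along $(-\infty,0]$. Such a product can only fail to be continuous in $t$ when one of the $\mu_k(t)$ either vanishes or crosses the branch cut. The first possibility is excluded outright by Lemma~\ref{lem:invertible-CM}: for bosons $C_M$ is always invertible, so $\mu_k(t)\neq0$ for all $t\in[0,1]$. Hence the whole question reduces to controlling how the eigenvalue trajectories of $C_{e^{tK'}}$ meet the negative real axis, and the convention $\arg'(z)=0$ on $(-\infty,0]$ is precisely what must be shown to glue these crossings continuously.

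The next step is to use the normal form of Proposition~\ref{prop:K-normal-form} to put $K$, and therefore $K'=K-K_I$, into the block structure of Table~\ref{TAB:NormalFormList}, choosing $\tilde J$ in the standard form~\eqref{eq:J-standard-form} in that basis. Each block of the table occupies a symplectic subspace on which $\tilde J$ restricts again to standard form, so $e^{tK'}$, $\tilde J$ and hence $C_{e^{tK'}}=\tfrac12(e^{tK'}-\tilde J e^{tK'}\tilde J)$ all factor over blocks; consequently $\overline{\det}\sqrt{C_{e^{tK'}}}$ is multiplicative and $\arg$ is additive over blocks, reducing the claim to one block at a time. I would then go type by type. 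For the diagonalizable real blocks (type~$\mathfrak c=1$, and the real parts of type~$2$, once the imaginary part has been absorbed into $K_I$) a direct computation shows $C_{e^{tK'}}$ is a $\cosh$-type operator with spectrum in the open right half-plane, so no crossing occurs. For the nilpotent blocks (types $4,5,6$), on the \emph{compact} interval $[0,1]$ the matrix $e^{tK'}$ stays close enough to $\id$ that $C_{e^{tK'}}$ remains near $\id$, again keeping the spectrum in the right half-plane. The delicate type-$3$ blocks, where $O_L$ carries the sign $\sigma(-1)^{D/2}$, are exactly the ones tamed by Lemma~\ref{lem:rescaling}, whose proof already establishes that after rescaling by a sufficiently small $\epsilon>0$ all eigenvalues of $C_{e^{tK_i}}$ have positive real part for every $t\in[0,1]$.

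Assembling these observations, all eigenvalues of $C_{e^{tK'}}$ stay off the non-positive real axis for $t\in[0,1]$, so each $\arg'(\lambda_i(t))$ is continuous and the sum in \eqref{eq:cont-def} is continuous. For the remaining borderline situations I would fall back on the reality of $C_{e^{tK'}}$ together with Lemma~\ref{lem:eigenvaluesMbar}: complex eigenvalues of a real matrix commuting with $J$ occur in conjugate pairs, so they can only touch the negative real axis by \emph{merging} there, and the prescription $\arg'(z)=0$ on $(-\infty,0]$ makes the two half-contributions $\pm\pi$ cancel precisely at the merge point. This is the mechanism behind the ``even number of crossings'' statement used in the proof of Result~3a. \textbf{The main obstacle} is the uniform-in-$t$ control of the nilpotent and type-$3$ blocks: one must verify that the rescaling of Lemma~\ref{lem:rescaling} genuinely keeps the spectrum in the right half-plane over the whole interval, and that any residual meeting of the negative real axis is forced by the block structure to be a paired (even-multiplicity) event. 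This is the step that essentially uses the fine form of the blocks in Table~\ref{TAB:NormalFormList}, rather than a soft topological argument.
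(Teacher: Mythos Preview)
Your block-by-block strategy and the treatment of type $\mathfrak c=3$ via Lemma~\ref{lem:rescaling} match the paper. However, two steps do not go through.

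First, the claim that for the nilpotent types $\mathfrak c=4,5,6$ the matrix $e^{tK'}$ ``stays close enough to $\id$'' on $[0,1]$ is unjustified: no rescaling is applied to these blocks, the off-diagonal entries of $K'$ are fixed order-one numbers from Table~\ref{TAB:NormalFormList}, and $e^{K'}-\id$ has entries of order one. Compactness of $[0,1]$ gives boundedness, not smallness, so you cannot conclude that the spectrum of $C_{e^{tK'}}$ stays in the right half-plane. Second, your fallback via Lemma~\ref{lem:eigenvaluesMbar} is misapplied: that lemma pairs the eigenvalues of the real $2N\times 2N$ matrix $C_{e^{tK'}}$ as $\{\mu_k,\mu_k^*\}$, where the $\mu_k$ are the eigenvalues of the complex $N\times N$ matrix $\overline{C_{e^{tK'}}}$. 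It says nothing about conjugate pairing \emph{among the $\mu_k$ themselves}, which is what you would need, and in general $\overline{C_{e^{tK'}}}$ is not real.

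The paper's proof uses a different grouping that you are missing. Types $\mathfrak c=1,2,4,6$ are treated together because in all four cases $K'$ is block-diagonal (lying entirely in $O_I$), which forces $\overline{C_{e^{tK'}}}=\tfrac12(e^{tA}+e^{-tA^\intercal})$ to be a \emph{real} $N\times N$ matrix; only then does your conjugate-pair cancellation actually apply. Type $\mathfrak c=5$ is genuinely different: there $K'$ is not block-diagonal and $\overline{C_{e^{tK'}}}$ is complex, but a direct inspection shows it is built from two interlaced copies of the same $n\times n$ matrix $c$ (with $n=D/2$), so every eigenvalue of $\overline{C_{e^{tK'}}}$ has even multiplicity and $\overline{\det}\,C_{e^{tK'}}=\det^2(c)$. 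This even-multiplicity structure is what makes branch-cut crossings paired for type~$5$, and it is the key structural ingredient absent from your proposal.
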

\begin{proof}
We will distinguish several cases. It turns out that the cases $\mathfrak{c}=1,2,4,6$ can be proven using the same simple argument, so we will do that first.
\begin{description}
\item[$\mathfrak{c}=1,2,4,6$] A close inspection of table~\ref{TAB:NormalFormList} shows that the respective matrices $K'=\left(\begin{smallmatrix}A&0\\0&-A^\intercal\end{smallmatrix}\right)$ are all block-diagonal. Note that for $\mathfrak{c}=6$, $K_I=K-K'$ contains off-diagonal blocks, but not $K'$. The block structure implies that the matrix exponential $e^{tK}\equiv\left(\begin{smallmatrix}e^{tA}&0\\0&e^{-tA^\intercal}\end{smallmatrix}\right)$ and thus $C_{e^{tK'}}\equiv\frac{1}{2}\left(\begin{smallmatrix}e^{tA}+e^{-tA^\intercal}&0\\0&e^{tA}+e^{-tA^\intercal}\end{smallmatrix}\right)$ are block-diagonal as well. We thus have $\overline{C}_{e^{tK'}}\equiv\frac{1}{2}(e^{tA}+e^{-tA^\intercal})$. This is a real $N$-by-$N$ matrix, whose determinant $\overline{\det}(C_{e^{K'}})=\det\overline{C}_{e^{K}}$ must be also real. As the circle function never vanishes for bosons and it is equal to $1$ for $t=0$, continuity implies that $\overline{\det}(C_{e^{tK'}})>0$ for all $A$.
\item[$\mathfrak{c}=3$] For this case, we can assume that we already rescaled the respective blocks according to lemma~\ref{lem:rescaling} to ensure that all eigenvalues of $C_{e^{tK'}}$ have positive real part. Of course, it would even suffice if all eigenvalues had some fixed minimal distance from the negative real axis, but we chose positive real part as a simple condition. This implies that we can take the square root of all eigenvalues of $\overline{C}_{e^{tK'}}$ without any discontinuities, as all eigenvalues will be far away from the branch cut along the negative real axis. Consequently, the $\overline{\det}\sqrt{C}_{e^{tK'}}$ as defined in~\eqref{eq:cont-def} will be continuous.
\item[$\mathfrak{c}=5$] Investigating the structure of the matrix $\overline{C}_{e^{tK'}}=\frac{1}{2}(e^{tK'}-Je^{tK'}J)$ for this case yields
\begin{align}
    \overline{C}_{e^{tK'}}\equiv\left(\begin{smallmatrix}
    c_{11} & 0 & c_{12} & \dots & 0 & c_{1n} & 0\\
    0 & c_{nn} & 0 & \dots & c_{2n} & 0 & c_{1n}\\
    c_{12} & 0 & c_{22} &\dots & 0 & c_{2n} & 0\\
    \vdots & \vdots & \vdots & \ddots & \vdots & \vdots & \vdots\\[2mm]
    0 & c_{2n} & 0 & \dots & c_{22} & 0 & c_{12}\\
    c_{1n} & 0 & c_{2n} & \dots & 0 & c_{nn} & 0\\
    0 & c_{1n} & 0 &\dots & c_{12}  & 0 & c_{11}\\
    \end{smallmatrix}\right)\,,
\end{align}
    where we recognize that $\overline{C}_{e^{tK'}}$ itself consists of two times the same complex symmetric $n$-by-$n$ matrix $c=(c_{ij})$ with $n=D/2$ (recall $D$ even for $\mathfrak{c}=5$), interlaced and mirrored with respect to the anti-diagonal. This implies $\overline{\det}\,{C_{e^{tK'}}}=\det^2(c)$ its square root is thus continuous, as each eigenvalue appears with even multiplicity, so there if at all there is an even number of complex eigenvalue passing over the branch cut along the negative real axis. By using $\arg'$ rather than $\arg$ (with $\arg(-1)=\pi$), it is ensured that if such an eigenvalue pair approaches the negative real axis, we do not get a discontinuity.
\end{description}
Together, this analysis of the individual cases implies that $\overline{\det}\sqrt{C_{e^{tK'}}}$ is continuous function for $t\in[0,1]$.
\end{proof}

Let us emphasize that this analysis is really only required if the respective symplectic generator $K$ contains non-trivial blocks of type $\mathfrak{c}=3$ or $\mathfrak{c}=5$, as otherwise there will be no contribution due to $\arg\overline{\det}\sqrt{C_{e^{tK'}}}=0$.
\vfill

\bibliography{references.bib}

\end{document}